\newtheorem{thm}{Lemma}
\newcommand*\rot{\rotatebox{90}}
\begin{document}

\title{PROSE: Perceptual Risk Optimization for Speech Enhancement}
\author{ Jishnu Sadasivan, Chandra Sekhar Seelamantula,~\IEEEmembership{\it Senior Member,~IEEE}, and Nagarjuna Reddy Muraka
\thanks{J. Sadasivan is with the Department of Electrical Communication Engineering, Indian Institute of Science (IISc.), Bangalore, India; E-mail: sadasivan@iisc.ac.in. N. R. Muraka is currently with the Indian Railways; Email: nagarjunareddy100@gmail.com. C. S. Seelamantula is with the Department of Electrical Engineering, IISc., Bangalore-560012; Email: chandra.sekhar@ieee.org. Phone: +91 80 2293 2695. Fax: +91 80 2360 0444. 
The figures in this manuscript are in color in the electronic version.}
\markboth{IEEE Transactions on Audio, Speech, and Language Processing}}
\maketitle
 \begin{abstract}
The goal in speech enhancement is to obtain an estimate of clean speech starting from the noisy signal by minimizing a chosen distortion measure, which results in an estimate that depends on the unknown clean signal or its statistics. Since access to such prior knowledge is limited or not possible in practice, one has to estimate the clean signal statistics.  In this paper, we develop a new risk minimization framework for speech enhancement, in which, one optimizes an unbiased estimate of the distortion/risk instead of the actual risk. The estimated risk is expressed solely as a function of the noisy observations. We consider several perceptually relevant distortion measures and develop corresponding unbiased estimates under realistic assumptions on the noise distribution and a priori signal-to-noise ratio (SNR). Minimizing the risk estimates gives rise to the corresponding denoisers, which are nonlinear functions of the a posteriori SNR. Perceptual evaluation of speech quality (PESQ), average segmental SNR (SSNR) computations, and listening tests show that the proposed risk optimization approach employing Itakura-Saito and weighted hyperbolic cosine distortions gives better performance than the other distortion measures. For SNRs greater than $5$ dB, the proposed approach gives superior denoising performance over the benchmark techniques based on the Wiener filter, log-MMSE minimization, and Bayesian nonnegative matrix factorization.\\
\emph{Index Terms} --- Speech enhancement, perceptual distortion measure, unbiased risk estimation, Stein's lemma, objective and subjective assessment.
\end{abstract}
\section{Introduction}
\IEEEPARstart{T}{he} goal in speech enhancement is to suppress noise and enhance signal intelligibility and quality. Over the past few decades, several techniques have been developed for noise suppression. The challenges are  nonstationarity of the speech signal, distribution of noise, type of noise distortion, noise being signal-dependent or independent, etc. The problem continues to be of significant interest to the speech community particularly considering the enormous increase in the number of smartphone users. An early review of various noise reduction techniques was given by Lim and Oppenheim~\cite{Lim1}. Loizou's book on speech enhancement~\cite{PLoizou} is a recent and comprehensive reference on the topic. We shall briefly review related literature before proceeding with the development of the new risk optimization framework for speech enhancement.
\subsection{Related Literature}
\indent Speech enhancement techniques can be classified as follows.
\subsubsection{Spectral subtraction algorithms}
\indent Boll~\cite{Boll} and Weiss et al.~\cite{Weiss} proposed to subtract an estimate of the noise power spectrum from the noisy signal spectrum, in order to estimate the clean signal spectrum. The assumption is that the noise is additive and stationary. The noisy signal phase is used in reconstructing the time-domain signal. Weiss et al.~\cite{Weiss} also proposed subtraction techniques in autocorrelation and cepstral domains. Lockwood and Boudy~\cite{Lockwood}, Kamath and Loizou~\cite{Kamath} proposed improved versions of  spectral subtraction algorithms.
\subsubsection{Wiener filtering techniques} 
\indent These are based on the minimum mean-squared error (MMSE) criterion~\cite{PLoizou}, in which one constructs the Wiener filter using an estimate of the clean and noisy speech power spectra. Lim and Oppenheim~\cite{Lim1} proposed a parametric Wiener filter, which allows for controlling the trade-off between the signal distortion and residual noise. Hu and Loizou incorporated psychoacoustic constraints into this framework \cite{Hu1, Hu2}.  In~\cite{Hu1}, they use a perceptual weighting filter to shape the residual noise to make it inaudible. In~\cite{Hu2}, they constrain the noise spectrum to lie below a preset threshold at each frequency. Chen et al.~\cite{Chen} quantified the amount of noise reduction and analyzed its relation to speech distortion. The Wiener filter  requires an estimate of the a priori signal-to-noise ratio (SNR). Scalart and Filho ~\cite{Scalart} used a recursive a priori SNR estimator whereas Lim and Oppenheim~\cite{Lim2} iteratively estimated the Wiener filter based on autoregressive modeling of the speech signal. Hansen and Clements~\cite{Hansen} imposed inter- and intra-frame constraints to ensure speech-like characteristics within each iteration. Sreenivas and Kirnapure~\cite{TVS} proposed a codebook constrained, iterative Wiener filter with superior convergence behavior. Srinivasan et al.~\cite{Srinivasan1, Srinivasan2} proposed maximum-likelihood and Bayesian methods for estimating the speech and noise power spectra. Rosenkranz and Puder~\cite{Tobias} proposed adaptation techniques to improve the performance of the codebook approaches reported in~\cite{Srinivasan1, Srinivasan2} against model mismatches and unknown noise types.
\subsubsection{Subspace techniques}
\indent Originally proposed by Ephraim and Van Trees, subspace techniques rely on eigenvalue decomposition of the data covariance matrix~\cite{Vantrees,Mittal,Huang,Rezayee} or singular-value decomposition of the data matrix~\cite{Dendrinos,Hansen1,Hansen2}. The noise eigenvalues/singular values are smaller than those of the noisy signal and denoising happens when the signal is reconstructed from the eigen/singular vectors corresponding to the signal subspace alone. Jabloun and Champagne~\cite{Jabloun} incorporated properties of human audition into the signal subspace approaches.
\subsubsection{Statistical model based methods}
\indent McAulay and Malpass~\cite{McAulay} proposed a maximum-likelihood (ML) estimator of the clean speech short-time Fourier transform (STFT) magnitude. The clean speech spectra are assumed to be deterministic and the noise is modeled as zero-mean complex Gaussian. The ML estimate of the magnitude spectrum is combined with the noisy phase spectrum in order to reconstruct the speech signal. Ephraim and Malah~\cite{Ephraim-Malah1} proposed a Bayesian MMSE estimator of the short-time spectral amplitude (STSA) by assuming speech and noise to be statistically independent, zero-mean, complex Gaussian random variables. McCallum and Guillemin~\cite{Mathew} proposed an MMSE-STSA estimator assuming a nonzero-mean speech signal. Ephraim~\cite{Ephraim} used hidden Markov model (HMMs) to model the dynamics of speech and noise processes. Erkelens et al.~\cite{Erkelens} proposed MMSE estimators of clean speech discrete Fourier transform (DFT) coefficients and DFT magnitudes  assuming generalized Gamma distributions on speech.  Kundu et al.~\cite{Kundu} developed an MMSE estimator by using a Gaussian mixture model (GMM) for the clean speech signal. Lotter and Vary~\cite{Lotter} proposed a maximum a posteriori (MAP) estimator assuming super-Gaussian statistics. Ephraim and Malah~\cite{Ephraim-Malah2} proposed an estimator that minimizes the MSE of log-magnitude spectra as it is perceptually more correlated. Loizou~\cite{Loizou2} computed Bayesian estimators for magnitude spectrum using perceptual distortion metrics such as the Itakura-Saito distortion, hyperbolic-cosine distortion, etc. Mohammadiha et al.~\cite{Mohammadiha} use a  Bayesian non-negative matrix factorization (NMF) approach to obtain an MMSE estimate of the the clean speech DFT magnitude.
\subsection{Our Contributions}
 \indent We introduce the notion of risk estimation for speech denoising. The clean speech signal is considered to be deterministic and the random noise to be additive (Section~\ref{sec:formulation}). Direct minimization of the risk results in estimates that are a function of the deterministic clean speech signal. Considering a transform-domain Gaussian observation model, one can develop an unbiased estimate of the MSE based on Stein's lemma~\cite{Stein}, which is referred to as Stein's unbiased risk estimator (SURE) in the literature~\cite{Blu1,Blu2}. The main advantage is that, unlike MSE, SURE does not require knowledge of the unknown deterministic clean signal. The state-of-the-art image denoising techniques are based on risk minimization~\cite{Blu1,Blu2} considering the MSE. In this paper, we solve the speech denoising problem within the framework of unbiased risk estimation, where we derive unbiased estimates of speech-specific perceptual distortion measures and minimize them to obtain the corresponding shrinkage functions. Distortion measures such as Itakura-Saito, hyperbolic-cosine (cosh), weighted cosh, etc. are considered as they are more perceptually relevant than MSE~\cite{Gray}.\\
\indent Further, in practice, real-world disturbances generate bounded noise amplitudes and quantization limits the dynamic range.
Therefore, we consider the more realistic case of a truncated Gaussian distribution for the samples. The details will be described in Section~\ref{sec:formulation}. In order to develop a risk estimator, we make use of Stein's lemma and its higher-order generalization, originally proposed for Gaussian noise (Section~\ref{sec:sure}). The higher-order generalization becomes important in the context of perceptual distortion measures. Correspondingly, we develop the notion of {\it perceptual risk} optimization for speech enhancement (PROSE) (Section~\ref{sec:prisk}). The key advantage of the PROSE framework is that it allows one to replace an ensemble-averaged distortion measure by a practically viable surrogate. We employ a transform-domain {\it point-wise shrinkage}, which is nonlinear in the observations. We also consider parametric versions, which give additional flexibility to trade-off between residual noise level and speech distortion. It turns out that perceptually optimized denoising functions result in more noise attenuation than MSE.  We also carry out objective assessment in terms of average segmental signal-to-noise ratio (SSNR), global SNR, perceptual evaluation of speech quality (PESQ)~\cite{pesq}, short-time objective intelligibility  (STOI)~\cite{stoi}, and subjective assessment by means of listening tests and scoring as per ITU-T recommendations~\cite{ITUscale} (Section~\ref{sec:exptres}).
\begin{figure*}[t]
\centering
\includegraphics[width=5.65in]{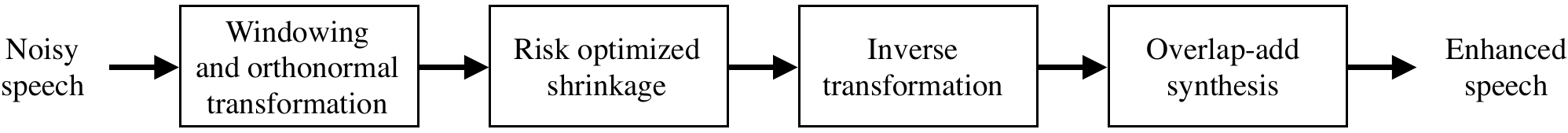}
\caption{A block diagram representation of the PROSE methodology.}
\label{fig:Block}
\end{figure*}

\section{Problem Formulation}
\label{sec:formulation}
\indent Consider a short-time frame of noisy speech in which samples of clean speech $s_n$ are distorted additively by noise $w_n$ resulting in the observations:
\begin{equation}
x_n = s_n + w_n\,, n = 1,2,\cdots,N,
\label{model}
\end{equation}
where $N$ is the frame length. The signal samples $\{s_n\}$ are assumed to be deterministic and noise samples $\{w_n\}$ to be zero-mean, bounded, i.i.d. random variables. Most real-world noise processes are bounded and the presence of a quantizer in a practical data acquisition scenario further justifies the assumption. Consequently, $\{x_n\}$ are bounded, and $\mathcal{E}\{x_n\} = s_n$, which implies that $\{x_n\}$ are independent, but not identically distributed.  The time-domain noise distribution is not restricted to be a Gaussian. Typical speech enhancement approaches work on the DFT magnitudes~\cite{PLoizou}, and the phase is left unaltered. We prefer the discrete cosine transform (DCT) as it is real-valued and known to give rise to a more parsimonious representation than the DFT \cite{Soon}. Further, Soon et al. established that, for shrinkage estimators, DCT-domain denoising is superior to DFT \cite{Soon}. Since our point-wise multiplicative shrinkage estimator belongs to this class, we prefer the DCT to DFT. The DCT representation of (\ref{model}) is 
\begin{equation}
X_k=S_k+W_k, \quad  k=1,2,\cdot\cdot\cdot,N,
\label{xk}
\end{equation}
where the transform-domain noise $\{W_k\}$ being a linear combination of i.i.d. random variables $\{w_n\}$ has a distribution that approaches a Gaussian by virtue of the {\it central limit theorem}. However, since $\{w_n\}$ are bounded, $\{W_k\}$ will also be bounded. These two properties taken together make a truncated Gaussian distribution model more appropriate and realistic for the transform coefficients than the standard Gaussian. The noisy samples are {\it concentrated} about the mean and the deviations from the mean are bounded. The suitability of the truncated Gaussian for modeling real-world processes has been advocated by Burkardt \cite{Burkardt}.\\
\indent The goal is to estimate $S_k$ given $X_k$ and noise statistics. Let $d(S_k,\widehat{S}_k)$ denote a distortion measure that quantifies the deviation of the estimate $\widehat{S}_k$ from $S_k$. 
The corresponding ensemble averaged distortion or {\it risk}, as referred to in the statistics literature, is defined as $\mathcal{R}=\mathcal{E}\left\{d(S_k, \widehat{S}_k)\right\},$ where ${\cal E}$ denotes the expectation operator. The estimate is expressed as ${\hat S_k} = f(X_k)$, where $f$ is the {\it denoising function}, which may not always be linear. We consider point-wise shrinkage, $f(X_k) = a_kX_k, a_k\in\left[0,1\right]$, and optimize an estimate of $\mathcal{R}$ with respect to $a_k$.\\
\indent A block diagram representation of the proposed method is shown in Figure~\ref{fig:Block}. To take into account the quasi-stationarity of speech, denoising is performed on a frame-by-frame basis and the enhanced speech is reconstructed using the standard overlap-add synthesis methodology.
\section{Risk Estimation Results}
\label{sec:sure}
\indent We recall a key result from~\cite{Stein}, which is central to the subsequent developments
\begin{thm} (Stein, 1981)
Let $W$ be a ${\cal N}(0,\sigma^2)$ real random variable and let $f: \mathbb{R}\rightarrow\mathbb{R}$ be an indefinite integral of the Lebesgue measurable function $f^{(1)}$, essentially the derivative of $f$. Suppose also that $\mathcal{E}\{|f^{(1)}(W)|\}<\infty$. Then, $\mathcal{E}\{Wf(W)\}=\sigma^2 \mathcal{E}\{f^{(1)}(W)\}.$
\end{thm}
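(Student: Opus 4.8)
The plan is to work directly with the Gaussian density and exploit the elementary identity that, for $\phi_\sigma(w) = (2\pi\sigma^2)^{-1/2}\exp(-w^2/(2\sigma^2))$, one has $\phi_\sigma'(w) = -(w/\sigma^2)\,\phi_\sigma(w)$. Integrating this relation from $w$ to $\infty$ and using $\phi_\sigma(\infty)=0$ gives the tail representation $\phi_\sigma(w) = \sigma^{-2}\int_w^\infty t\,\phi_\sigma(t)\,dt$ valid for every $w\in\mathbb{R}$; integrating instead from $-\infty$ to $w$ gives the analogous lower-tail form $\phi_\sigma(w) = -\sigma^{-2}\int_{-\infty}^w t\,\phi_\sigma(t)\,dt$. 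The reason for using these integral forms rather than a naive integration by parts of $\mathcal{E}\{Wf(W)\} = -\sigma^2\int f(w)\phi_\sigma'(w)\,dw$ is that the latter leaves a boundary term $[f(w)\phi_\sigma(w)]_{-\infty}^{\infty}$ whose vanishing is not guaranteed by the hypotheses alone, since $f$ may grow, whereas the tail-integral route never produces such a term.

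Next I would write $\mathcal{E}\{f^{(1)}(W)\} = \int_{-\infty}^{\infty} f^{(1)}(w)\,\phi_\sigma(w)\,dw$, split the integral at $0$, and substitute the appropriate tail representation of $\phi_\sigma$ in each half, turning each piece into a double integral such as $\int_0^\infty f^{(1)}(w)\big(\sigma^{-2}\int_w^\infty t\,\phi_\sigma(t)\,dt\big)\,dw$. I would then apply Fubini's theorem to interchange the order of integration: over the region $0<w<t<\infty$ the inner variable becomes $w$ ranging over $(0,t)$, and the fundamental theorem of calculus — valid precisely because $f$ is by hypothesis an indefinite integral of $f^{(1)}$ — gives $\int_0^t f^{(1)}(w)\,dw = f(t)-f(0)$. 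The same steps on $(-\infty,0)$, carrying the sign from the lower-tail form, produce $\int_{-\infty}^0 (t/\sigma^2)\,\phi_\sigma(t)\big(f(t)-f(0)\big)\,dt$. Adding the two halves yields $\mathcal{E}\{f^{(1)}(W)\} = \sigma^{-2}\int_{-\infty}^{\infty} t\,\phi_\sigma(t)\big(f(t)-f(0)\big)\,dt = \sigma^{-2}\mathcal{E}\{Wf(W)\} - \sigma^{-2}f(0)\,\mathcal{E}\{W\}$, and since $\mathcal{E}\{W\}=0$ the claim $\mathcal{E}\{Wf(W)\} = \sigma^2\,\mathcal{E}\{f^{(1)}(W)\}$ follows.

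The main obstacle — in fact the only nontrivial point — is the justification of the Fubini interchange, together with checking that $\mathcal{E}\{Wf(W)\}$ is finite so that the statement is meaningful. Here the hypothesis $\mathcal{E}\{|f^{(1)}(W)|\}<\infty$ does all the work: on the region $0<w<t$ one has the absolute bound $\int_0^\infty\!\!\int_w^\infty |f^{(1)}(w)|\,\sigma^{-2}t\,\phi_\sigma(t)\,dt\,dw = \int_0^\infty |f^{(1)}(w)|\,\phi_\sigma(w)\,dw \le \mathcal{E}\{|f^{(1)}(W)|\} < \infty$, obtained by using the tail identity in reverse, and similarly on the negative half-line; this legitimizes the swap and, read the other way, also shows $\int_{-\infty}^{\infty} |t|\,\phi_\sigma(t)\,|f(t)-f(0)|\,dt<\infty$, i.e. $Wf(W)$ is integrable. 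I would also be careful with the two sign conventions in the upper- and lower-tail representations of $\phi_\sigma$, since a dropped sign there is the easiest way to derail the computation.
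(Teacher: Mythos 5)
Your proof is correct, and it is in fact essentially Stein's original argument; be aware, though, that the paper itself offers no proof of this lemma at all --- it is simply quoted from Stein (1981) --- so the only internal point of comparison is the proof of the truncated-Gaussian analogue (Lemma~3). There the authors use exactly the direct integration by parts you deliberately avoid: the boundary term $-\sigma^2 f(w)\,p(w;c,\sigma)\big|_{-c\sigma}^{+c\sigma}$ is kept explicitly and absorbed into the $\mathcal{O}\{\exp(-c^2)\}$ error, which is legitimate in that setting because the support is finite and an extra growth condition on $f$ is assumed. For the untruncated Gaussian statement as written, your Fubini/tail-representation route is the appropriate one: under the sole hypothesis $\mathcal{E}\{|f^{(1)}(W)|\}<\infty$ the vanishing of $f(w)\phi_\sigma(w)$ at $\pm\infty$ is not immediate, whereas your interchange is properly licensed by the Tonelli bound $\int_0^\infty\!\int_w^\infty |f^{(1)}(w)|\,\sigma^{-2}t\,\phi_\sigma(t)\,\mathrm{d}t\,\mathrm{d}w=\int_0^\infty |f^{(1)}(w)|\,\phi_\sigma(w)\,\mathrm{d}w<\infty$ (and its mirror on the negative half-line), the signs in the upper- and lower-tail representations are handled consistently, and you obtain integrability of $Wf(W)$ as a by-product. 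In short: the paper's integration-by-parts style buys an explicit, quantified boundary error in the truncated case, while your approach buys full rigor for the Gaussian case with no auxiliary growth assumption on $f$.
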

\indent Stein's lemma facilitates estimation of the mean of $Wf(W)$ in terms of $f^{(1)}(W)$. Effectively, $\sigma^2\,f^{(1)}(W)$ could be used as an unbiased estimate of  $\mathcal{E}\{Wf(W)\}$. The implications of this apparently simple result can be appreciated when we are required to compute an unbiased estimator of the MSE. Next, we develop a higher-order generalization of Stein's lemma.
\begin{thm}
({\it Generalized Stein's lemma}) Let $W$ be a ${\cal N}(0,\sigma^2)$ real random variable and let $f: \mathbb{R}\rightarrow\mathbb{R}$ be an $n$-fold indefinite integral of the Lebesgue measurable function $f^{(n)}$, which is the $n^{th}$ derivative of $f$. Suppose also that $\mathcal{E}\left\{|W^{(n+1-k)}f^{(k)}(W)|\right\}<\infty, k = 1, 2, \cdots, n$. Then
$$\mathcal{E}\{W^{n+1}f(W)\}=\sigma^2 \mathcal{E}\{f^{(1)}(W)W^{n}\} + \sigma^2\,n\mathcal{E}\{f(W)W^{n-1}\}.$$
\label{gensteinlemma}
\end{thm}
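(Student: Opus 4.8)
\emph{Proof strategy.} The plan is to derive Lemma~\ref{gensteinlemma} from the first-order Stein lemma (Lemma~1) in essentially one step, by applying the latter not to $f$ but to the auxiliary function $g(W):=W^{n}f(W)$. If $g$ can be shown to satisfy the hypotheses of Lemma~1, then $\mathcal{E}\{Wg(W)\}=\sigma^{2}\mathcal{E}\{g^{(1)}(W)\}$; substituting $g(W)=W^{n}f(W)$ on the left and the Leibniz expansion $g^{(1)}(w)=n\,w^{n-1}f(w)+w^{n}f^{(1)}(w)$ on the right, and then using linearity of expectation, gives $\mathcal{E}\{W^{n+1}f(W)\}=\sigma^{2}\mathcal{E}\{f^{(1)}(W)W^{n}\}+\sigma^{2}n\,\mathcal{E}\{f(W)W^{n-1}\}$, which is exactly the assertion. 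So essentially all of the work lies in checking that $g$ is an admissible test function for Lemma~1.

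The first thing I would verify is the regularity of $g$. Since $f$ is an $n$-fold indefinite integral, it is in particular an indefinite integral of $f^{(1)}$, hence absolutely continuous on every bounded interval, and $w\mapsto w^{n}$ is $C^{\infty}$ (so Lipschitz on bounded intervals); a product of a locally absolutely continuous function with such a function is again locally absolutely continuous, with the product rule $g^{(1)}=n(\cdot)^{n-1}f+(\cdot)^{n}f^{(1)}$ holding at almost every point and $g^{(1)}$ Lebesgue measurable. Hence $g$ is an indefinite integral of $g^{(1)}$, as Lemma~1 requires. Next I would check the integrability hypothesis $\mathcal{E}\{|g^{(1)}(W)|\}<\infty$. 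By the triangle inequality this is at most $n\,\mathcal{E}\{|W^{n-1}f(W)|\}+\mathcal{E}\{|W^{n}f^{(1)}(W)|\}$: the second term is finite by the stated hypothesis with $k=1$, and the first is finite because $f$ is continuous (hence bounded on $[-1,1]$), while its growth at infinity is controlled by the moment hypotheses via the fundamental theorem of calculus applied to $f$. With these two facts in hand, Lemma~1 applies to $g$, and the identity follows as above; note that the conclusion also delivers, as a by-product, that $\mathcal{E}\{W^{n+1}f(W)\}$ itself is absolutely convergent.

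The step I expect to be the main obstacle is not the algebra — which is a one-line computation — but precisely this analytic bookkeeping: justifying that the product rule is valid \emph{almost everywhere} for the merely locally absolutely continuous $g$, and that every expectation occurring in the final identity converges absolutely rather than only conditionally. This is the role of the family of conditions $\mathcal{E}\{|W^{n+1-k}f^{(k)}(W)|\}<\infty$, $k=1,\dots,n$: the case $k=1$ is what makes $g^{(1)}$ integrable, and the cases $k\ge 2$ guarantee both that the remaining expectations are well defined and that the resulting identity can itself be iterated to produce still higher-order versions. If one prefers not to invoke Lemma~1 as a black box, an equivalent plan is to argue directly from the $\mathcal{N}(0,\sigma^{2})$ density $\phi_{\sigma}$: write $w\,\phi_{\sigma}(w)=-\sigma^{2}\phi_{\sigma}'(w)$, integrate $\int w^{n+1}f(w)\phi_{\sigma}(w)\,dw$ by parts, observe that the boundary term $\big[w^{n}f(w)\phi_{\sigma}(w)\big]_{-\infty}^{\infty}$ vanishes under the same moment hypotheses, and read off the stated formula — which is, of course, just the proof of Lemma~1 specialized to $g$.
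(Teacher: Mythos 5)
Your proposal is correct and follows essentially the same route as the paper: the paper's proof likewise sets $g(W)=W^{n}f(W)$, applies Stein's lemma (Lemma~1) to $\mathcal{E}\{Wg(W)\}$, and expands $g^{(1)}$ by the product rule to obtain the stated identity. The only difference is that you additionally spell out the regularity and integrability checks (local absolute continuity of $g$ and finiteness of $\mathcal{E}\{|g^{(1)}(W)|\}$ via the $k=1$ hypothesis), which the paper leaves implicit.
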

\begin{proof}
Let us express $\mathcal{E}\{W^{n+1}f(W)\}$ as $\mathcal{E}\{Wg(W)\},$
where $g(W)=W^{n}f(W)$. Applying Stein's lemma to $\mathcal{E}\{Wg(W)\}$, we get that
\begin{align*}
\mathcal{E}\{W^{n+1}f(W)\}&=\mathcal{E}\{Wg(W)\} =\sigma^2\mathcal{E}\{g^{(1)}(W)\},  \\
&=\sigma^2 \mathcal{E}\{f^{(1)}(W)W^{n}\} + \sigma^2\,n\mathcal{E}\{f(W)W^{n-1}\}.
\end{align*}
\end{proof}
Stein's lemma could be applied recursively to each of the terms on the right-hand side, up to a stage where the terms comprise only the derivatives of all orders of $f$ up to $n$.\\
\indent Our next set of results is in the context of developing Stein-type lemmas for the practical case of a truncated Gaussian distribution. 
\begin{thm} 
Let $W$ be a  real random variable with probability density function (p.d.f.)
\begin{equation}
p\left(w; c, \sigma\right)=\frac{1}{\sqrt{2\pi}\sigma K} \exp\left(-\frac{w^2}{2 \sigma^2}\right) \mathds{1}_{\{w<|c\sigma|\}},
\label{truncated_Gaussian_pdf}
\end{equation}
where $K$ ensures that $p$ integrates to unity, $c \in \mathbb{R}^+$, and $\mathds{1}$ denotes the indicator function. Let $f: \mathbb{R}\rightarrow\mathbb{R}$ be an indefinite integral of the Lebesgue measurable function $f^{(1)}$. Suppose also that $\mathcal{E}\{|f^{(1)}\left(W\right)|\}<\infty$ and $f$ does not grow faster than an exponential, then $\mathcal{E}\{Wf\left(W\right)\} = \sigma^2 \mathcal{E}\{f^{(1)}\left(W\right)\} + \mathcal{O}\{\exp(-c^2)\}$.
\label{truncated_Gaussian_lemma1}
\end{thm}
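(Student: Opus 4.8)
The plan is to carry out the integration-by-parts argument behind Stein's lemma and then bookkeep the single boundary term that the truncation leaves behind. Writing $\phi_\sigma(w)=\frac{1}{\sqrt{2\pi}\sigma}\exp(-w^2/2\sigma^2)$, so that $p(w;c,\sigma)=\frac{1}{K}\phi_\sigma(w)\mathds{1}_{\{w<|c\sigma|\}}$, I would start from the elementary identity $w\,\phi_\sigma(w)=-\sigma^2\,\phi_\sigma^{(1)}(w)$ to write
\[ \mathcal{E}\{Wf(W)\}=\frac{1}{K}\int_{-\infty}^{|c\sigma|} w f(w)\phi_\sigma(w)\,dw=-\frac{\sigma^2}{K}\int_{-\infty}^{|c\sigma|} f(w)\,\phi_\sigma^{(1)}(w)\,dw. \]
Since $f$ is absolutely continuous (being an indefinite integral of $f^{(1)}$) and $\phi_\sigma$ is smooth, integrating by parts on $[-M,|c\sigma|]$ and letting $M\to\infty$ gives
\[ \mathcal{E}\{Wf(W)\}=\sigma^2\cdot\frac{1}{K}\int_{-\infty}^{|c\sigma|} f^{(1)}(w)\phi_\sigma(w)\,dw-\frac{\sigma^2}{K}\,f(|c\sigma|)\,\phi_\sigma(|c\sigma|), \]
the boundary contribution at $-\infty$ having vanished. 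The first term on the right is exactly $\sigma^2\,\mathcal{E}\{f^{(1)}(W)\}$, because the normalizer $1/K$ is common to it and to the truncated density used in the expectation; so it only remains to bound the second, boundary term.

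For that residue I would combine three facts: (i) $\phi_\sigma(|c\sigma|)=\frac{1}{\sqrt{2\pi}\sigma}\exp(-c^2/2)$ is exponentially small in $c^2$; (ii) the hypothesis that $f$ grows no faster than an exponential, say $|f(w)|\le A\,e^{B|w|}$, bounds $|f(|c\sigma|)|$ by $A\,e^{B|c|\sigma}$, which the Gaussian factor dominates as $c$ grows; and (iii) $K$ is the probability that a centered Gaussian falls in the support, hence is bounded below by a positive constant for every $c>0$ (indeed $K\to1$ as $c\to\infty$ by the standard Gaussian tail estimate), making $1/K$ a harmless bounded multiplier. Multiplying these shows the correction term is $O\!\left(e^{B|c|\sigma-c^2/2}\right)$, exponentially small in $c^2$, which is the $\mathcal{O}\{\exp(-c^2)\}$ recorded in the statement.

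The one step needing genuine care — and the main obstacle — is justifying the passage $M\to\infty$ in the integration by parts. For this I must verify that $\int_{-\infty}^{|c\sigma|}|w f(w)|\phi_\sigma(w)\,dw$ and $\int_{-\infty}^{|c\sigma|}|f^{(1)}(w)|\phi_\sigma(w)\,dw$ are both finite, so that the two integrals and the integrated term all converge: the former because the sub-exponential growth of $f$ is overwhelmed by the super-exponential decay of $\phi_\sigma$, and the latter because it is precisely the hypothesis $\mathcal{E}\{|f^{(1)}(W)|\}<\infty$. The same sub-exponential bound is also what forces the lower boundary term $|f(-M)|\phi_\sigma(-M)\le A\,e^{BM}\phi_\sigma(-M)$ to tend to $0$. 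Thus the growth condition on $f$ enters in exactly two places — the boundary term at $-\infty$ and the one at $|c\sigma|$ — and nothing beyond Stein's lemma and elementary Gaussian tail bounds is required; the truncation's only effect is to keep the upper boundary term from being identically zero, leaving the advertised exponentially small residue.
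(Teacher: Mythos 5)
Your proposal is correct and takes essentially the same route as the paper: the Stein-type integration by parts based on $w\,\phi_\sigma(w)=-\sigma^2\phi_\sigma^{(1)}(w)$, with the truncation-induced boundary term controlled by combining the Gaussian factor $\exp(-c^2/2)$, the sub-exponential growth of $f$, and the boundedness of $1/K$. The only divergence is that you read the support literally as the one-sided region $w<c\sigma$ (so your lower boundary term sits at $-\infty$ and vanishes), whereas the paper integrates over the symmetric interval $[-c\sigma,c\sigma]$ --- consistent with its bounded-noise motivation and its remark that the error is zero for even $f$ --- but your argument transfers verbatim, the extra term at $-c\sigma$ being exponentially small by the same estimate; note also that both you and the paper record the residue as $\mathcal{O}\{\exp(-c^2)\}$ even though the boundary value actually decays like $\exp(-c^2/2)$ (times the allowed exponential growth of $f$), a shared, harmless imprecision.
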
 
\begin{proof}
Using the property: $\displaystyle -\sigma^2 \frac{\mathrm{d} \exp\left(-\frac{w^2}{2 \sigma^2}\right)}{\mathrm{d} w} = w \exp\left(-\frac{w^2}{2 \sigma^2}\right)$, we write 
\begin{align}
\nonumber
\mathcal{E}\{Wf\left(W\right)\}&=\int_{-\infty}^{+\infty} w f(w)p\left(w;c,\sigma\right) \mathrm{d}w, \\
\nonumber
&=- \int_{-c\sigma}^{+c\sigma} \sigma^2 f(w) \frac{1}{\sqrt{2\pi}\sigma K} \frac{\mathrm{d} \exp\left(-\frac{w^2}{2 \sigma^2}\right)}{\mathrm{d} w} \mathrm{d}w,\\ 
\nonumber
&=- \sigma^2 f(w)p\left(w;c,\sigma\right)\Big|_{-c\sigma}^{+c\sigma}\\
\nonumber
& \hspace{2cm}+ \int_{-c\sigma}^{+c\sigma} \sigma^2\,f^{(1)}(w)p\left(w;c,\sigma\right)\mathrm{d}w\\
\nonumber
&= \mathcal{O}\{\exp(-c^2)\}+ \sigma^2 \mathcal{E} \left\{ f^{(1)}(W)\right\}.
\end{align}
\end{proof}
For even $f$, the approximation error is zero. Next, we state the counterpart of Lemma~\ref{gensteinlemma} for truncated Gaussian distribution, which has a similar proof mechanism.
\begin{thm}
Let $W$ be a  real random variable  with p.d.f $p\left(w; c, \sigma\right)$ and let $f: \mathbb{R}\rightarrow\mathbb{R}$ be an $n$-fold indefinite integral of the Lebesgue measurable function $f^{(n)}$, which is the $n^{th}$ derivative of $f$. Suppose also that $\mathcal{E}\left\{|W^{(n+1-k)}f^{(k)}(W)|\right\}<\infty, k = 1, 2, \cdots, n$ and  $f^{(k)}$ does not grow faster than an exponential, then
\begin{align*}
\mathcal{E}\{W^{n+1}f\left(W\right)\}=\sigma^2 \mathcal{E}\{f^{(1)}(W)W^{n}\} &+ \sigma^2\,n \mathcal{E}\{f\left(W\right)W^{n-1}\}\nonumber\\
&+ \mathcal{O}\{\exp(-c^2)\}.
\end{align*}
\label{truncated_Gaussian_recursive_lemma2}
\end{thm}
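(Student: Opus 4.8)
The plan is to mirror the proof of Lemma~\ref{gensteinlemma}, but with Lemma~\ref{truncated_Gaussian_lemma1} playing the role that Stein's lemma played there. First I would introduce $g(w) = w^n f(w)$ and rewrite $\mathcal{E}\{W^{n+1}f(W)\} = \mathcal{E}\{W g(W)\}$. Since $f$ is an indefinite integral of the measurable $f^{(1)}$ and $w\mapsto w^n$ is smooth, $g$ is an indefinite integral of the measurable function $g^{(1)}(w) = n w^{n-1} f(w) + w^n f^{(1)}(w)$ (product rule, valid almost everywhere), so $g$ satisfies the regularity hypothesis of Lemma~\ref{truncated_Gaussian_lemma1}.

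Next I would check that $g$ meets the remaining hypotheses of Lemma~\ref{truncated_Gaussian_lemma1}. The condition $\mathcal{E}\{|g^{(1)}(W)|\}<\infty$ follows from the triangle inequality together with $\mathcal{E}\{|W^n f^{(1)}(W)|\}<\infty$ (the $k=1$ case of the stated moment hypotheses) and $\mathcal{E}\{|W^{n-1} f(W)|\}<\infty$ (immediate, since the truncated Gaussian is supported on the bounded interval $[-c\sigma,c\sigma]$ on which $f$ is integrable). Moreover, since $f$ does not grow faster than an exponential and multiplication by the polynomial $w^n$ preserves that property, $g$ also does not grow faster than an exponential. Lemma~\ref{truncated_Gaussian_lemma1} then yields $\mathcal{E}\{Wg(W)\} = \sigma^2\mathcal{E}\{g^{(1)}(W)\} + \mathcal{O}\{\exp(-c^2)\}$.

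Finally I would substitute $g^{(1)}(w) = n w^{n-1}f(w) + w^n f^{(1)}(w)$ and use linearity of the expectation to obtain
\[
\mathcal{E}\{W^{n+1}f(W)\} = \sigma^2\mathcal{E}\{f^{(1)}(W)W^n\} + \sigma^2 n\,\mathcal{E}\{f(W)W^{n-1}\} + \mathcal{O}\{\exp(-c^2)\},
\]
which is exactly the claimed identity.

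The only step that requires some care is confirming that the error stays of the stated order. The integration-by-parts inside Lemma~\ref{truncated_Gaussian_lemma1} produces a boundary term $\sigma^2\,g(w)\,p(w;c,\sigma)\big|_{-c\sigma}^{c\sigma}$; passing from $f$ to $w^n f(w)$ multiplies this by at most a polynomial factor in $c\sigma$, which is absorbed into the same $\mathcal{O}\{\exp(-c^2)\}$ bound used in Lemma~\ref{truncated_Gaussian_lemma1} (and, as there, the term vanishes identically when $w^n f(w)$ is even, i.e.\ when $f$ has parity $(-1)^n$). If one instead applies Lemma~\ref{truncated_Gaussian_lemma1} recursively, as remarked after Lemma~\ref{gensteinlemma}, only finitely many such boundary terms accumulate, each of the same order, so the aggregate error is still $\mathcal{O}\{\exp(-c^2)\}$.
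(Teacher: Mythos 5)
Your proposal is correct and matches the paper's intended argument: the paper proves this lemma by the "similar proof mechanism" to Lemma~\ref{gensteinlemma}, namely setting $g(W)=W^{n}f(W)$ and applying the truncated-Gaussian Stein result of Lemma~\ref{truncated_Gaussian_lemma1}, exactly as you do, with the boundary term absorbed into the $\mathcal{O}\{\exp(-c^2)\}$ error. Your added care in checking the hypotheses for $g$ and tracking the boundary term is consistent with, and slightly more explicit than, the paper's treatment.
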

The approximation error is negligible for large values of $c$. These results will be handy in computing risk estimators for various perceptual distortion measures.

\section{Perceptual Risk Optimization for Speech Enhancement (PROSE)}
\label{sec:prisk}
\subsection{Mean-Square Error (MSE)}
\indent The squared error is the most commonly employed distortion measure largely because of ease of optimization. The distortion function for squared error in the transform domain is
\begin{equation}
d(S_k,\widehat{S}_k)=\left(\widehat{S}_k - S_k\right)^2,\,\,\,\mbox{where}\,\,\,\widehat{S}_k=f(X_k). \nonumber
\label{SE}
\end{equation}
The MSE is $\mathcal{R}={\cal E}\{d(S_k,\widehat{S}_k)\}$, which may be expanded as
\begin{equation}
\mathcal{R}=\mathcal{E}\left\{f^2(X_k)+ S_k^2 - 2f(X_k)X_k+2f(X_k)W_k\right\}. 
\label{avgrisk}
\end{equation}
Applying Lemma~\ref{truncated_Gaussian_lemma1} gives $\mathcal{E}\{f(X_k)W_k\} \approx \sigma^2 \mathcal{E}\{f^{(1)}(X_k)\}$, and from (\ref{avgrisk}), we get that
\begin{equation}
\mathcal{R} \approx \mathcal{E}\{f^2(X_k)-2f(X_k)X_k+2\sigma^2f^{(1)}(X_k)\}+S_k^2, \nonumber
\end{equation}
from which it can be concluded that
\begin{equation}
{\widehat {\cal R}} = f^2(X_k)-2f(X_k)X_k+2\sigma^2f^{(1)}(X_k) +S_k^2,
\label{msrisk}
\end{equation}
is a nearly unbiased estimator of the MSE. Although ${\widehat {\cal R}}$ contains the signal term $S_k^2$, it does not affect the minimization with respect to $f$. Consider the point-wise shrinkage estimator $f(X_k)=a_kX_k$, where $a_k\in\left[0,1\right]$. The optimum value of $a_k$ is obtained by  minimizing $\widehat {{\cal R}}$ subject to the constraint $a_k\in\left[0,1\right]$. The Karush-Kuhn-Tucker (KKT) conditions~\cite[pp.\ 211]{Fletcher} for solving this problem are given in Appendix~\ref{KKT_conditions} (cf. \eqref{lagrange_derivative} -- \eqref{second_order}). The optimum $a_k$ that satisfies the KKT conditions\footnotemark\footnotetext{The calculations related to the constrained optimization of all the perceptual risk estimates considered in this paper are provided in the supporting document.} is given by
$$a_k=\text{max}\left\{1-\displaystyle\frac{\sigma^2}{X_k^2},0\right\}.$$
The optimum shrinkage estimator becomes: 
$$\widehat{S}_k=\text{max}\left\{1-\displaystyle\frac{1}{\xi_k},0\right\}X_k =a_{\text{MSE}}(\xi_k)X_k,$$
where $\xi_k=\displaystyle \displaystyle\frac{X_k^2}{\sigma^2}$ denotes the a posteriori SNR determined based on the noisy signal, and $a_{\text{MSE}}(\xi_k)$ denotes the MSE-related shrinkage function. To impart additional flexibility, we consider parametric refinements, which allow us to trade-off between residual noise level and speech distortion. It is also useful when estimates of the noise variance may not be sufficiently accurate. The parametrically refined version is given by
\begin{equation}
\widehat{S}_k  = \text{max}\left\{1-\displaystyle\frac{\alpha}{\xi_k},0\right\}X_k,
\label{Parametric_MSEestimate}
\end{equation}
where $\alpha$ is the parameter, akin to the over-subtraction factor in spectral subtraction algorithms.
\subsection{Weighted Euclidean (WE) Distortion }
\indent The MSE is perceptually less relevant for speech signals since a large MSE does not always imply poor signal quality. Auditory masking effects render humans more robust to errors at spectral peaks than at the valleys and are taken advantage of in speech/audio compression. One way to introduce differential weighting to spectral peaks and valleys is to consider the weighted Euclidean (WE) distortion:
$$d(S_k,\widehat{S}_k)=\frac{\left(\widehat{S}_k - S_k\right)^2}{S_k}, \mbox{where}\, \widehat{S}_k=f(X_k).$$
\indent  The measure $d < 0$ when $S_k < 0$, but this is not a problem, because in that case, the cost function must be maximized and not minimized. We shall see that the proposed methodology implicitly takes care of this aspect. Developing $d$, we get
$$d(S_k,\widehat{S}_k)= \frac{\widehat{S}^2_k}{S_k} +S_k -2\widehat{S}_k.$$ Let us consider a high-SNR scenario, that is, $\left |\frac{W_k}{X_k}\right |<1$. This event occurs with probability 1 if $|S_k| > 2c\sigma$ (cf. Appendix \ref{probability_high_snr_event}). This condition also implies that $X_k$ and $S_k$ have the same sign. In this scenario, the first term is expanded as
$$\frac{\widehat{S}^2_k}{S_k} = \frac{\widehat{S}^2_k}{X_k}\left(1-\frac{W_k}{X_k}\right)^{-1} =\frac{\widehat{S}^2_k}{X_k}  \sum_{n=0}^{\infty} \left(\frac{W_k}{X_k} \right)^n,$$
and the distortion function is rewritten as
\begin{equation}
d(S_k,\widehat{S}_k)=\frac{\widehat{S}^2_k}{X_k}  \sum_{n=0}^{\infty} \left(\frac{W_k}{X_k} \right)^n+S_k -2\widehat{S}_k . \nonumber
\end{equation}
We truncate the infinite sum beyond the fourth-order, in order to maintain high accuracy in the calculations:
$$d(S_k,\widehat{S}_k) \approx \frac{\widehat{S}^2_k}{X_k} \sum_{n=0}^{4} \left(\frac{W_k}{X_k} \right)^n+S_k -2\widehat{S}_k.$$
Considering $f(X_k)=a_kX_k$, we seek to minimize the risk $\mathcal{R} $:
\begin{equation}
\mathcal{R} = \mathcal{E}\left\{a_k^2X_k  \sum_{n=0}^{4} \left(\frac{W_k}{X_k} \right)^n \right\}+S_k-2 \mathcal{E}\{a_k X_k\}.
\label{reciprocal} 
\end{equation}
To proceed further, we are required to compute expectations of reciprocals of truncated Gaussian random variables, which may not always be finite. A necessary and sufficient condition for the expectations to be finite is that $|S_k| > c\sigma$, which is satisfied in the high-SNR regime. This is an added benefit of working with more realistic distributions such as the truncated Gaussian.\\
\indent A simplification for the expectation of the sum appearing in the first term of \eqref{reciprocal} could be made by invoking \emph{Lemma} \ref{truncated_Gaussian_recursive_lemma2} (cf. Appendix~\ref{app:WE}). Consequently, we get
\begin{align*}
\mathcal{R}& =
a_k^2\mathcal{E} \left \{X_k+ \frac{\sigma^2}{X_k}-\frac{\sigma^4}{X_k^3}+48\frac{\sigma^6}{X_k^5}+360\frac{\sigma^8}{X_k^7} -2\frac{X_k}{a_k} \right \}+S_k.
\end{align*}
The corresponding unbiased risk estimator is obtained as
\begin{equation}
{\widehat {\cal R}}=a_k^2\Big(X_k+ \frac{\sigma^2}{X_k}-\frac{\sigma^4}{X_k^3}+48\frac{\sigma^6}{X_k^5}+360\frac{\sigma^8}{X_k^7} \Big)-2a_kX_k+S_k.
\label{WE_risk_estimate}
\end{equation}
The goal is to determine $a_k \in [0,1]$ such that ${\widehat {\cal R}}$ is minimized if $S_k>0$ and maximized if $S_k<0$. Solving the KKT conditions \eqref{lagrange_derivative}$-$\eqref{second_order} corresponding to these two scenarios results in the same optimum $a_k$.
The corresponding estimator is given by
$$\widehat{S}_k=\underbrace{\left(1+\displaystyle\frac{1}{\xi_k}-\displaystyle\frac{1}{\xi_k^2}+\displaystyle\frac{48}{\xi_k^3}+\displaystyle\frac{360}{\xi_k^4}\right)^{-1}}_{a_{\text{WE}}(\xi_k)}X_k,$$
where $a_{\text{WE}}(\xi_k)$ is the shrinkage factor. Akin to (\ref{Parametric_MSEestimate}), one can define parametric counterparts of $a_{\text{WE}}$ by replacing $\xi_k$ with $\xi_k/\alpha$.
Figure~\ref{gain_MSE_and_WE}(a) compares $a_{\text{MSE}}$ and $a_{\text{WE}}$ for different values of $\alpha$. We observe that the attenuation provided by $a_{\text{MSE}}$ is considerably smaller than $a_{\text{WE}}$ when the a posteriori SNR is greater than zero dB.  This indicates that, in the high-SNR regime, minimizing the weighted Euclidean distortion results in higher noise attenuation than the MSE. As $\alpha$ increases, the shrinkage curves shift to the right, implying more attenuation for a given a posteriori SNR.
\subsection{Logarithmic Mean-Square Error (log MSE)}
\indent Since loudness perception of the peripheral auditory system is logarithmic, one could compare speech spectra by computing the MSE on a logarithmic scale. This property has been used to advantage in vector quantization and speech coding applications~\cite{Gray}. The log MSE between $S_k$ and ${\widehat S_k}$ is given by
\begin{align}
d(S_k,\widehat{S}_k)&=\Big(\log \frac{\widehat{S}_k}{S_k}\Big)^2, \nonumber \\
&=(\log \widehat{S}_k)^2+(\log S_k)^2-2\log S_k \log \widehat{S}_k .
\label{log_distortion}
\end{align}
Recall that we consider only non-negative shrinkage functions for denoising, and that under the high SNR assumption both $\widehat{S}_k$ and $S_k$ have the same sign. Consequently, the argument of the logarithm is always positive. The last term in (\ref{log_distortion}) is rewritten as
\begin{align*}
\log \widehat{S}_k \log S_k &=\log \widehat{S}_k \log (X_k-W_k), (\text{from \eqref{xk}}),\nonumber\\
&=\log \widehat{S}_k \log X_k+ \log \widehat{S}_k \log \left(1-\displaystyle\frac{W_k}{X_k} \right), \\
&=\log \widehat{S}_k \log X_k- \log \widehat{S}_k \displaystyle \sum_{n=1}^{\infty} \displaystyle\frac{1}{n} \left(\displaystyle\frac{W_k}{X_k} \right)^n.
\end{align*}
Substituting in (\ref{log_distortion}), truncating the series beyond $n=4$, and considering the expectation results in
\begin{align*}
\mathcal{R}=\mathcal{E}\left\{(\log \widehat{S}_k)^2+(\log S_k)^2-2\log \widehat{S}_k \log X_k\right\} \\
+\mathcal{E}\left\{2\log \widehat{S}_k   \displaystyle{\sum_{n=1}^{4}} \displaystyle{\frac{1}{n}} \left(\displaystyle{\frac{W_k}{X_k}} \right)^{n} \right\}.
\end{align*}
For the shrinkage estimate $\widehat{S}_k=a_kX_k$, the risk becomes
\begin{align*}
\mathcal{R}=\mathcal{E}&\left\{(\log a_kX_k)^2+(\log S_k)^2 \right\}-2\mathcal{E}\left\{\log a_kX_k \log X_k \right\}\\
&+2\mathcal{E}\left\{\displaystyle \sum_{n=1}^{4} \displaystyle\frac{\log a_kX_k}{n} \left(\displaystyle\frac{W_k}{X_k}\right)^n\right\}.
\nonumber
\end{align*}
The last term may be simplified as shown in Appendix~\ref{app:log}, using \emph{Lemma} \ref{truncated_Gaussian_recursive_lemma2}. Consequently, the unbiased risk estimator is
\begin{align}
 \nonumber
{\widehat {\cal R}}=&(\log S_k)^2+(\log a_kX_k)^2-2\log a_kX_k \log X_k\\ \nonumber
+2&\left(\displaystyle\frac{\sigma^2}{X_k^2}-1.5\displaystyle\frac{\sigma^4}{X_k^4}+2.17\displaystyle\frac{\sigma^6}{X_k^6}-159.5\displaystyle\frac{\sigma^8}{X_k^8}\right)\\  
-2&\log a_kX_k \left(0.5\displaystyle\frac{\sigma^2}{X_k^2}-0.75\displaystyle\frac{\sigma^4}{X_k^4}-10\displaystyle\frac{\sigma^6}{X_k^6}-210\displaystyle\frac{\sigma^8}{X_k^8}\right).
\label{log_MSE_risk_estimate}
\end{align}
The optimum value of $a_k \in [0, 1]$ obtained by solving \eqref{lagrange_derivative} -- \eqref{second_order} in this case is:
\begin{align*}
a_k=\min \left\{\exp \left(0.5\displaystyle\frac{\sigma^2}{X_k^2}-0.75\displaystyle\frac{\sigma^4}{X_k^4}-10\displaystyle\frac{\sigma^6}{X_k^6}-210\displaystyle\frac{\sigma^8}{X_k^8}\right),1\right\}.
\end{align*}
The corresponding estimate of $S_k$ is given by
\begin{align*}
\widehat{S}_k&=\underbrace{\text{min}\left\{\exp \left(\displaystyle\frac{0.5}{\xi_k}-\displaystyle\frac{0.75}{\xi_k^2}-\displaystyle\frac{10}{\xi_k^3}-\displaystyle\frac{210}{\xi_k^4} \right),1\right\}}_{a_{\text{log MSE}}(\xi_k)}X_k.
\end{align*}
The parametrized shrinkage is given by $a_{\text{log MSE}}\left(\frac{\xi_k}{\alpha}\right)$. A comparison of $a_{\text{MSE}}$ and $a_{\text{log MSE}}$ is shown in  Figure~\ref{gain_MSE_and_WE}(b). At low SNRs, log MSE results in a higher attenuation than MSE.
\begin{figure}[t]
\centering
$
\begin{array}{cc}
\includegraphics[width=1.75in]{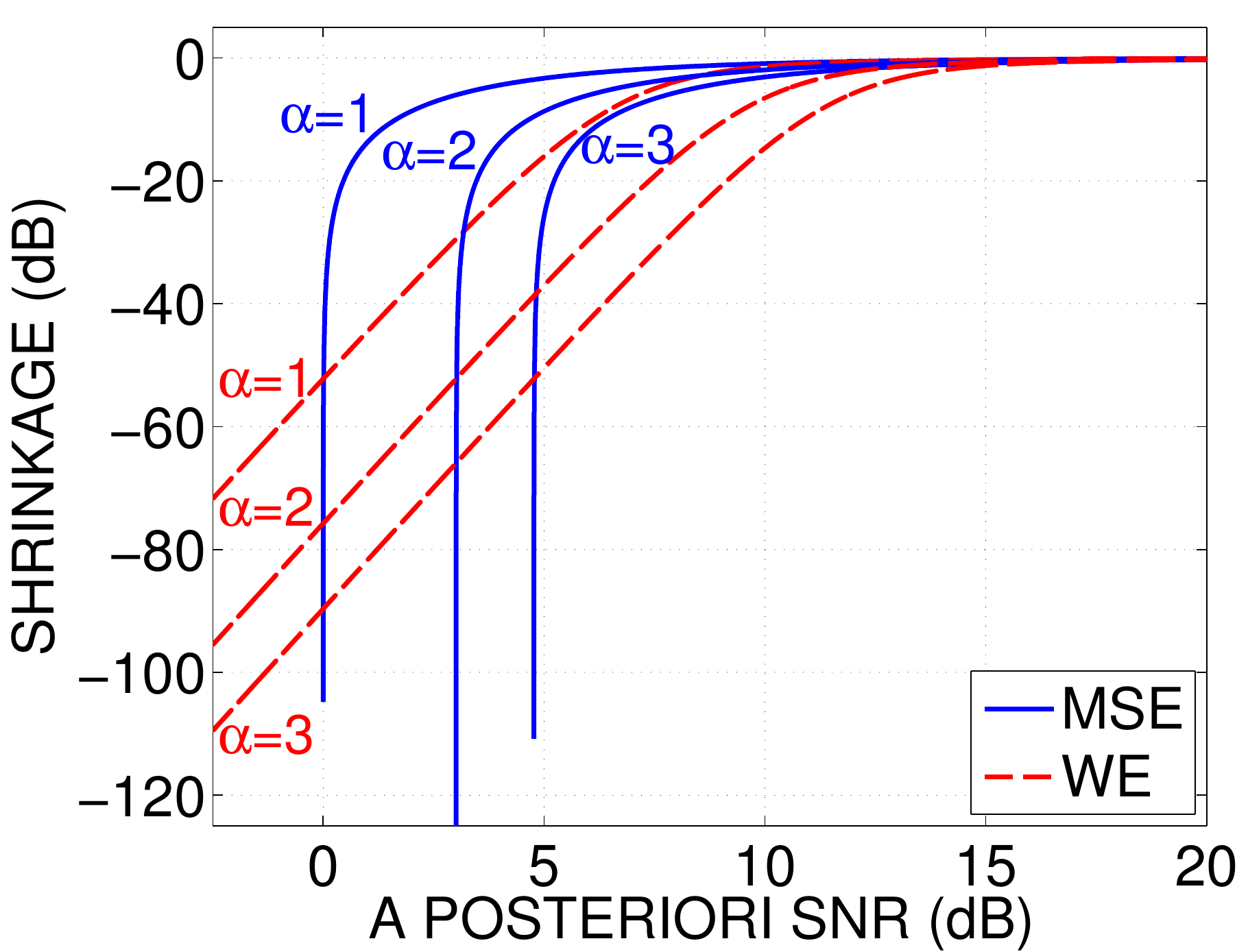}&
\includegraphics[width=1.75in]{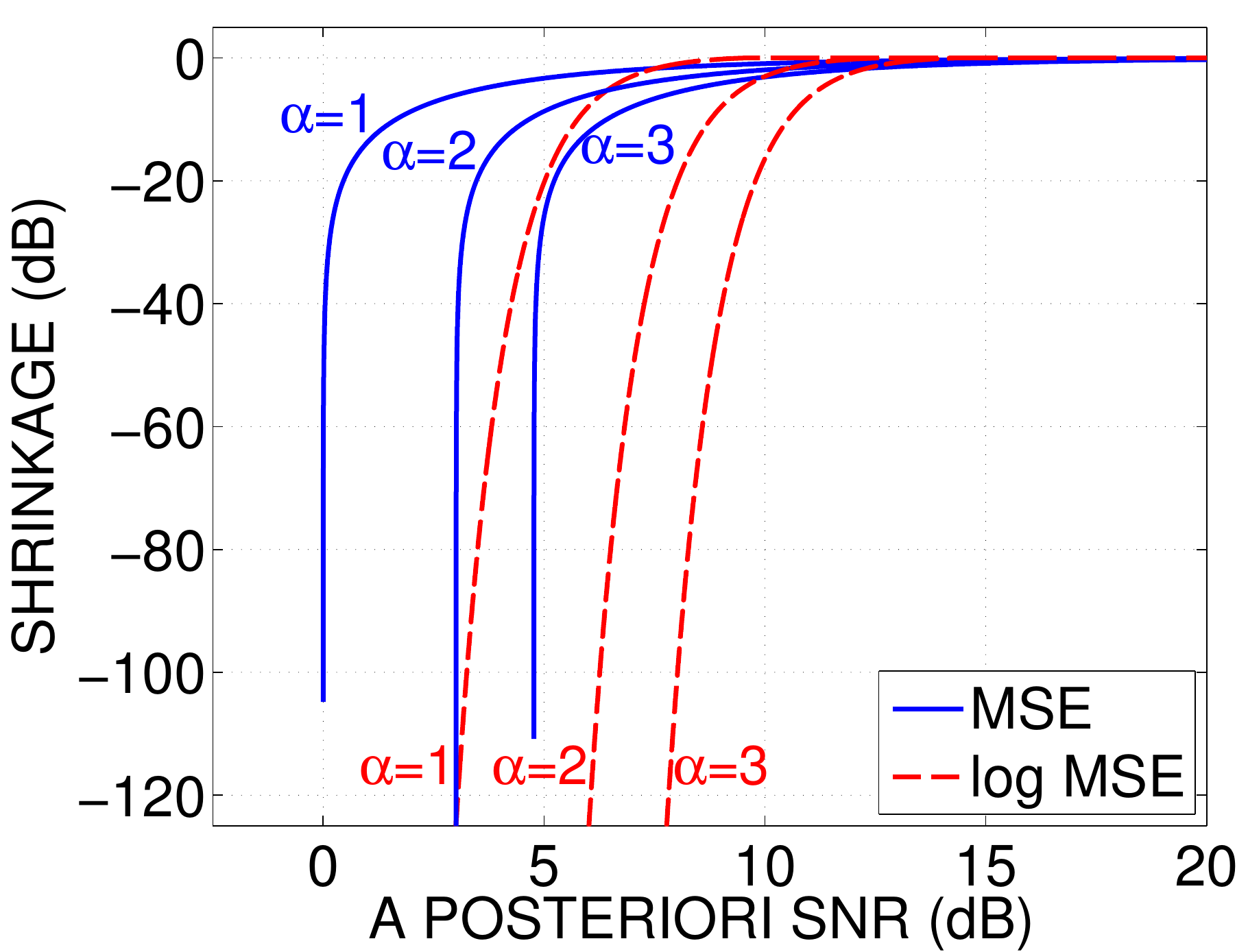}\\
\text{(a)} & \text{(b)} \\
\end{array}
$
\caption{A comparison of shrinkage profiles: (a) MSE versus WE; and (b) MSE versus Iog MSE.}
\label{gain_MSE_and_WE}
\end{figure}
\begin{figure*}[t]
\centering
$
\begin{array}{cccc}
\includegraphics[width=1.75in]{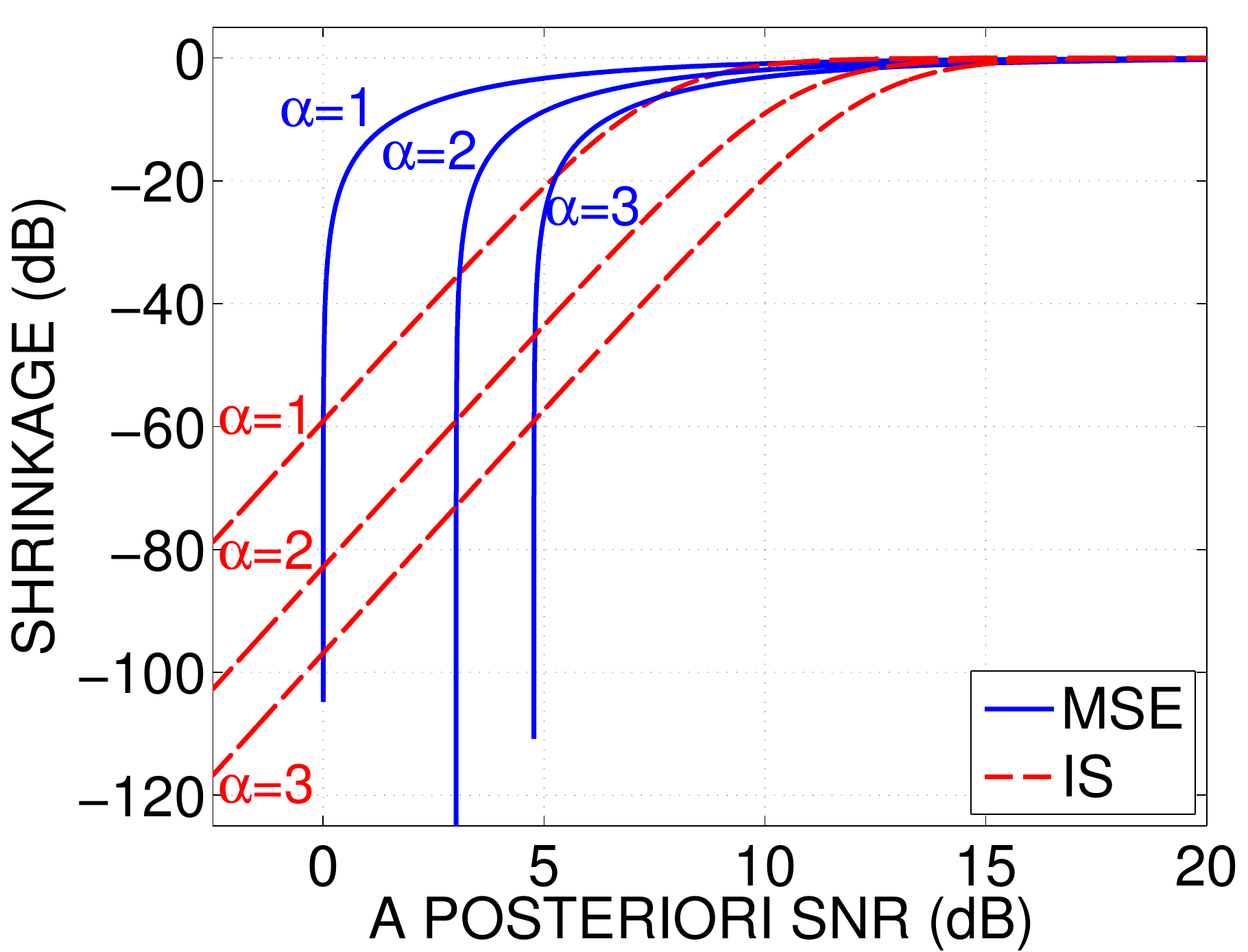}&
\includegraphics[width=1.75in]{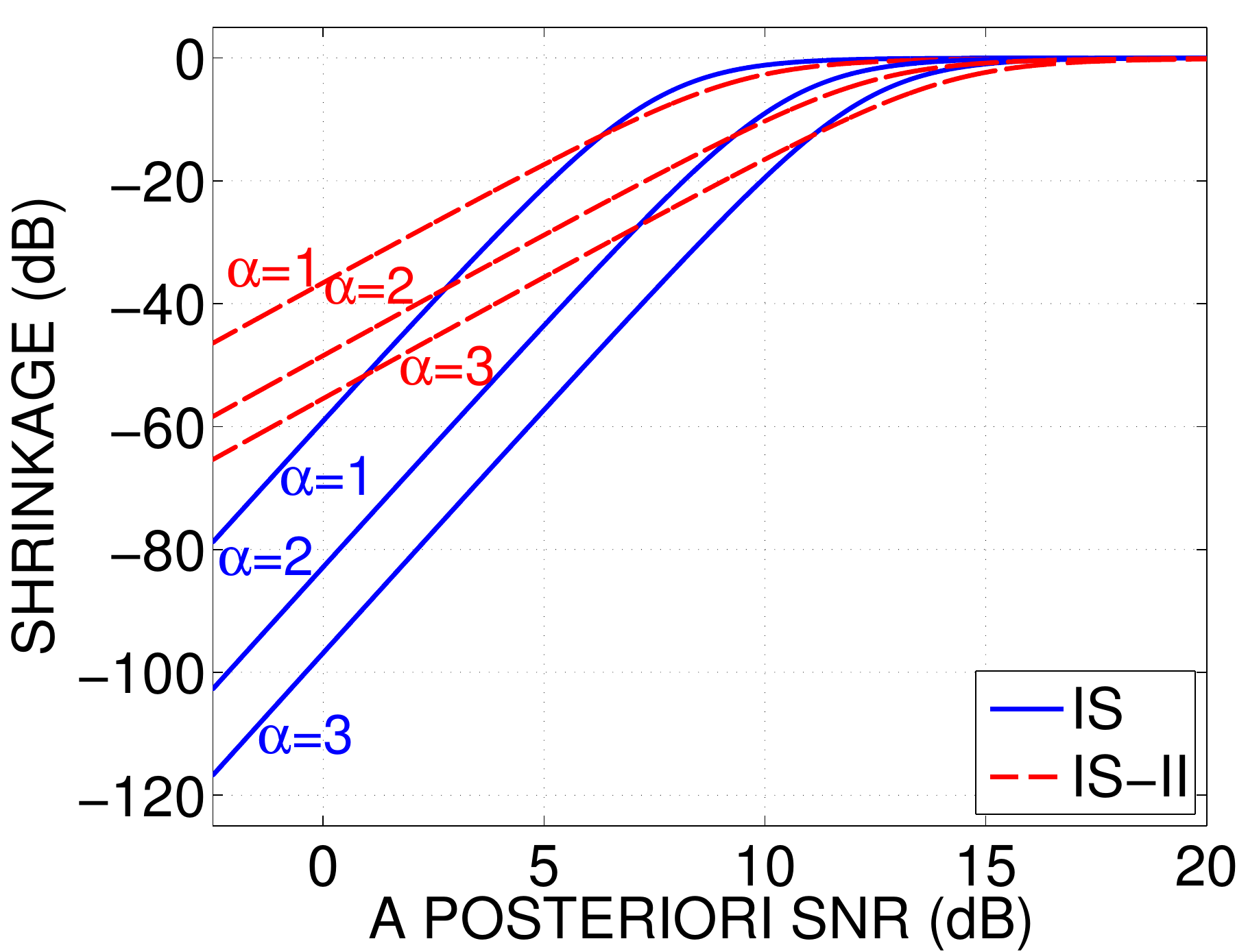}&
\includegraphics[width=1.75in]{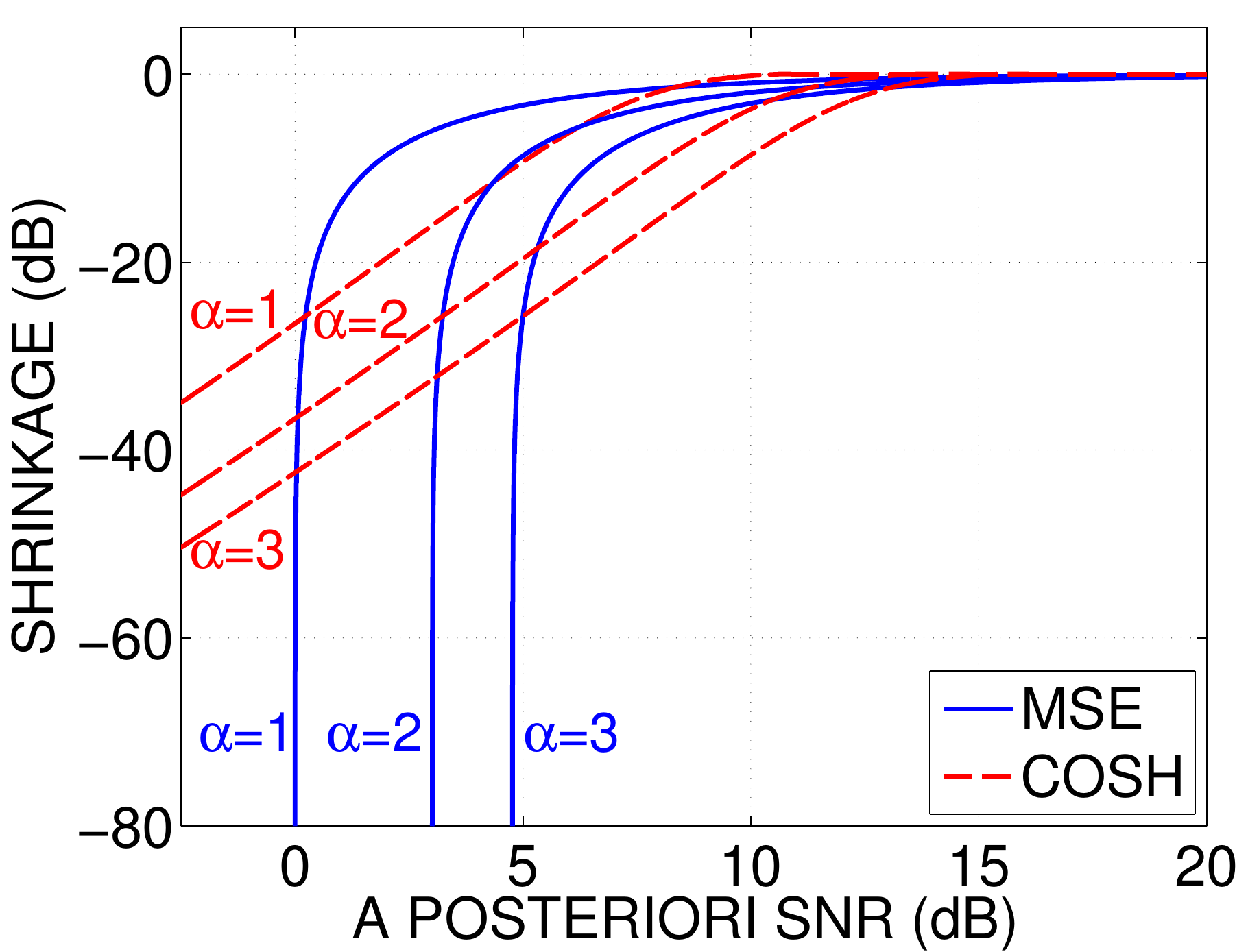}&
\includegraphics[width=1.75in]{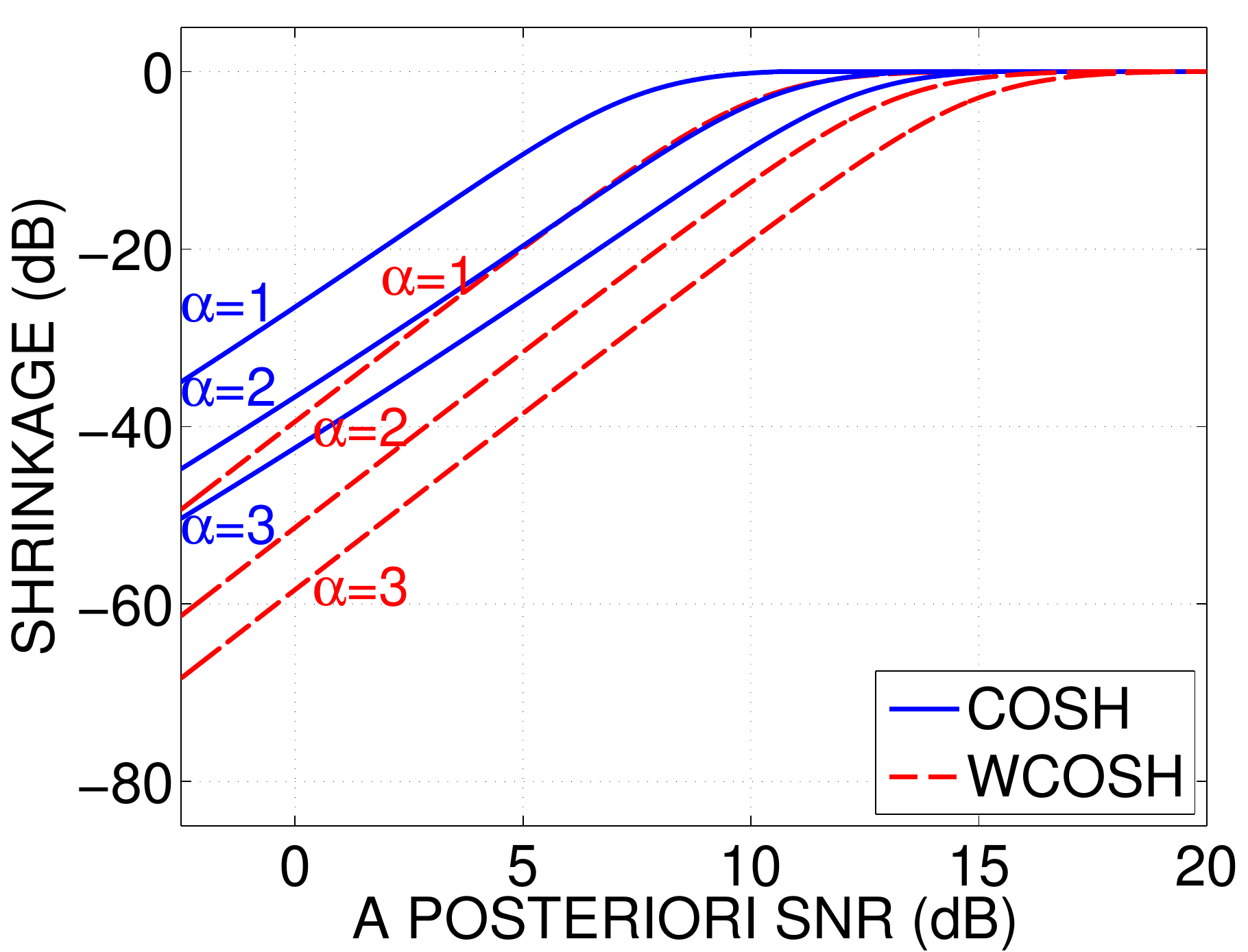}\\
\text{(a)} & \text{(b)} & \text{(c)} & \text{(d)} \\
\end{array}
$
\caption{A comparison of shrinkage profiles: (a) MSE versus IS; (b) IS versus IS-II; (c) MSE versus COSH; and (d) COSH versus WCOSH.}
\label{fig:ISandIS_2}
\end{figure*}
\subsection{Itakura-Saito (IS) Distortion}
\indent The IS distortion, although not symmetric, is a popular quality measure used in speech coding due to its perceptual relevance in matching two power spectra. Here, we compute the IS distortion between the DCT coefficients of the noise-free speech and its estimate considering both of them to have the same sign:
$$d(S_k,\widehat{S}_k)=\displaystyle\frac{\widehat{S}_k}{S_k}-\log \displaystyle\frac{\widehat{S}_k}{S_k}-1.$$
In the high-SNR scenario, the distortion measure is expanded as
\begin{align*}
d(S_k,\widehat{S}_k)&=\displaystyle\frac{\widehat{S}_k}{X_k}\left(1-\displaystyle\frac{W_k}{X_k}\right)^{-1}-\log \widehat{S}_k+ \log S_k-1,\\
&=\displaystyle\frac{\widehat{S}_k}{X_k}\displaystyle \sum_{n=0}^{\infty} \left(\displaystyle\frac{W_k}{X_k}\right)^n-\log \widehat{S}_k+ \log S_k-1.
\end{align*}
Expressing $\widehat{S}_k=a_kX_k$, and truncating the series beyond $n=4$, the risk turns out to be
\begin{equation}
\mathcal{R} = \displaystyle \sum_{n=0}^{4} {\cal E}\left\{\frac{a_kW_k^n}{X_k^n}\right\} - {\cal E}\left\{\log a_kX_k\right\} + \log S_k -1.
\label{RiskIS}
\end{equation}
The first term is evaluated using \emph{Lemma} \ref{truncated_Gaussian_recursive_lemma2} (cf. Appendix~\ref{app:IS}) resulting in the risk
$$\mathcal{R}=\mathcal{E} \left\{a_k\left(1+60\displaystyle\frac{\sigma^6}{X_k^6}+840\displaystyle\frac{\sigma^8}{X_k^8}\right)-\log a_kX_k \right\} + \log S_k -1.$$
The corresponding unbiased estimator of ${\cal R}$ is 
\begin{equation}
{\widehat {\cal R}}=a_k\left(1+60\displaystyle\frac{\sigma^6}{X_k^6}+840\displaystyle\frac{\sigma^8}{X_k^8}\right)-\log a_kX_k +\log\,S_k-1,
\label{IS_risk_estimate}
\end{equation}
which when optimized with respect to $a_k\in\left[0,1\right]$ (solution of \eqref{lagrange_derivative} to \eqref{second_order}) gives 
$$
a_k=\left(1+\displaystyle\frac{60}{\xi_k^3}+\displaystyle\frac{840}{\xi_k^4}\right)^{-1}\Rightarrow \widehat{S}_k=\underbrace{\left(1+\displaystyle\frac{60}{\xi_k^3}+\displaystyle\frac{840}{\xi_k^4}\right)^{-1}}_{a_{\text{IS}}(\xi_k)}X_k.
$$
Figure~\ref{fig:ISandIS_2}(a) shows a comparison of $a_{\text{IS}}$ and $a_{\text{MSE}}$. For a posteriori SNR greater than $0$ dB, the attenuation is higher in the case of Itakura-Saito distortion.
\subsection{Itakura-Saito Distortion Between DCT Power Spectra (IS-II)}
\indent We next consider the IS distortion between $S_k^2$ and $\widehat{S}_k^2$:
\begin{align*}
d(S_k,\widehat{S}_k)&=\displaystyle\frac{\widehat{S}_k^2}{S_k^2}-\log \displaystyle\frac{\widehat{S}_k^2}{S_k^2}-1,\\
&=\displaystyle\frac{\widehat{S}_k^2}{X_k^2} \left(1-\displaystyle\frac{W_k}{X_k}\right)^{-2}-\log \widehat{S}_k^2 +\log S_k^2 -1,\\
&=\displaystyle\frac{\widehat{S}_k^2}{X_k^2} \displaystyle \sum_{n=0}^{\infty} \left(n+1\right) \left(\displaystyle\frac{W_k}{X_k}\right)^n-\log \widehat{S}_k^2+\log S_k^2-1.
\end{align*} 
Considering ${\widehat S}_k = a_kX_k$, and truncating the series beyond $n=4$, results in the risk
\begin{align*}
\mathcal{R}=\mathcal{E}\left\{a_k^2 \displaystyle \sum_{n=0}^{4} (n+1) \left(\displaystyle\frac{W_k}{X_k}\right)^n \right\}-\mathcal{E}\left \{\log a_k^2X_k^2 \right\}+\log S_k^2-1.
\end{align*} 
Simplifying the first term using \emph{Lemma} \ref{truncated_Gaussian_recursive_lemma2} gives
\begin{align*}
\mathcal{R}&=a_k^2\mathcal{E}\left\{1+\displaystyle\frac{\sigma^2}{X_k^2}-3\displaystyle\frac{\sigma^4}{X_k^4}+360\displaystyle\frac{\sigma^6}{X_k^6}+4200\displaystyle\frac{\sigma^8}{X_k^8}\right\}\\
&\text{\hspace{1cm}}-\mathcal{E}\left \{\log a_k^2X_k^2 \right\}+\log S_k^2-1.
\end{align*}
An unbiased estimator of ${\cal R}$ is
\begin{align}
\nonumber
{\widehat {\cal R}}&=a_k^2\left(1+\displaystyle\frac{\sigma^2}{X_k^2}-3\displaystyle\frac{\sigma^4}{X_k^4}+360\displaystyle\frac{\sigma^6}{X_k^6}+4200\displaystyle\frac{\sigma^8}{X_k^8}\right) \\
&\text{\hspace{1cm}}-\log a_k^2X_k^2+\log S_k^2-1.
\label{IS_II_risk_estimate}
\end{align}
Optimizing ${\widehat {\cal R}}$ with respect to $a_k\in \left[0,1\right]$, following \eqref{lagrange_derivative} -- \eqref{second_order}, we get
\begin{align*}
a_k&=\min \left\{ 1,\left(1+\displaystyle\frac{\sigma^2}{X_k^2}-3\displaystyle\frac{\sigma^4}{X_k^4}+360\displaystyle\frac{\sigma^6}{X_k^6}+4200\displaystyle\frac{\sigma^8}{X_k^8} \right)^{-\frac{1}{2}}\right\},
\end{align*}
\begin{align*}
\Rightarrow \widehat{S}_k&=\underbrace{\min \left\{ 1,\left(1+\displaystyle\frac{1}{\xi_k}-\displaystyle\frac{3}{\xi_k^2}+\displaystyle\frac{360}{\xi_k^3}+\displaystyle\frac{4200}{\xi_k^4} \right) ^{-\frac{1}{2}}\right\} }_{a_{\text{IS-II}}(\xi_k)}X_k,
\end{align*}
which we shall refer to as the IS-II estimator. A comparison of the IS and IS-II shrinkage functions is shown in Figure~\ref{fig:ISandIS_2}(b). At low SNRs, IS attenuates more than IS-II.
\subsection{Hyperbolic Cosine Distortion Measure (COSH)}
\indent A symmetrized version of the IS distortion results in the  {\it cosh} measure:
\begin{equation}
d(S_k,\widehat{S}_k)=\cosh \left(\log \displaystyle\frac{S_k}{\widehat{S}_k}\right)-1=\displaystyle\frac{1}{2}\left[\displaystyle\frac{S_k}{\widehat{S}_k}+\displaystyle\frac{\widehat{S}_k}{S_k} \right]-1.
\nonumber
\end{equation}
The corresponding risk is $\mathcal{R}=\displaystyle\displaystyle\frac{1}{2}\mathcal{E}\left\{\displaystyle\frac{S_k}{\widehat{S}_k}\right\}+\displaystyle\frac{1}{2}\mathcal{E}\left\{\displaystyle\frac{\widehat{S}_k}{S_k}\right\}-1.$
Substituting $\widehat{S}_k=a_kX_k$, and invoking \emph{Lemma} \ref{truncated_Gaussian_recursive_lemma2}, the expectations turn out to be
\begin{align*}
&\mathcal{E}\left\{\displaystyle\frac{S_k}{a_kX_k}\right\}=\mathcal{E}\left\{\displaystyle\frac{1}{a_k}+\displaystyle\frac{\sigma^2}{a_kX_k^2}\right\},\,\,\,\mbox{and}\\
&\mathcal{E}\left\{\displaystyle\frac{a_kX_k}{S_k}\right\}=\mathcal{E}\left\{a_k\left(1+60\displaystyle\frac{\sigma^6}{X_k^6}+840\displaystyle\frac{\sigma^8}{X_k^8}\right)\right\}.
\end{align*}
Correspondingly, the unbiased risk estimator is
\begin{equation}
{\widehat {\cal R}}=\displaystyle\frac{1}{2}\left(\displaystyle\frac{1}{a_k}+\displaystyle\frac{\sigma^2}{a_kX_k^2}+a_k\left(1+60\displaystyle\frac{\sigma^6}{X_k^6}+840\displaystyle\frac{\sigma^8}{X_k^8}\right)\right)-1.
\end{equation}
Optimizing ${\widehat {\cal R}}$ with respect to $a_k\in\left[0,1\right]$ following \eqref{lagrange_derivative}--\eqref{second_order} gives
\begin{align*}
\widehat{S}_k&=\underbrace{\min  \left\{1,\sqrt{\displaystyle\frac{1+\displaystyle\frac{1}{\xi_k}}{1+60\displaystyle\frac{1}{\xi_k^3}+840\displaystyle\frac{1}{\xi_k^4}}} \right\}  }_{a_{\text{COSH}}(\xi_k)}X_k.
\end{align*}
Figure~\ref{fig:ISandIS_2}(c) shows a comparison of $a_{\text{MSE}}$ and $a_{\text{COSH}}$ versus $\xi_k$ -- it is clear that $a_{\text{COSH}}$ results in a higher attenuation than $a_{\text{MSE}}$.
\subsection{Weighted COSH Distortion  (WCOSH)}
\indent Similar to the weighted Euclidean measure, we consider the weighted cosh distortion:
\begin{equation}
d(S_k, \widehat{S}_k)=\left(\frac{1}{2}\left[\displaystyle\frac{S_k}{\widehat{S}_k}+\displaystyle\frac{\widehat{S}_k}{S_k}\right]-1\right)\displaystyle\frac{1}{S_k}.
\end{equation}
The risk estimator can be computed as in the case of the cosh measure (cf. Appendix~\ref{app:wcosh}). The optimal estimate is given by 
\begin{align}
\widehat{S}_k&=\underbrace{\min \left\{1,\left(1-\displaystyle\frac{1}{\xi_k}+\displaystyle\frac{3}{\xi_k^2}+\displaystyle\frac{420}{\xi_k^3}+\displaystyle\frac{8400}{\xi_k^4}\right)^{-\frac{1}{2}}\right\}}_{a_{\text{WCOSH}}(\xi_k)}X_k.\nonumber
\end{align}
A comparison of $a_{\text{COSH}}$ and $a_{\text{WCOSH}}$ shown in Figure~\ref{fig:ISandIS_2}(d) indicates that the latter results in higher noise attenuation at high noise levels.
\section{Implementation Details}
\indent For experimental validation, we use the clean speech and nonstationary noise data from the NOIZEUS database~\cite{Loizou_SE_comparison} ($8$\,kHz sampling frequency and $16$-bit quantization). Noisy speech is generated by adding noise at a desired global SNR. The noisy speech signal is processed in the DCT domain on a frame-by-frame basis. The frame size is $40$ ms, the window function is Hamming, and the overlap between consecutive frames is $75\%$. We experiment with both stationary and nonstationary noise types. The developments in the PROSE framework assumed knowledge of the noise variance in each bin of a given frame. In practice, the noise variance has to be estimated. For this purpose, we use the likelihood-ratio-test-based voice activity detector (VAD) of Sohn et al. \cite{Sohn}, which was also employed in the extensive comparisons reported by Hu and Loizou \cite{Loizou_SE_comparison}. The inverse a posteriori SNR is then estimated using the recursive formula
\begin{equation}
\frac{1}{\widehat{\xi}_k\left(i\right)}=\beta \frac{\widehat {\sigma}^{2}_k\left(i\right)}{X^{2}_{k}\left(i\right)} + \left(1-\beta\right) \text{max} \left( 1- \frac{\widehat{S}^{2}_k\left(i-1\right)}{X^{2}_{k}\left(i-1\right)},0\right),
\end{equation}
where $k$ denotes the index of the DCT coefficient, $i$ denotes the frame index, and $\beta$ is a smoothing parameter (set to $0.98$ in our experiments), $\widehat{\sigma}^{2}_k(i)$ is the estimate of the noise variance at the $k^{th}$ DCT coefficient in the $i^{th}$ frame, ${\widehat{S}_k\left(i-1\right)}$ is the denoised $k^{th}$ DCT coefficient in the $(i-1)^{th}$ frame. The first few frames are assumed to contain only noise. For the first frame, $\beta$ is set to unity, and the $k^{th}$ noise variance is estimated by averaging the noise variances of the $k^{th}$ coefficient in the first ten frames. The noise variance is updated in the noise-only frames as classified by the VAD following the recursion:
$$
\widehat{\sigma}^{2}_{k}\left(i \right) =
\begin{cases}
&\eta \widehat{\sigma}^{2}_{k}\left(i -1\right)+\left(1-\eta\right) X^{2}_{k}\left(i\right), \text{under}\quad \mathcal{H}_0,\\
&\widehat{\sigma}^{2}_{k}(i-1), \text{under}\quad \mathcal{H}_1,\\
\end{cases}
$$
where $\mathcal{H}_0$ and $\mathcal{H}_1$ are the null hypothesis (noise-only) and the alternative hypothesis (signal + noise), respectively. The value of $\eta = 0.98$ following \cite{Loizou_SE_comparison}. The noisy speech signals are denoised using shrinkage functions corresponding to various distortion measures considered in this paper. We set $\alpha = 1.75$ uniformly across all measures in the PROSE framework as it was found to give better results than $\alpha=1$. The enhanced speech frames are combined using overlap-and-add synthesis to result in the denoised speech signal.
\section{Experimental Results}
\label{sec:exptres}
\indent The denoising performance of PROSE estimators is evaluated using both objective measures and subjective listening tests. For benchmarking, we use three algorithms:  (i) Wiener filter  method, where a priori SNR is estimated using the decision-directed approach (WFIL)~\cite{Scalart}; (ii) a Bayesian estimator for short-time log-spectral amplitude (LSA)~\cite{Ephraim-Malah2};  and (iii) Bayesian non-negative matrix factorization algorithm (BNMF)~\cite{Mohammadiha}\footnote{Matlab implementation available online: \url{https://www.uni-oldenburg.de/en/mediphysics-acoustics/sigproc/staff/nasser-mohammadiha/matlab-codes/}}, which gives an MMSE estimate of the clean speech DFT magnitude. In the BNMF approach, the speech bases are learned offline and the noise bases are learned online. The WFIL and LSA algorithms were shown to perform better than the other techniques in an extensive evaluation carried out by Hu and Loizou \cite{Loizou_SE_comparison}\footnote{Matlab implementations of algorithms used in \cite{Loizou_SE_comparison} are available with \cite{PLoizou}. For performance comparisons, we have used Matlab codes given with \cite{PLoizou}. }. In terms of intelligibility also, these techniques were shown to be superior (Ch. 11, pp. 567 of \cite{PLoizou}). BNMF has been shown to be the best among the NMF approaches for speech enhancement.

\subsection{Objective Evaluation}
\indent The denoising performance is quantified in terms of: (i) global SNR; (ii) average segmental SNR (SSNR), a local metric, which is the average of  the SNRs computed over short segments; (iii) perceptual evaluation of speech quality (PESQ)~\cite{pesq}, which is an objective score for assessing end-to-end speech quality in narrowband telephone networks and speech codecs, described in ITU-T Recommendation P.862~\cite{pesq}; and (iv)  short-time objective intelligibility measure (STOI) \cite{stoi}\footnote{Matlab implementation available at \url{http://siplab.tudelft.nl/}}, which ranges from $0$ to $1$ and reflects the correlation between short-time temporal envelope of clean speech and denoised speech. It has been shown to correlate highly with the intelligibility scores obtained through listening tests~\cite{stoi}. Measures (i), (ii), and (iii) assess the speech quality, whereas (iv) measures the  intelligibility. For SNR, SSNR, and PESQ, we report the {\it gains} achieved by denoising. The SNR gain is the difference between the output and input SNR values. The PESQ gain and the SSNR gain are also computed similarly.\\
\indent We consider all $30$ speech files from the NOIZEUS database~\cite{Loizou_SE_comparison}, and three noise types -- white Gaussian noise, train noise, and street noise, the last two being real-world nonstationary noise types. The results presented here are obtained after averaging over the entire database for $50$ noise realizations.
\subsubsection{White Gaussian noise}
\indent Figure~\ref{figure:objective_scores_white_noise}(a) shows the SNR gain of different algorithms in white Gaussian noise. For input SNRs in the range of $-5$ dB to $20$ dB,  PROSE with cosh and log-MSE distortions results in a higher SNR gain compared with the other algorithms, followed by IS and WE. As the input SNR increases, the margin of improvement offered by PROSE techniques over BNMF, LSA, and WFIL increases. The segmental SNR (SSNR) gain is shown in Figure~\ref{figure:objective_scores_white_noise}(b). The trends of SSNR gain and SNR gain are similar. The gains are negative for the competing techniques at high SNRs. A negative gain indicates that the output SSNR or PESQ score is worse than that of the input. The PESQ scores are shown in  Figure~\ref{figure:objective_scores_white_noise}(c).  For input SNR in the range $-5$ dB to $20$ dB, the  PESQ gain is maximum for WE, log MSE, and IS.  Below $5$ dB input SNR,  BNMF also shows a high PESQ gain. We observe that, within the PROSE framework, PESQ gains are higher for perceptually motivated distortions than the MSE. The denoising performance of PROSE based on WE, log MSE, COSH, and IS distortions is better in terms of SNR, SSNR, and PESQ compared with the benchmark techniques.\begin{figure}[t]
\centering
$\begin{array}{c}
\hspace{-0.3cm}\includegraphics[width=9.5 cm]{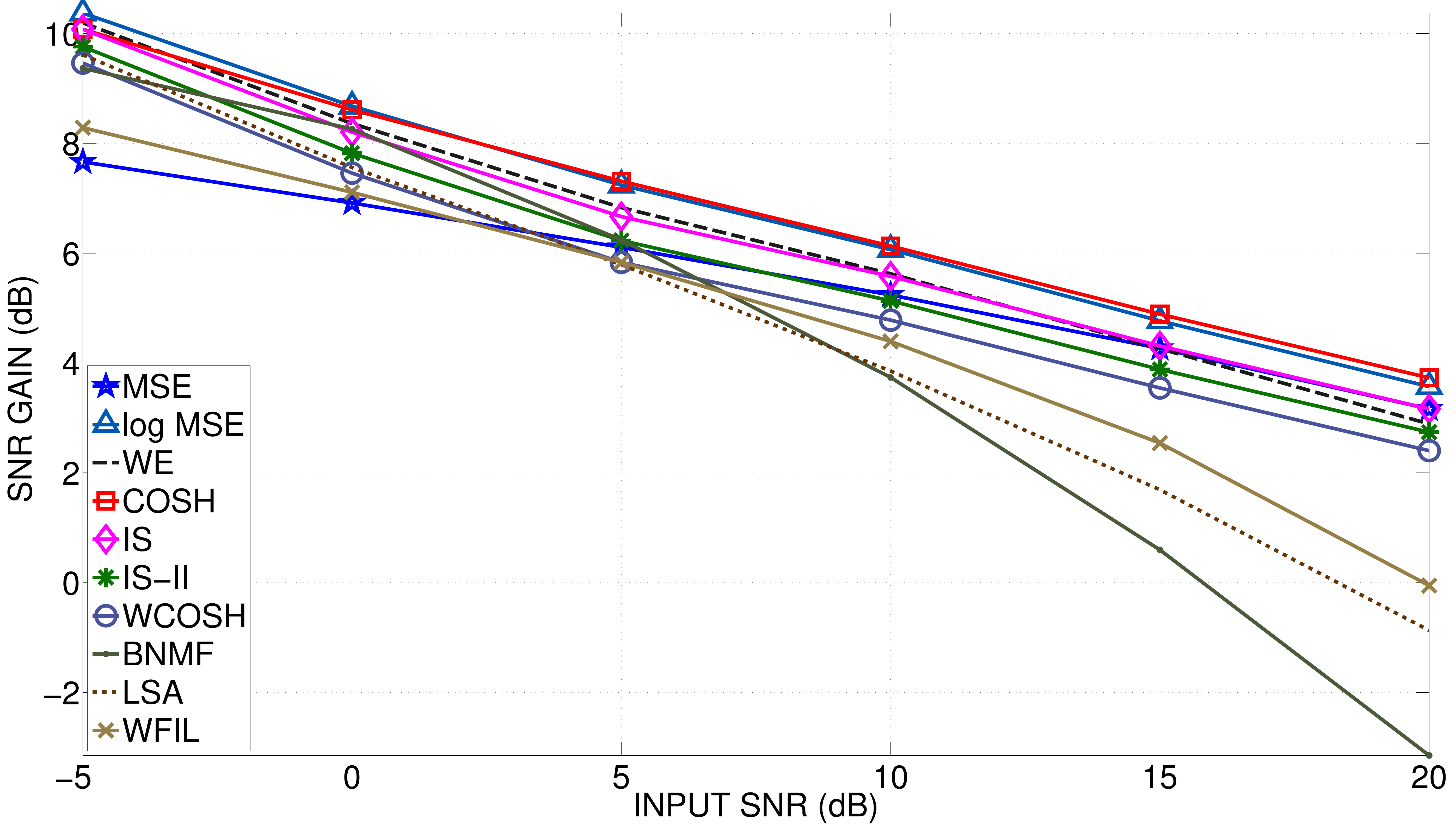}\\
\text{(a)}\\
\hspace{-0.3cm}\includegraphics[width=9.5 cm]{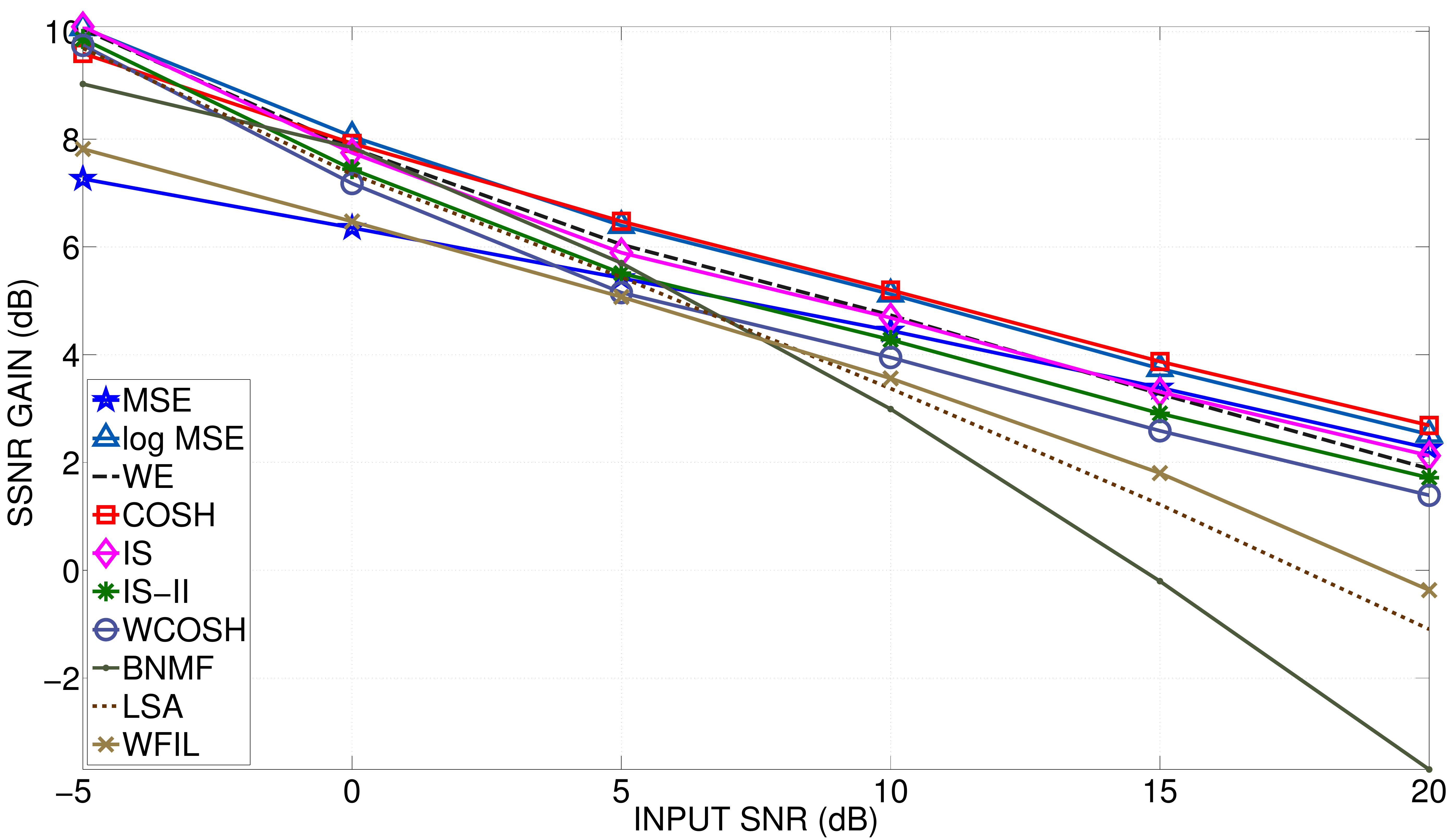}\\
\text{(b)}\\
\hspace{-0.3cm}\includegraphics[width=9.5 cm]{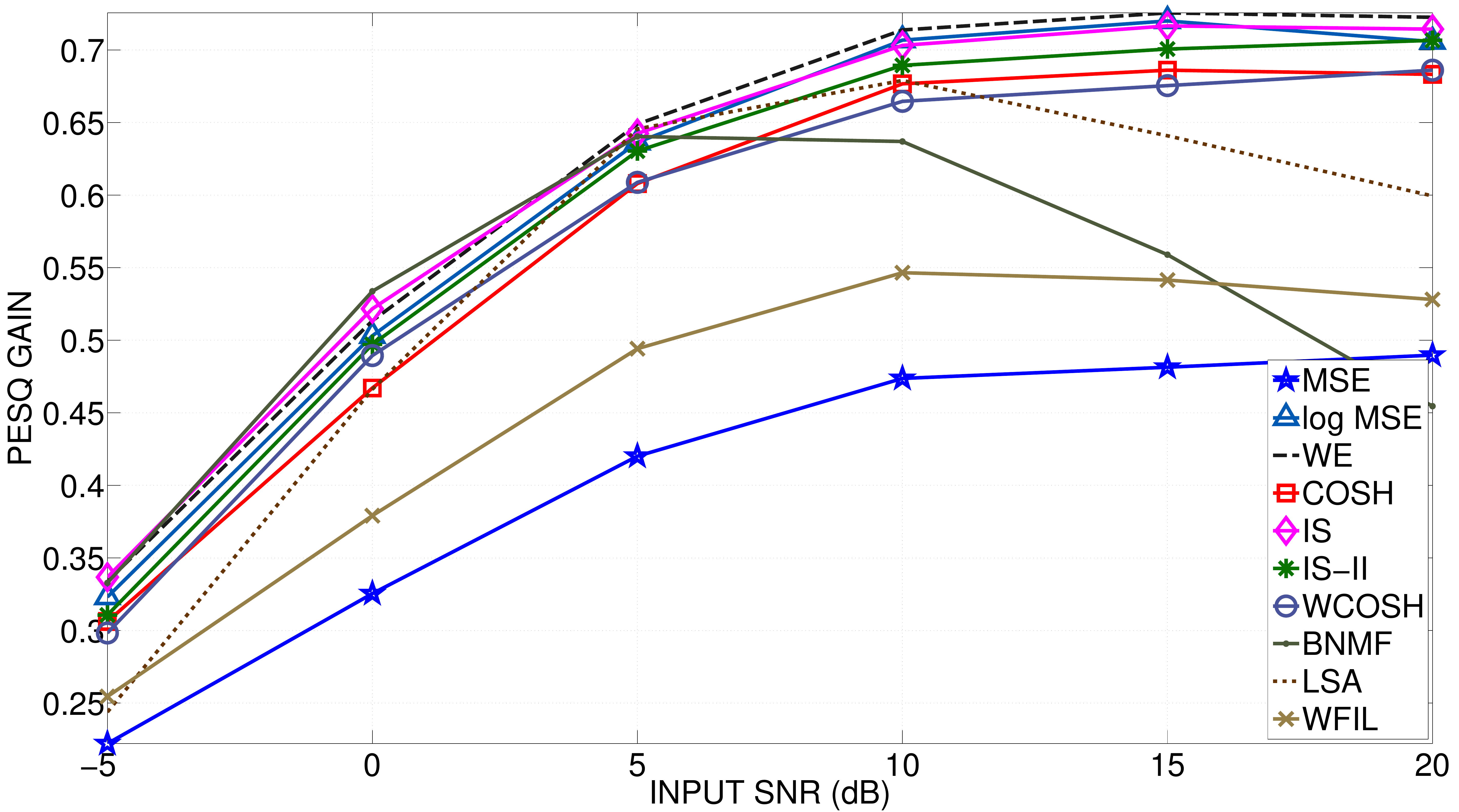}\\
\text{(c)}
\end{array}$
\caption{\small [Color online] Denoising performance in white Gaussian noise: (a) SNR gain; (b) SSNR gain; and (c) PESQ gain.}
\label{figure:objective_scores_white_noise}
\end{figure}
\begin{figure}[t]
\centering
$\begin{array}{c}
\hspace{-0.3cm}\includegraphics[width=9.5 cm]{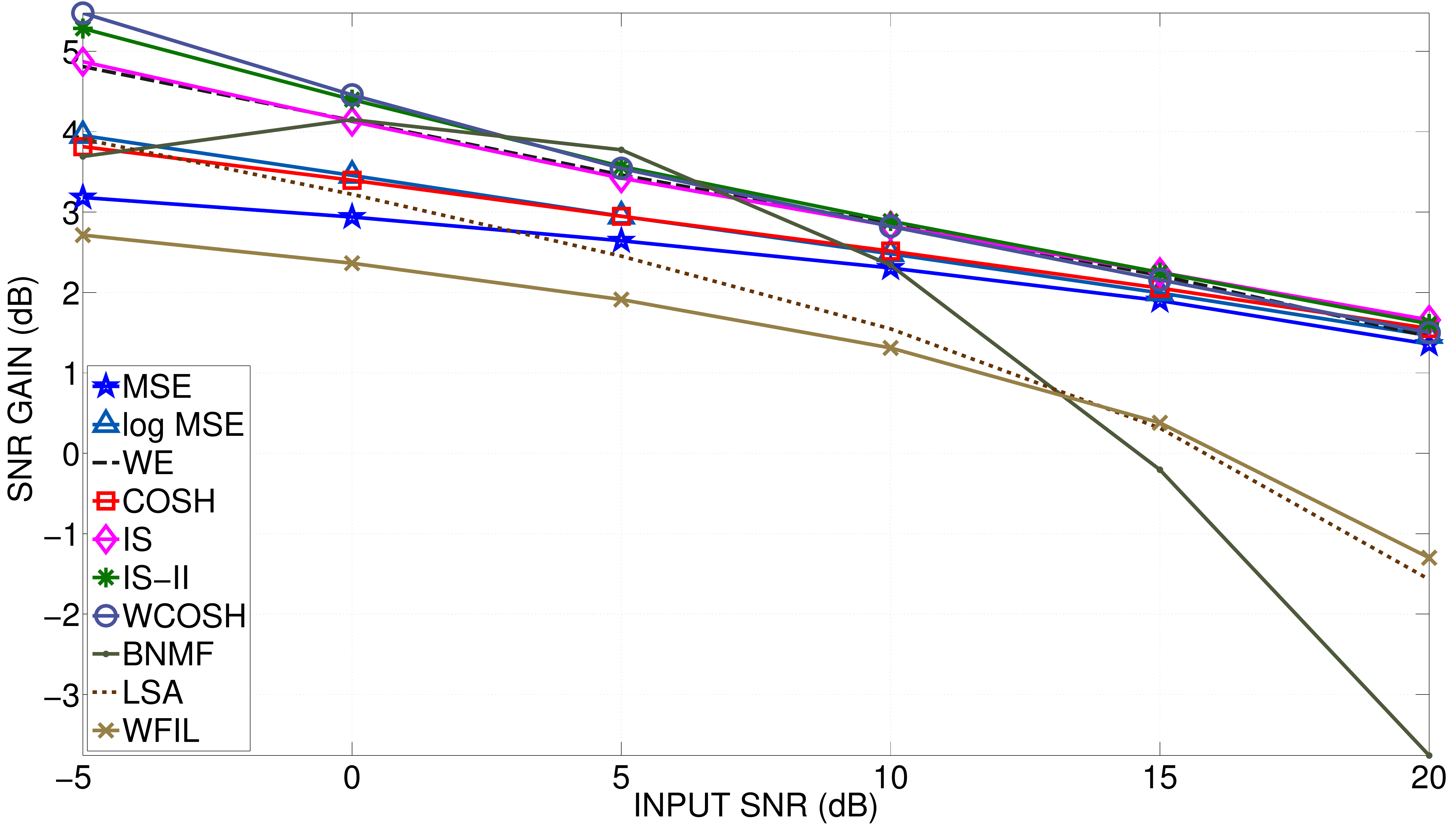}\\
\text{(a)}\\
\hspace{-0.3cm}\includegraphics[width=9.5 cm]{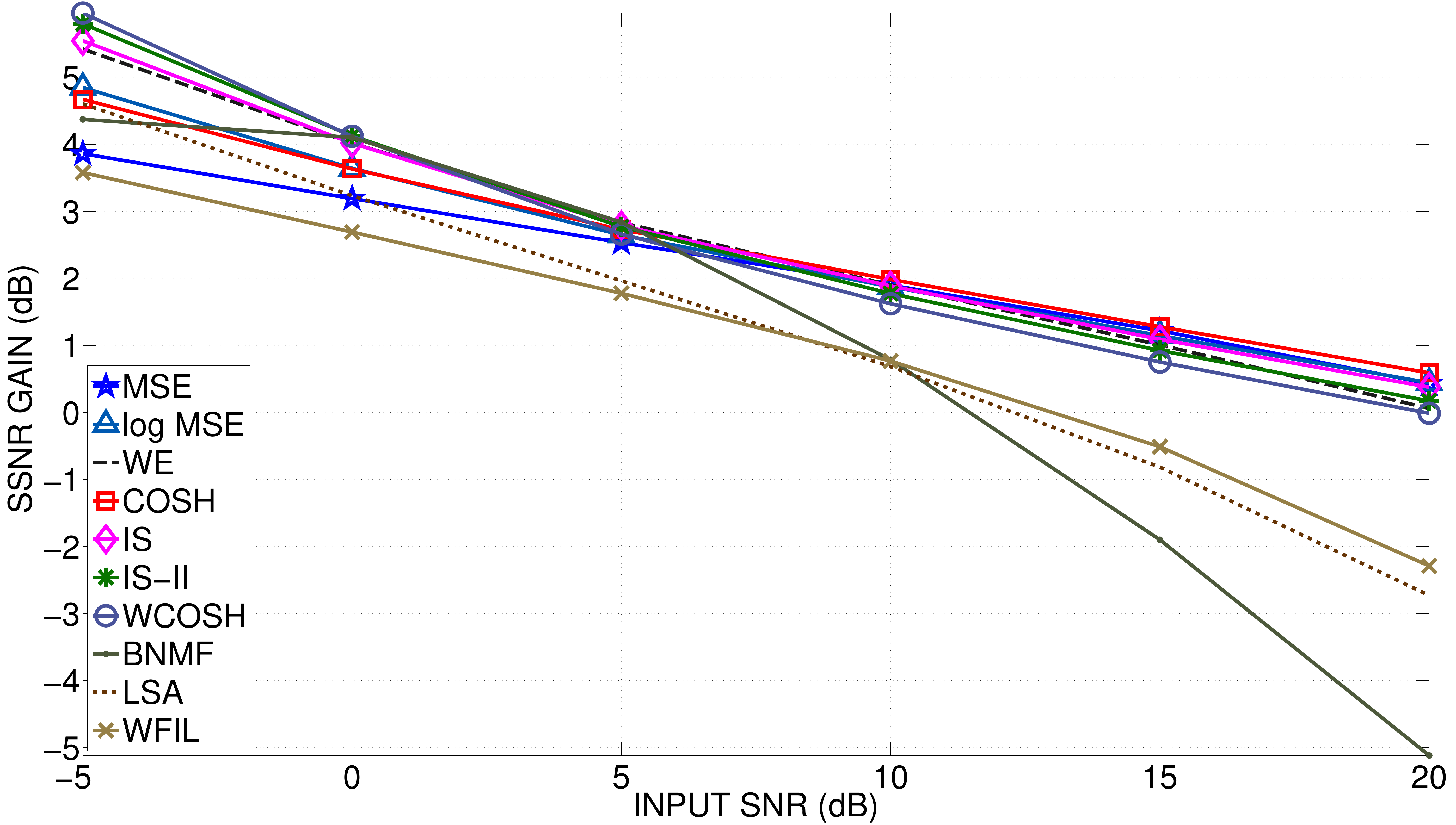}\\
\text{(b)}\\
\hspace{-0.3cm}\includegraphics[width=9.5 cm]{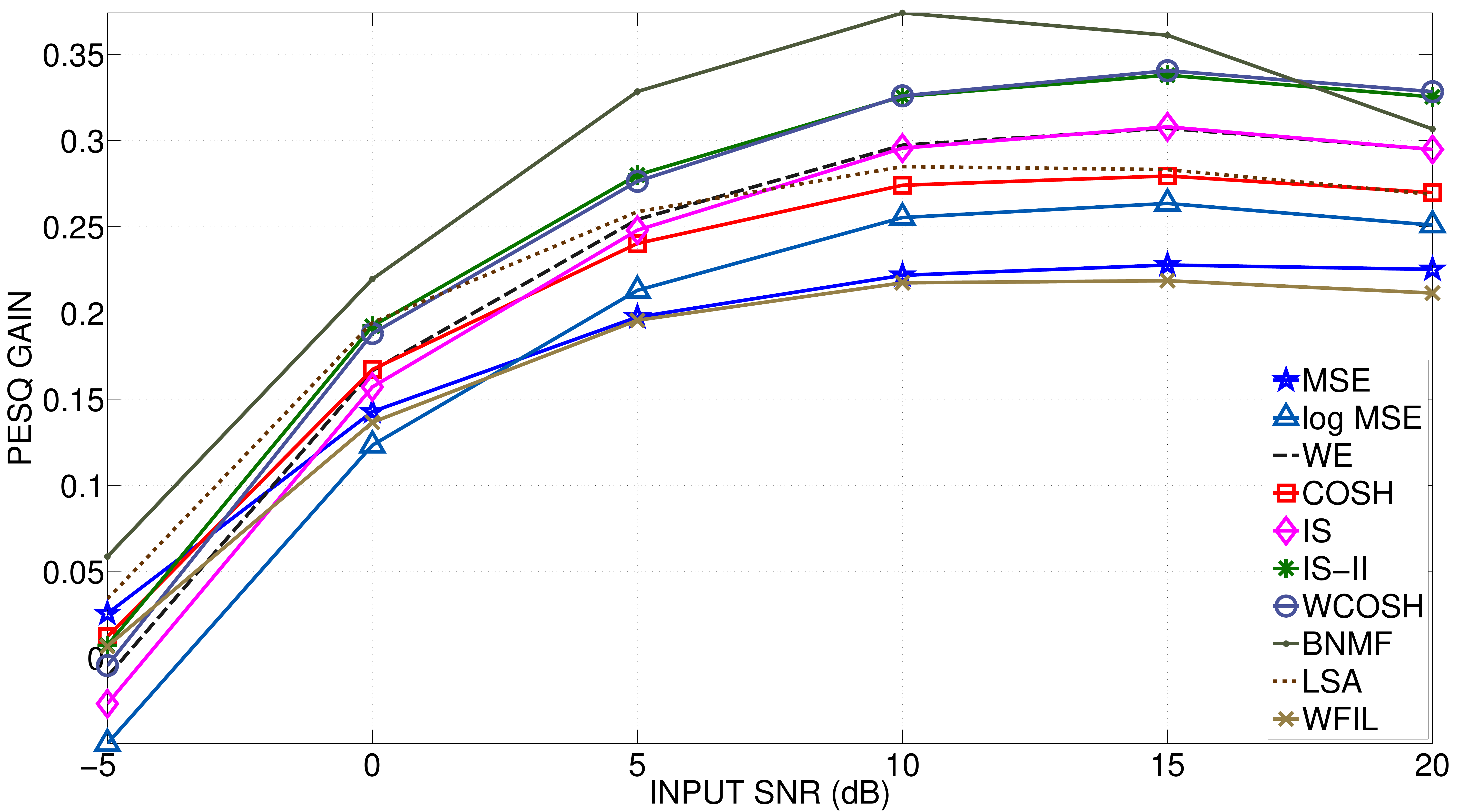}\\
\text{(c)}
\end{array}$
\caption{\small [Color online] A comparison of denoising performance in street noise: (a) SNR gain; (b) SSNR gain; and (c) PESQ gain.}
\label{figure:objective_scores_street_noise}
\end{figure}
\begin{figure}[t]
\centering
$\begin{array}{c}
\hspace{-0.3cm}\includegraphics[width=9.5 cm]{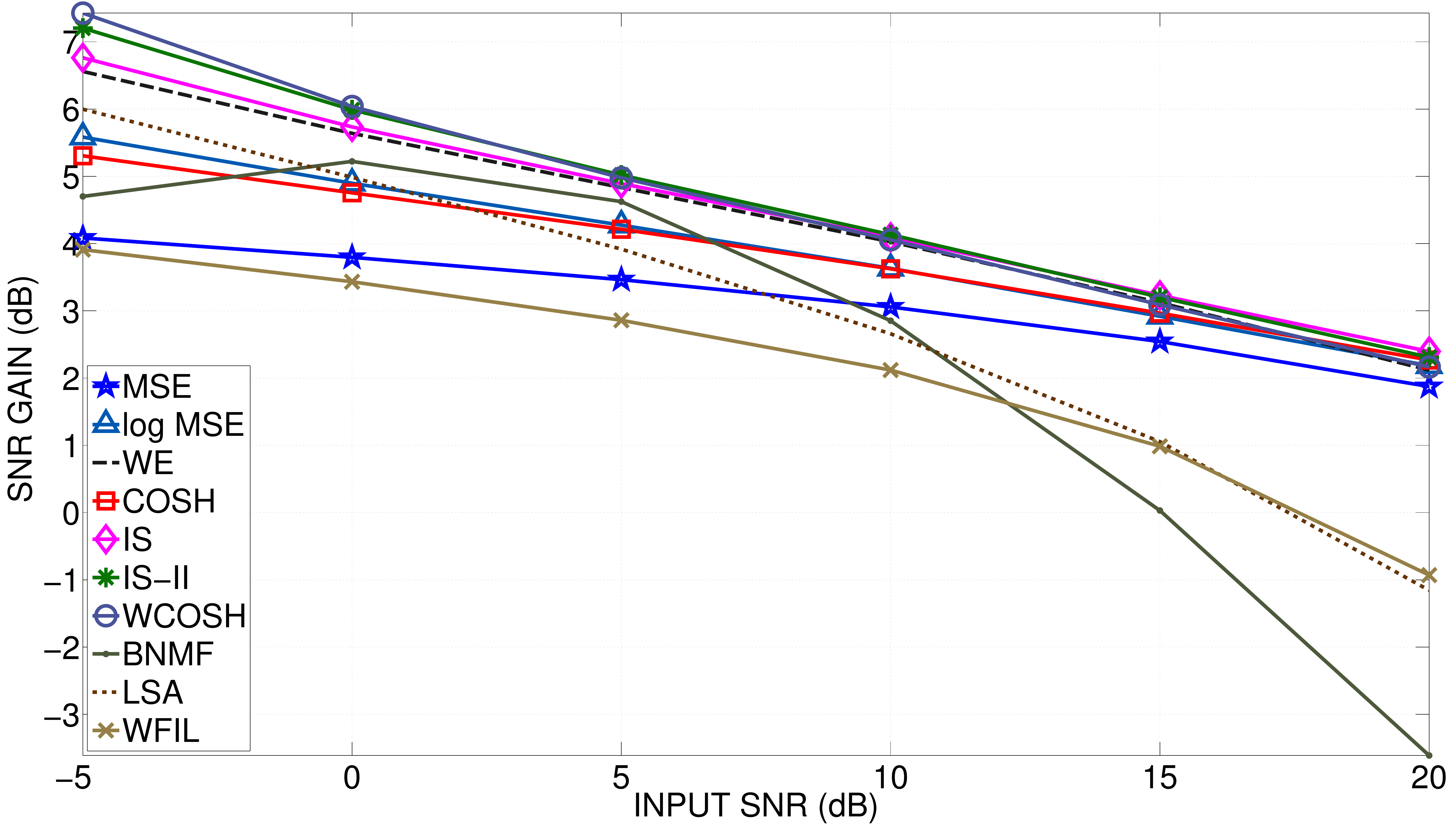}\\
\text{(a)}\\
\hspace{-0.3cm}\includegraphics[width=9.5 cm]{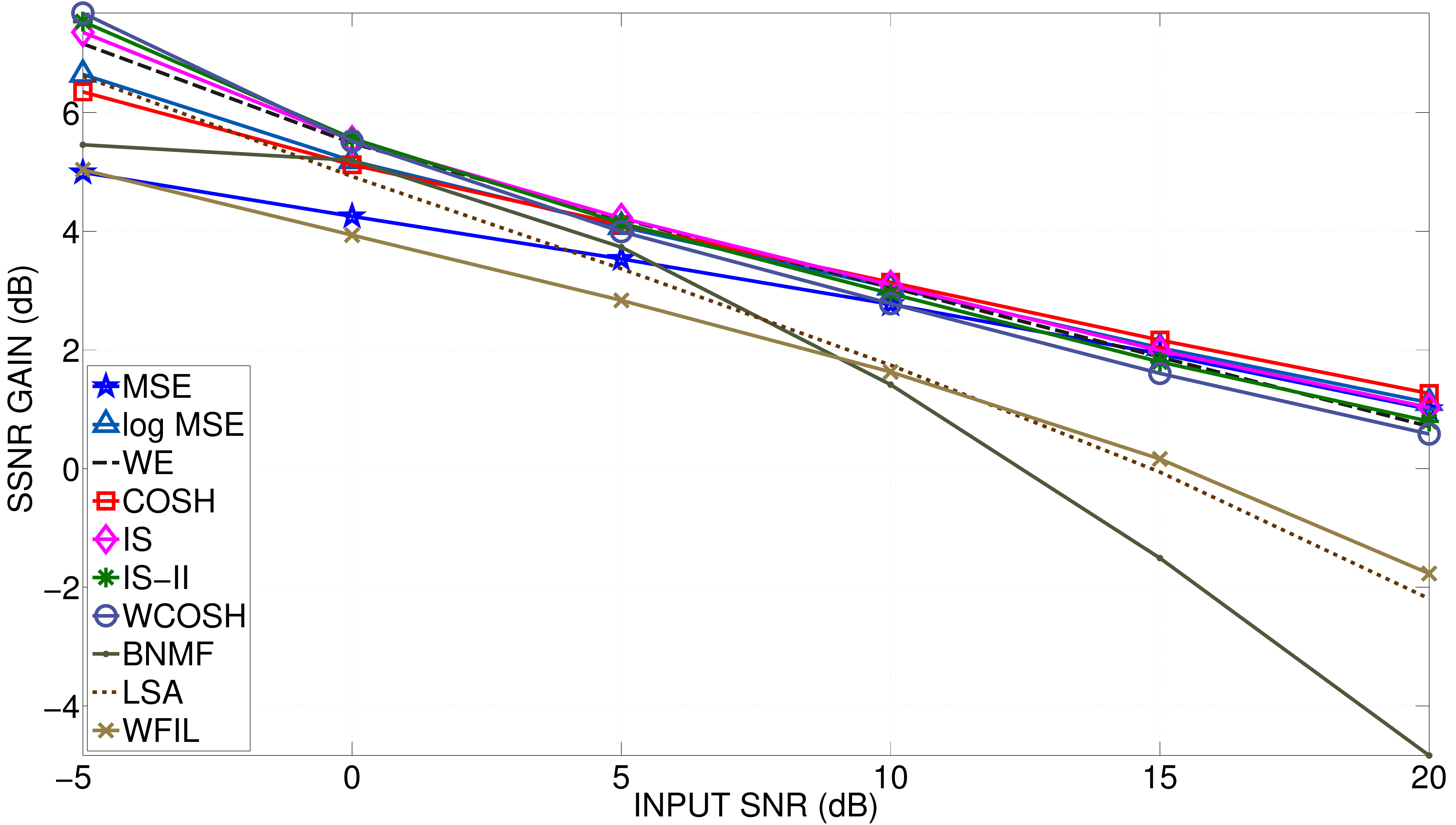}\\
\text{(b)}\\
\hspace{-0.3cm}\includegraphics[width=9.5 cm]{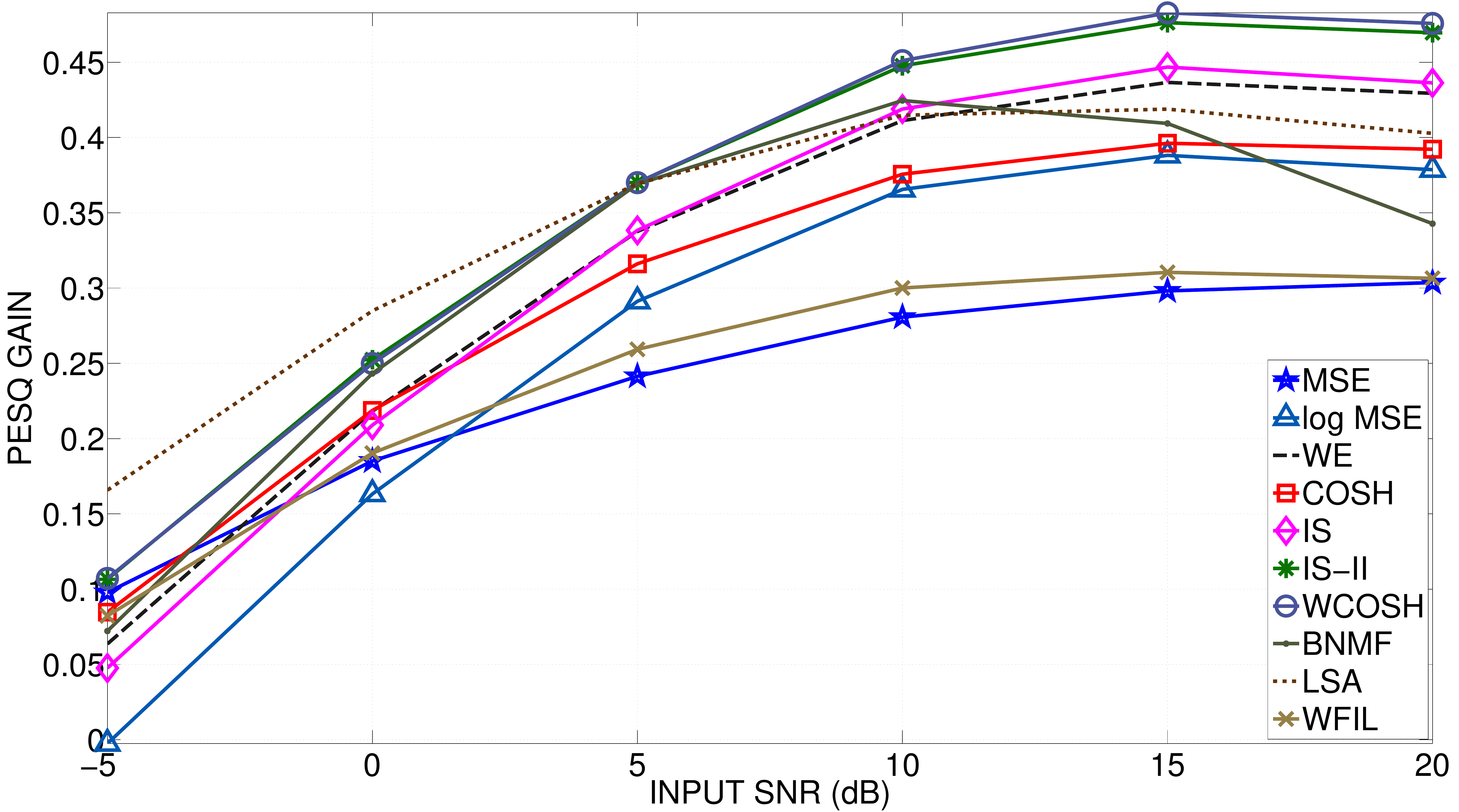}\\
\text{(c)}
\end{array}$
\caption{\small [Color online]  A comparison of denoising performance in train noise: (a) SNR gain; (b) SSNR gain; and (c) PESQ gain.}
\label{figure:objective_scores_train_noise}
\end{figure}
\indent The STOI scores are shown in Table~\ref{STOI_SCORES_WHITE_NOISE}. We observe that for SNRs below $5$ dB, BNMF has higher STOI scores than the other algorithms. For SNRs greater than $ 5$ dB, the PROSE framework results in a higher STOI.
\begin{table}[t]
\centering
\caption{\small Comparison of denoising performance in terms of STOI scores for different input SNRs (white Gaussian noise).}
\begin{tabular}{ p{1.3cm} | p{0.75cm}| p{0.75cm}|p{0.75cm} |p{0.75cm}| p{0.75cm}|p{0.75cm}}
\hline
\hline
\small {SNR (dB)}  & $-5$  &  $0$   &  $5$      &  $10$    &   $15$    &  $20$    \\
\hline
\hline
    \small{Input}&$ 0.560 $&$ 0.663 $&$   0.757 $&$   0.835 $&$   0.894 $&$   0.936$  \\ \hline
     \small{MSE}  &$ 0.571 $&$ 0.686 $&$   0.785 $&$   0.861 $&$   0.916 $&$   0.952$  \\ \hline
\small{log MSE} &$ 0.534 $&$ 0.659 $&$   0.772 $&$   0.863 $&$   0.919 $&$   \textbf{0.956}$  \\ \hline
    \small{WE}     &$ 0.546 $&$ 0.664 $&$   0.773 $&$   0.863 $&$   \textbf{0.920} $&$\textbf{0.956}$  \\ \hline
    \small{COSH}&$ 0.559 $&$ 0.677 $&$   0.782 $&$    \textbf{0.865} $&$    \textbf{0.920} $&$    \textbf{0.956}$  \\ \hline
    \small{ IS}      &$ 0.543 $&$ 0.658 $&$   0.767 $&$   0.861 $&$   0.919 $&$   0.955$  \\ \hline
    \small{IS-II}    &$ 0.553 $&$ 0.664 $&$   0.768 $&$   0.860 $&$   0.918 $&$   0.955$  \\ \hline
   \small{WCOSH}&$ 0.553 $&$ 0.660 $&$   0.763 $&$   0.858 $&$   0.916 $&$   0.953$  \\ \hline
    \small{BNMF}&$ \textbf{0.586} $&$ \textbf{0.707} $&$  \textbf{0.802} $&$   0.864 $&$   0.901 $&$   0.923$  \\ \hline
    \small{LSA}&$ 0.565 $&$ 0.664 $&$   0.764 $&$   0.843 $&$   0.898 $&$   0.939$  \\ \hline
    \small{WFIL}&$ 0.570 $&$ 0.682 $&$   0.781 $&$   0.858 $&$   0.911 $&$   0.948$  \\ \hline
 \hline
\end{tabular}
\label{STOI_SCORES_WHITE_NOISE}
\end{table}
\subsubsection{Street noise}  The denoising results are presented in Figure~\ref{figure:objective_scores_street_noise}. The PROSE framework based on WCOSH, WE, IS and IS-II distortions results in a higher SNR gain and SSNR gain than the other algorithms. (cf. Figures~\ref{figure:objective_scores_street_noise}(a) and (b)). Similar to the white noise scenario, the margin of improvement in case of PROSE increases with increase in input SNR than the benchmark algorithms. The PESQ gain shows a slightly different trend (Figure~\ref{figure:objective_scores_street_noise}(c)). BNMF gives a marginally higher PESQ gain (about 0.05 higher) than PROSE. This is attributed to the training phase in BNMF, which is particularly advantageous in low SNR and nonstationary noise conditions. Within the PROSE family of denoisers, distortion measures other than the MSE result in a higher PESQ score. Table~\ref{STOI_SCORES_STREET_NOISE} shows the STOI scores. We observe that, for SNRs greater than $5$ dB, PROSE algorithms yields a higher STOI score than BNMF, WFIL, and LSA.

\begin{table}[t]
\centering
\caption{\small Comparison of denoising performance in terms of STOI scores for different input SNRs (street noise).}
\begin{tabular}{ p{1.3cm} | p{0.7cm}| p{0.7cm}|p{0.7cm} |p{0.7cm}| p{0.7cm}|p{0.7cm}}
\hline
\hline
\small {SNR (dB)}  & $-5$  &  $0$   &  $5$      &  $10$    &   $15$    &  $20$    \\
\hline
\hline
        \small{Input}&$ 0.540    $&$ 0.659    $&$ 0.774    $&$ 0.867    $&$ 0.930    $&$ 0.967$ \\ \hline
           \small{MSE}&$ 0.536    $&$ 0.669    $&$  \textbf{0.790}    $&$  \textbf{0.881}    $&$ 0.939    $&$  \textbf{0.972}$ \\ \hline
     \small{log MSE}&$ 0.512    $&$ 0.650    $&$ 0.778    $&$ 0.878    $&$ 0.939   $&$ 0.971$ \\ \hline
             \small{WE}&$ 0.515    $&$ 0.653    $&$ 0.780    $&$ 0.879    $&$  \textbf{0.940}    $&$  \textbf{0.972}$ \\ \hline
        \small{COSH}&$ 0.524    $&$ 0.663    $&$ 0.785    $&$  \textbf{0.881}    $&$  \textbf{0.940}    $&$  \textbf{0.972}$ \\ \hline
               \small{IS}&$ 0.510    $&$ 0.647    $&$ 0.775    $&$ 0.876    $&$ 0.939    $&$ 0.971$ \\ \hline
            \small{IS-II}&$ 0.515    $&$ 0.650    $&$ 0.776    $&$ 0.876    $&$ 0.939    $&$ 0.971$ \\ \hline
     \small{WCOSH}&$ 0.511    $&$ 0.646    $&$ 0.772    $&$ 0.873    $&$ 0.937    $&$ 0.971$ \\ \hline
        \small{BNMF}&$  \textbf{0.543}    $&$  \textbf{0.672}    $&$ 0.786    $&$ 0.865   $&$ 0.910    $&$ 0.932$ \\ \hline
         \small{LSA}&$ 0.512    $&$ 0.637    $&$ 0.757    $&$ 0.854    $&$ 0.921    $&$ 0.960$ \\ \hline
        \small{WFIL}&$   0.528    $&$ 0.658    $&$ 0.777    $&$ 0.871    $&$ 0.932    $&$ 0.967$ \\ \hline
 \hline
\end{tabular}
\label{STOI_SCORES_STREET_NOISE}
\end{table}
\begin{table}[t]
\centering
\caption{\small Comparison of denoising performance in terms of STOI scores for various input SNRs (train noise).}
\begin{tabular}{ p{1.3cm} | p{0.7cm}| p{0.7cm}|p{0.7cm} |p{0.7cm}| p{0.7cm}|p{0.7cm}}
\hline
\hline
\small {SNR (dB)}  & $-5$  &  $0$   &  $5$      &  $10$    &   $15$    &  $20$    \\
\hline
\hline
    \small{Input}&$0.554 $&$     0.681 $&$     0.796 $&$     0.883 $&$    0.942 $&$     0.975$ \\ \hline
       \small{MSE}&$\textbf{0.552} $&$      \textbf{0.695} $&$      \textbf{0.813} $&$     0.896 $&$     \textbf{0.949} $&$      \textbf{0.978}$ \\ \hline
 \small{log MSE}&$0.516 $&$     0.675 $&$     0.808 $&$     0.896 $&$    0.948 $&$     0.977$ \\ \hline
         \small{WE}&$0.522 $&$     0.678 $&$     0.809 $&$     0.897 $&$     \textbf{0.949} $&$     0.977$ \\ \hline
    \small{COSH}&$0.536 $&$     0.687 $&$      \textbf{0.813} $&$      \textbf{0.898} $&$     \textbf{0.949} $&$      \textbf{0.978}$ \\ \hline
           \small{IS}&$0.513 $&$     0.670 $&$     0.805 $&$     0.896 $&$    0.948 $&$     0.977$ \\ \hline
        \small{IS-II}&$0.521 $&$     0.674 $&$     0.806 $&$     0.896 $&$    0.948 $&$     0.977$ \\ \hline
 \small{WCOSH}&$0.517 $&$     0.667 $&$     0.801 $&$     0.893 $&$    0.947 $&$     0.976$ \\ \hline
    \small{BNMF}&$0.549 $&$     0.688 $&$     0.804 $&$     0.878 $&$    0.917 $&$     0.936$ \\ \hline
    \small{LSA}&$0.522 $&$     0.661 $&$     0.784 $&$     0.874 $&$    0.932 $&$     0.966$ \\ \hline
   \small{WFIL}&$0.538 $&$     0.680 $&$     0.799 $&$     0.886 $&$    0.942 $&$     0.974$ \\ \hline
 \hline
\end{tabular}
\label{STOI_SCORES_TRAIN_NOISE}
\end{table}

\subsubsection{Train noise} Figure~\ref{figure:objective_scores_train_noise} compares the denoising performance in train noise. Similar to the street noise scenario, PROSE denoisers based on WCOSH, IS-II, and WE yield a higher SNR and SSNR gain than BNMF, WFIL and LSA (cf. Figures~\ref{figure:objective_scores_train_noise}(a) and (b)). Although the SNR and SSNR gain trends are similar to the street noise scenario, the margin of improvement is higher in train noise. This may be because street noise is comparatively more nonstationary than train noise, which may have resulted in less accurate estimates of the noise standard deviation. From Figure~\ref{figure:objective_scores_train_noise}(c), we observe that for input SNRs greater than $5$ dB, PROSE denoisers with WCOSH and IS-II measures are better than all the other methods. For input SNRs lower than $5$ dB, LSA gives a PESQ gain about 0.05 higher than the rest. PESQ gains are also higher in case of train noise than street noise. Table~\ref{STOI_SCORES_TRAIN_NOISE} shows the corresponding STOI scores and the trends are similar to the street noise scenario.\\
\indent To summarize, the PROSE denoisers based on WCOSH, IS-II, IS, and WE show consistently superior denoising performance in terms of SNR and SSNR compared with LSA, WFIL, and BNMF. The margin of improvement over LSA, WFIL, and BNMF also increases with SNR. Within the PROSE family of denoisers, the PESQ gains are higher for perceptual measures than MSE. For highly nonstationary noise types such as the street noise, the BNMF technique is marginally better at low SNRs, which is attributed to the training process. For SNRs below $0$ dB, the PESQ gain offered by PROSE is not significant and in the case of street noise, even negative. This is due to inaccuracy in estimating noise variance in highly nonstationary conditions such as street noise. For input SNRs greater than $5$ dB, the PROSE methodology and WFIL give consistently better STOI scores, unlike LSA and BNMF approaches. The entire repository of denoised speech files under various noise conditions is available online at: \href{http://spectrumee.wix.com/prose}{http://spectrumee.wix.com/prose}.
\subsection{Subjective Evaluation}
\indent We consider four speech files (two male and two female speakers) at SNRs 0, 10, and 20 dB. Fifteen listeners in the age group of $20-35$ years endowed with normal hearing were selected for the listening test. The subjects were given a Sennheiser HD650 headphone for listening, and were asked to rate the enhanced speech signal based on the ITU-T recommended P.835 scale~\cite{ITUscale}:\\
(i) Speech signal distortion (SIG); $1$: very distorted, $2$: fairly distorted, $3$: somewhat distorted, $4$: little distorted, $5$: not distorted; \\
(ii) Background intrusiveness (BAK); $1$:  very intrusive, $2$: somewhat intrusive, $3$:  noticeable but not intrusive, $4$:  somewhat noticeable, $5$:  not noticeable; and\\
(iii) Overall quality (OVRL); $1$: bad, $2$: poor, $3$: fair, $4$: good, $5$: excellent.\\
\indent The listening tests were conducted in three sessions, one for each noise type (white/street/train) at all SNRs. Following the ITU-T recommendation, each session is further divided into two subsessions. In the first one, two files (one male and one female speaker) are used with the rating order as SIG$-$BAK$-$OVRL and in the second one, the other two files (again one male and one female speaker) are presented and the order of rating is BAK$-$SIG$-$OVRL. The two-session test is done to suppress listener's bias. In each subsession, the listeners rate the denoising performance of all the algorithms. Thus, in one subsession, a listener had to grade a total of 2 (speakers) $\times 3$ (SNRs) $\times 11$ (algorithms) = 66 files. The subjects were given sufficient time to relax within and across subsessions, and across SNRs. The denoised speech files were presented in a random order, and the scores were derandomized accordingly before calculating the average scores. The listeners were given a token reward at the end of the listening test. Figure~\ref{fig:listening_test_white} shows the mean scores of SIG, BAK, and OVRL  for white noise, street noise, and train noise.\\
\indent We observe that, among the PROSE algorithms, WCOSH exhibits a consistently high denoising performance in terms of SIG,  BAK, and OVRL scores compared with the other algorithms. Among the techniques considered for benchmarking performance, LSA shows a consistently higher performance. The performances of LSA and PROSE with WCOSH are comparable. Also, PROSE consistently improves the BAK and OVRL scores compared with the noisy signal. Listening results reveal that, for all the algorithms considered, the extent of improvement in SIG scores is not significant compared with the improvement in BAK and OVRL scores. Within the PROSE family, WCOSH, IS-II, and IS show a high BAK and OVRL scores, whereas MSE results in a lower score, compared with the other algorithms. 
\begin{figure*}[t]
\centering
$
\begin{array}{cccc}
&\text{SNR = 0 dB} & \text{SNR = 10 dB} & \text{SNR = 20 dB}\\
\rot{\quad \quad \quad White noise}&\includegraphics[width=2.2in]{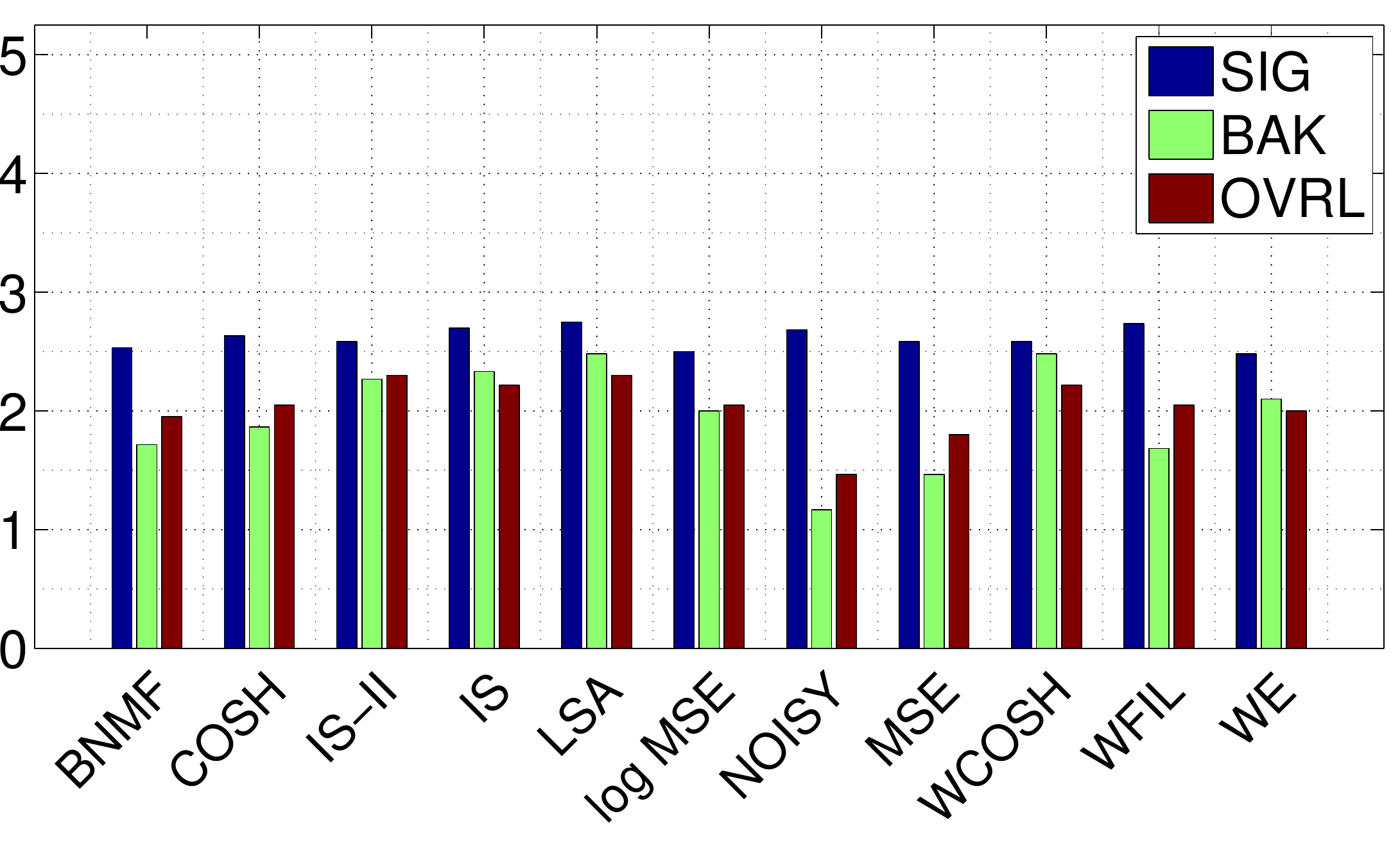} &
\includegraphics[width=2.2in]{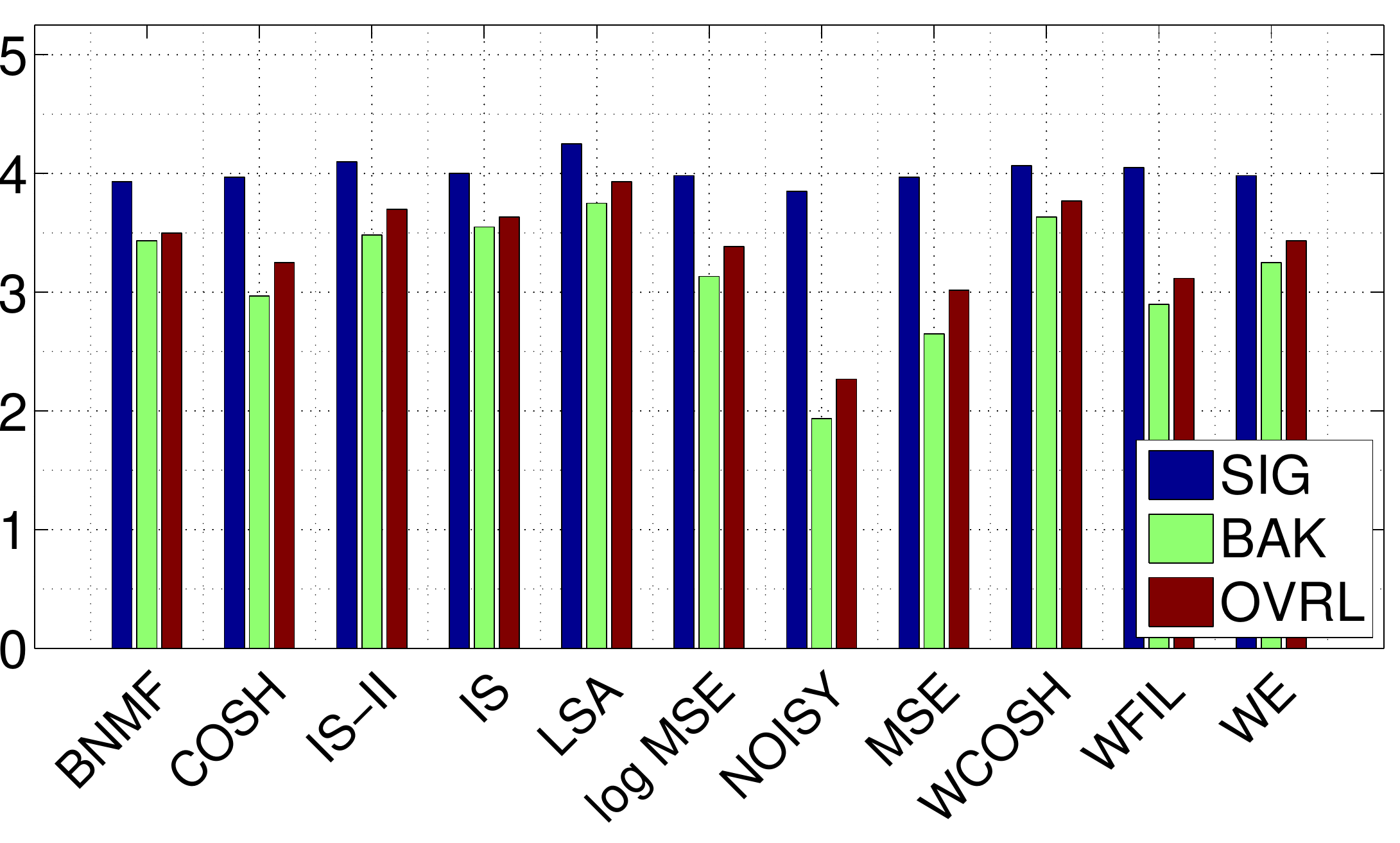} &
\includegraphics[width=2.2in]{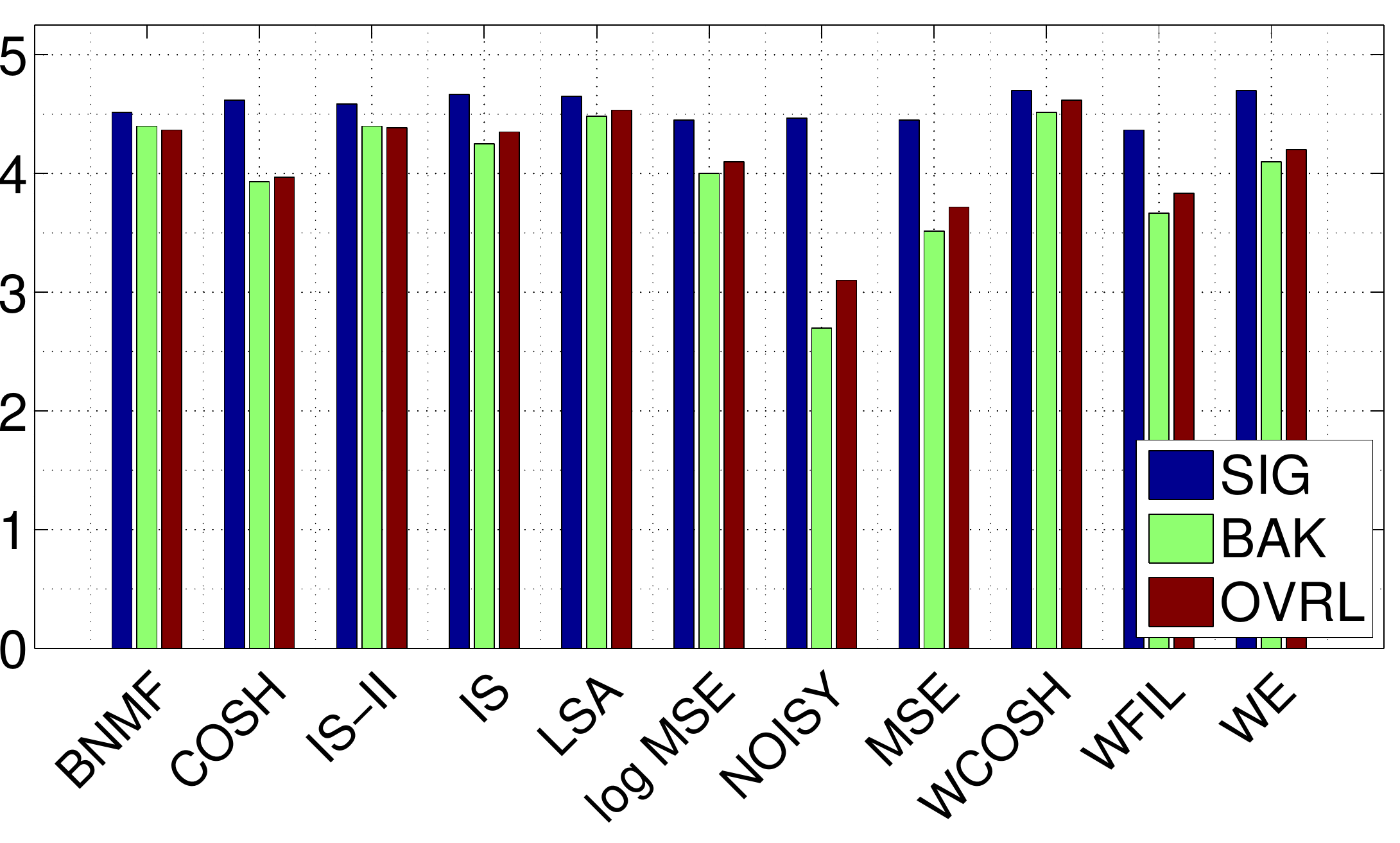}\\
\rot{\quad \quad \quad Street noise}&\includegraphics[width=2.2in]{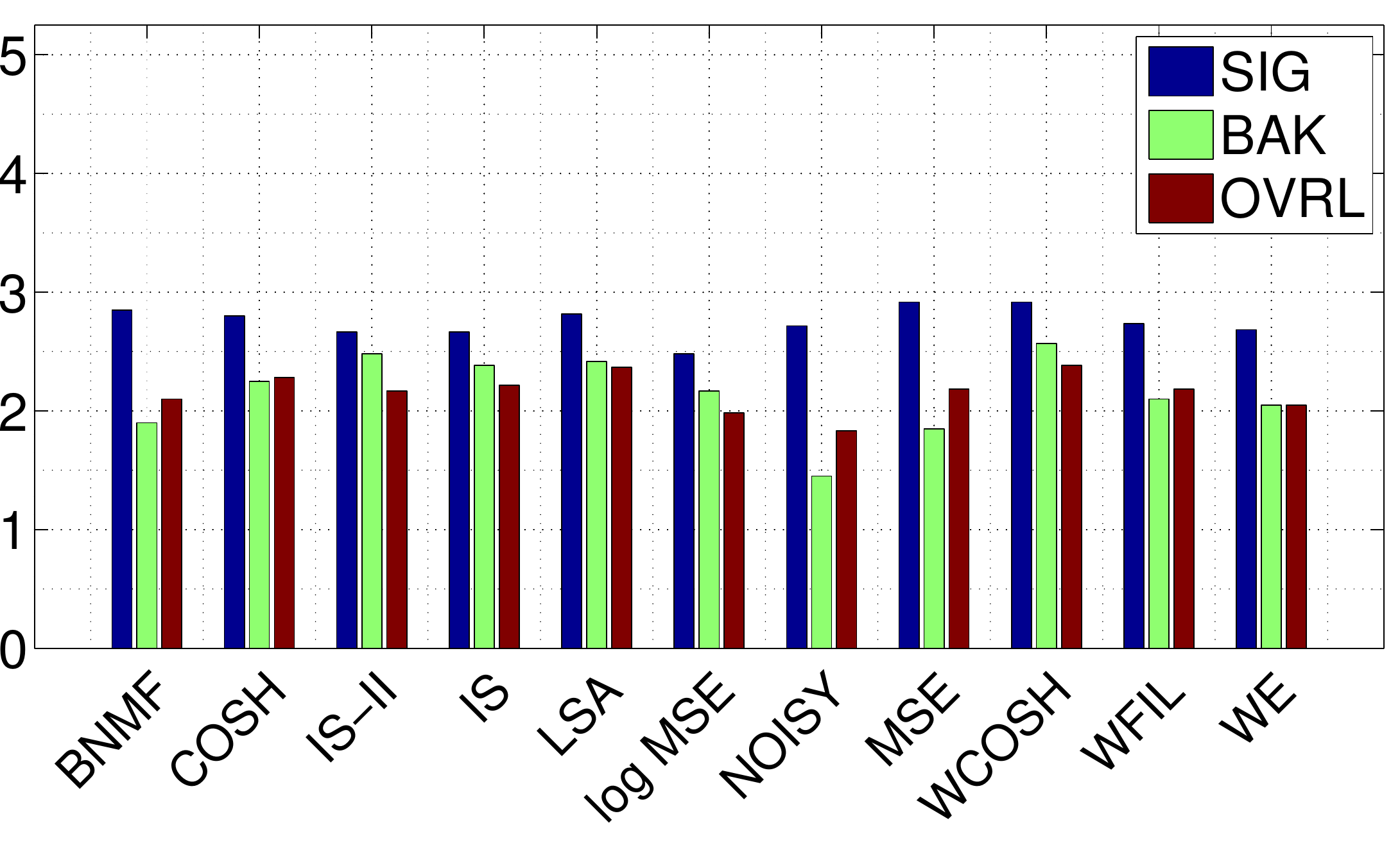} &
\includegraphics[width=2.2in]{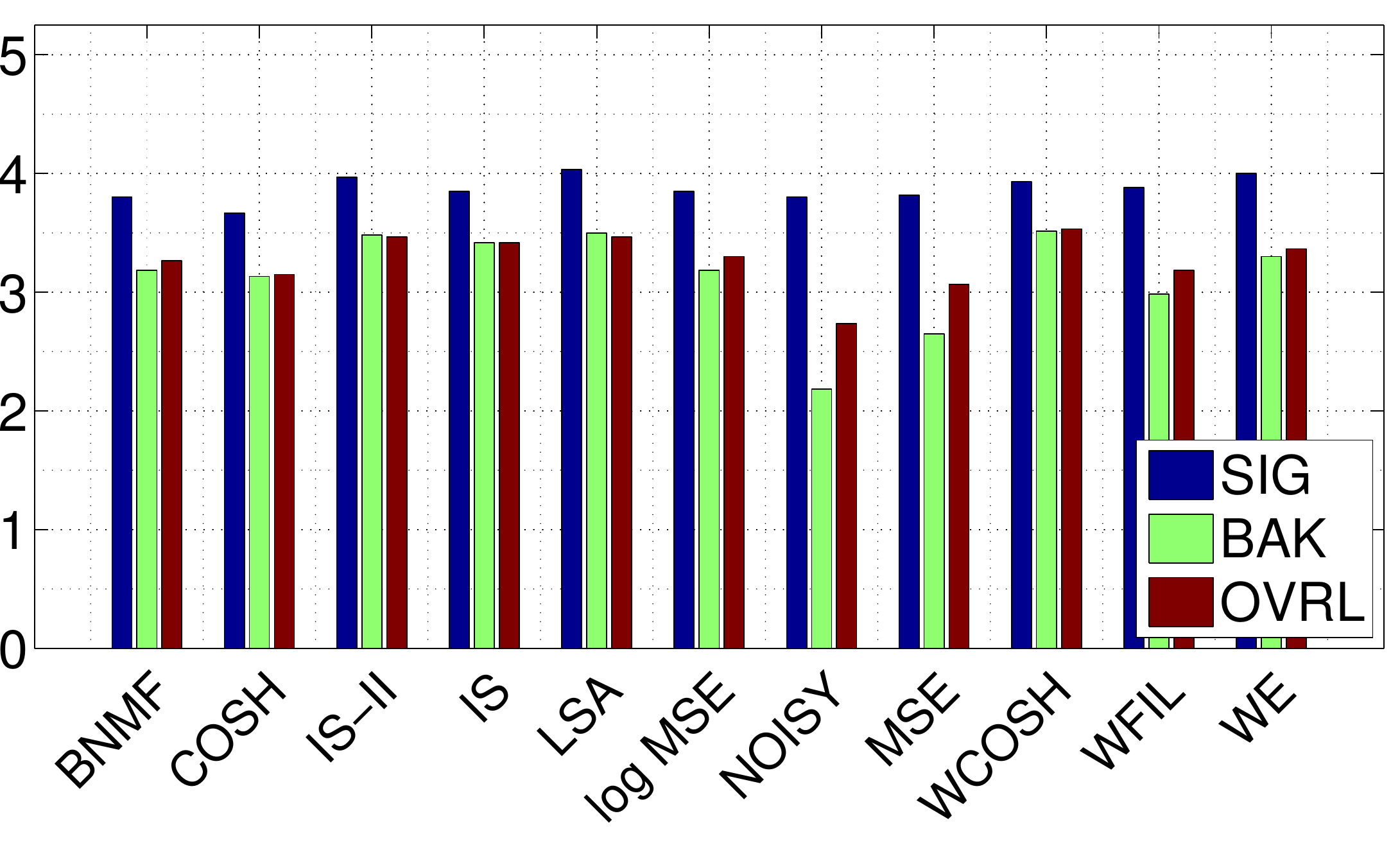} &
\includegraphics[width=2.2in]{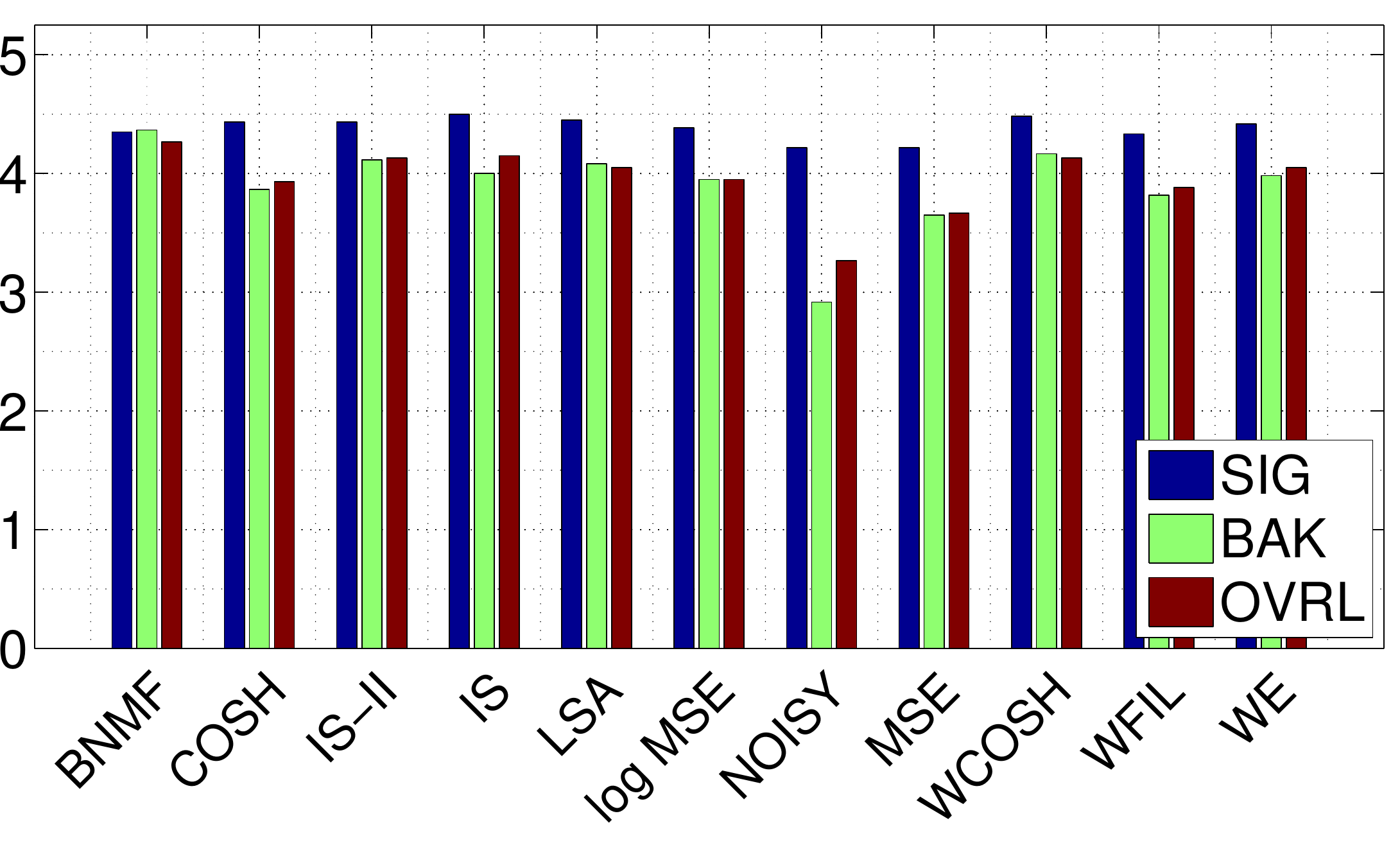}\\
\rot{\quad \quad \quad Train noise}&\includegraphics[width=2.2in]{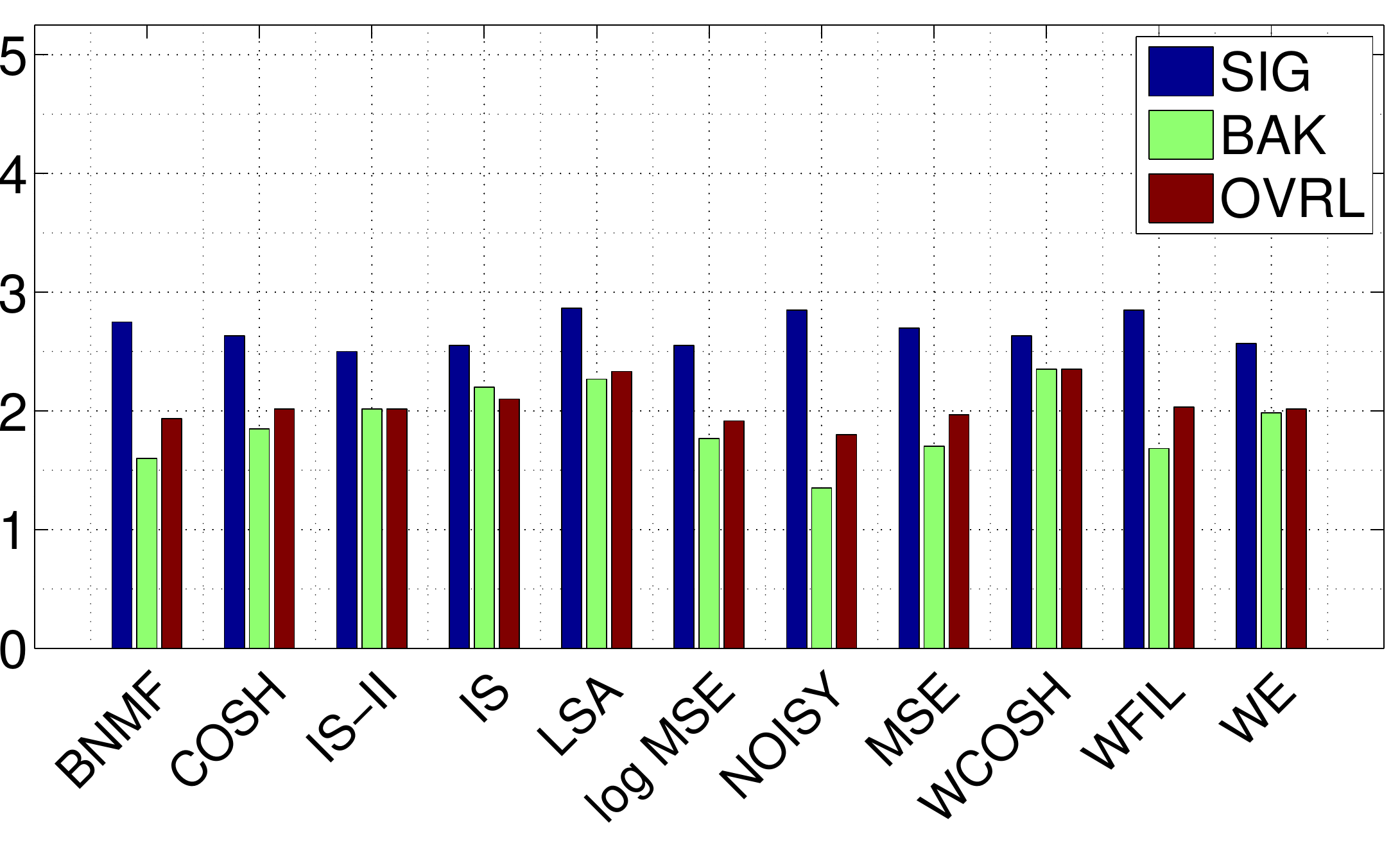} &
\includegraphics[width=2.2in]{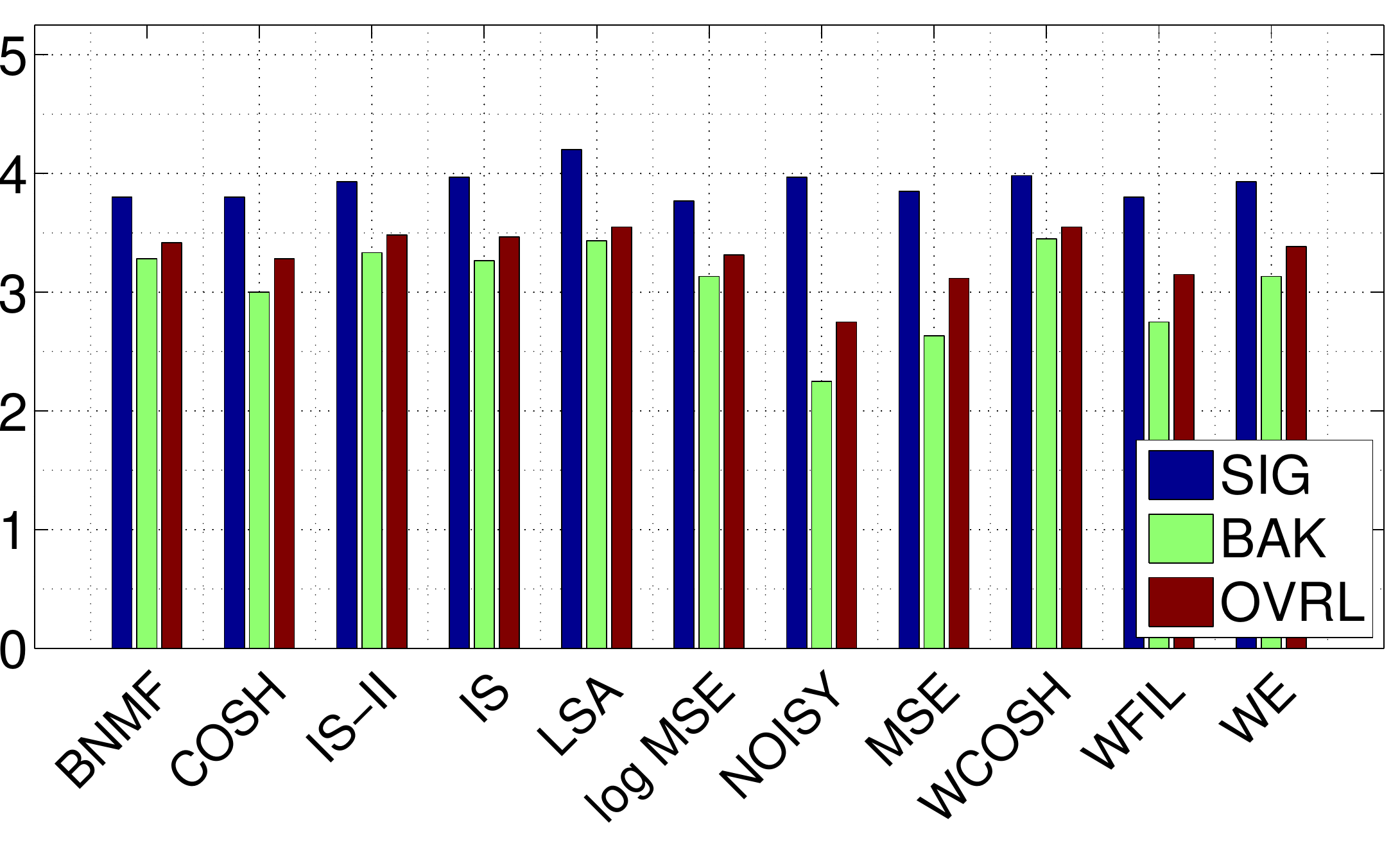} &
\includegraphics[width=2.2in]{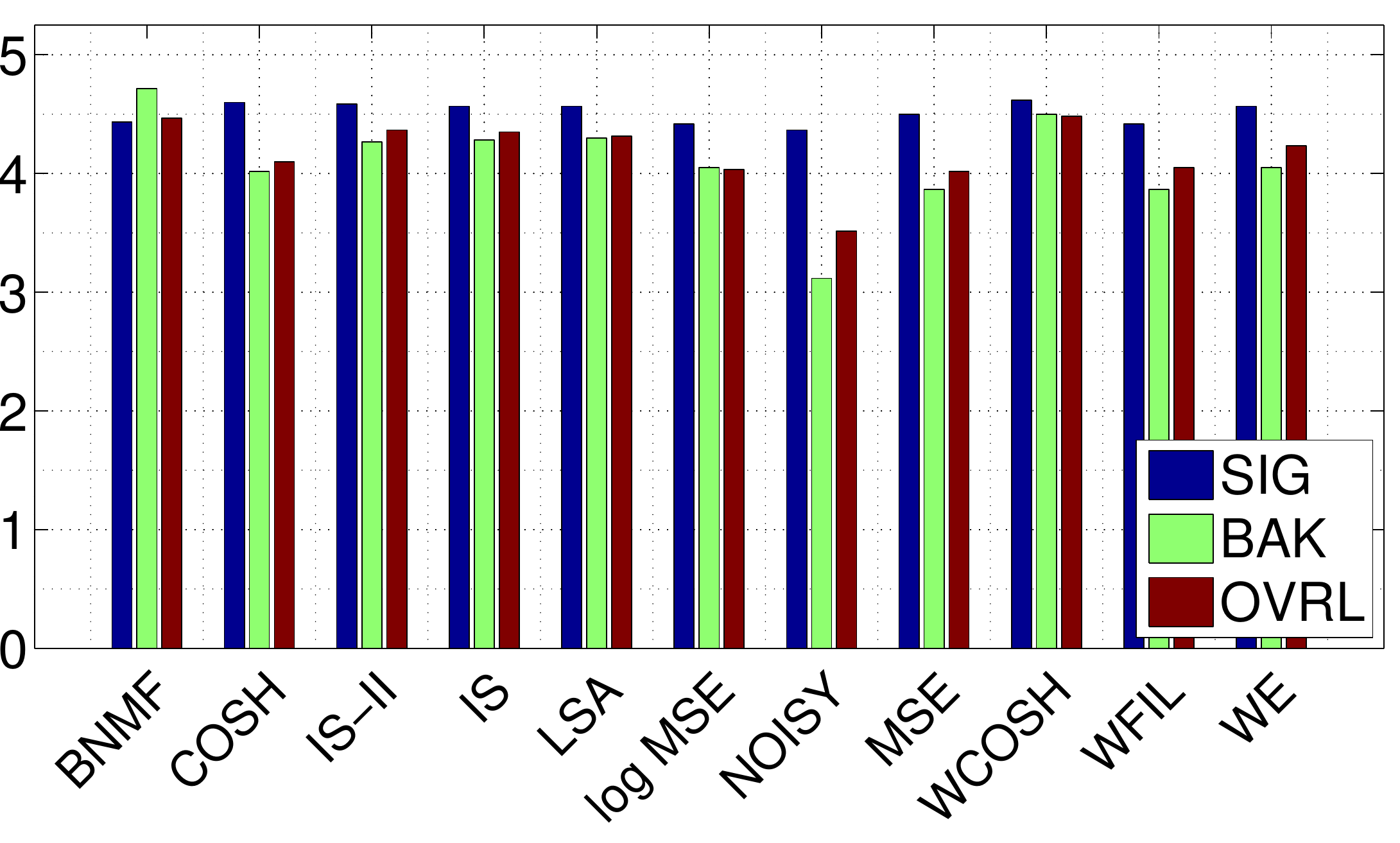}\\
\end{array}
$
\caption{[Color online] A comparison of the mean values of SIG, BAK, and OVRL ratings of the enhanced signal.}
\label{fig:listening_test_white}
\end{figure*}

\subsection{Spectrograms}
\indent Figure~\ref{fig:spectrograms_prose} shows the spectrograms of clean, noisy (SNR = 10 dB), and enhanced signals for the case of train noise. We observe that WFIL and LSA result in significant residual noise, whereas BNMF has relatively lower residual noise. BNMF recovers clean speech spectra with less distortion in some regions, compared with other algorithms (cf. green box), but it introduces distortions in the other parts (highlighted by the blue box for instance). PROSE with WCOSH results in superior noise suppression with minimal speech distortion than WFIL, LSA, and BNMF. Both PROSE and BNMF introduce a small amount of musical noise, in particular, in the silence regions. Regions that are submerged in noise are difficult to recover by any algorithm (red box, for instance), which explains why the improvements in SIG scores are lower than improvements in BAK and OVRL for all the techniques.

\begin{figure}[t]
\centering
$
\begin{array}{c}
\text{\small{Clean signal}} \\
\includegraphics[width=3.5in]{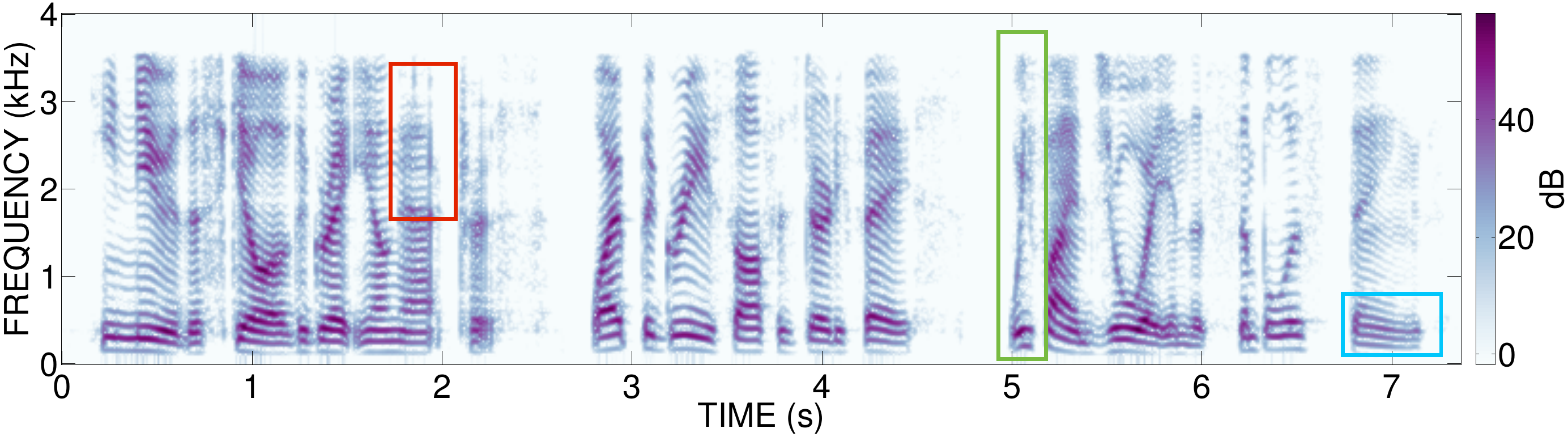}\\
\text{\small{Noisy speech}} \\
\includegraphics[width=3.5in]{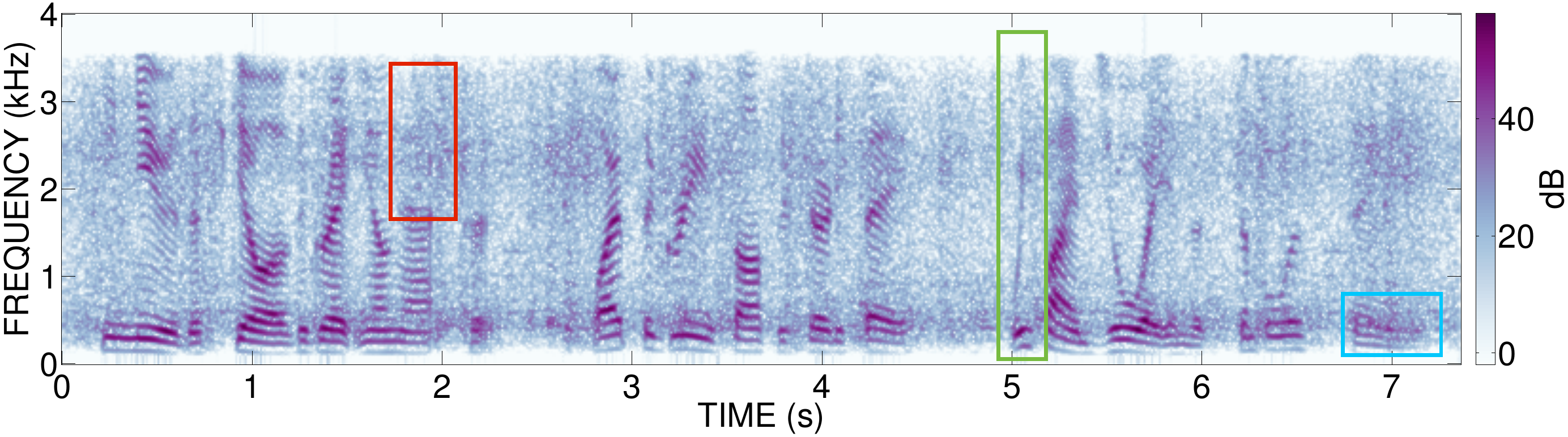} \\
\text{\small{PROSE (WCOSH)}}\\
\includegraphics[width=3.5in]{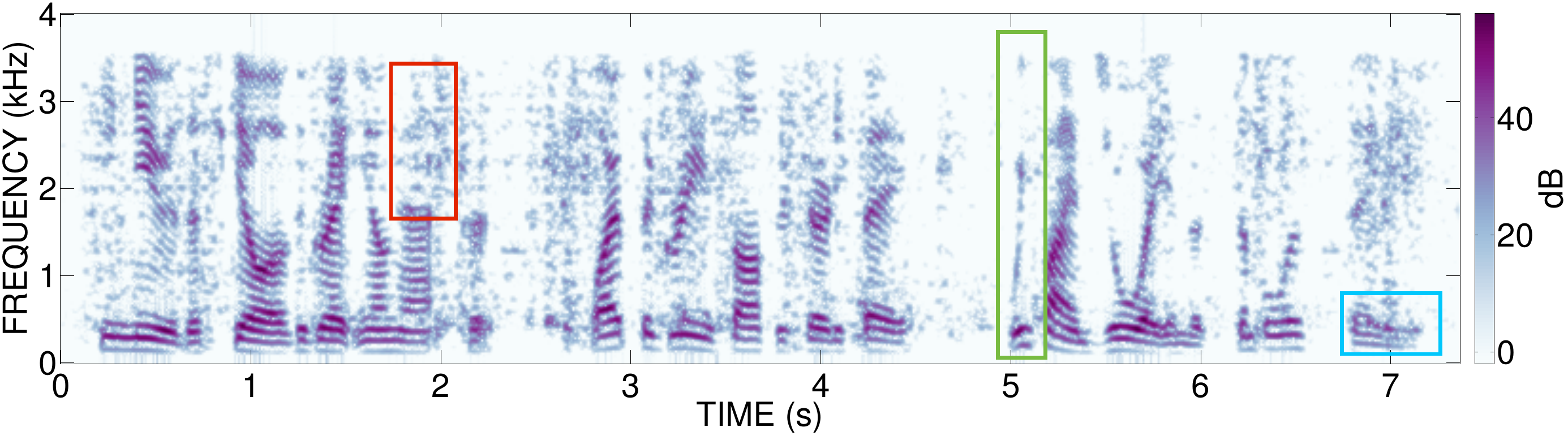}\\
\text{\small{LSA}} \\
\includegraphics[width=3.5in]{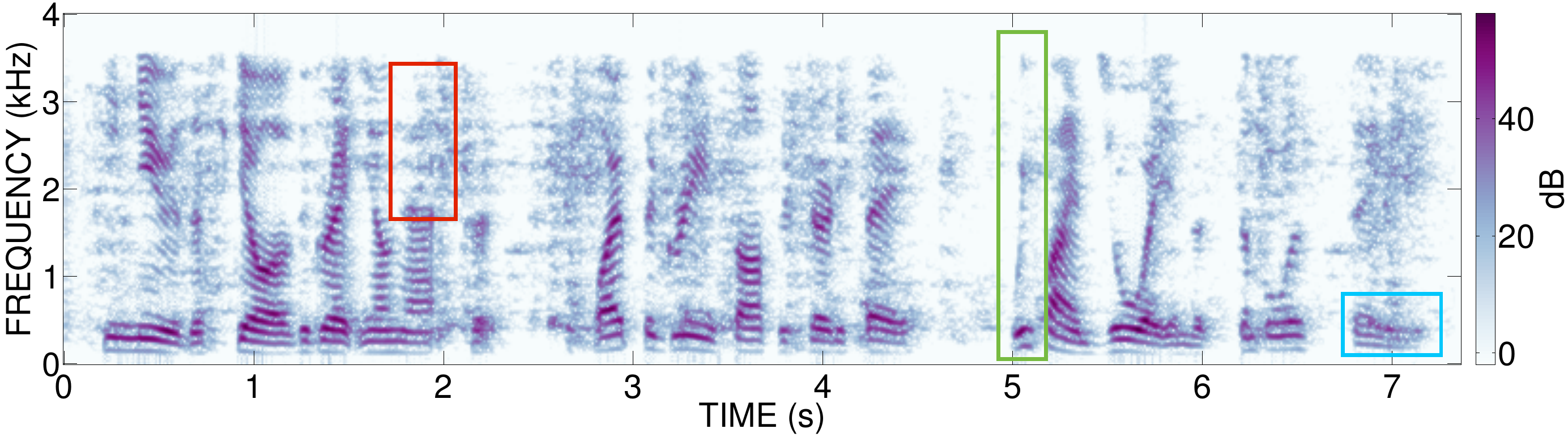}\\
\text{\small{WFIL}}\\
\includegraphics[width=3.5in]{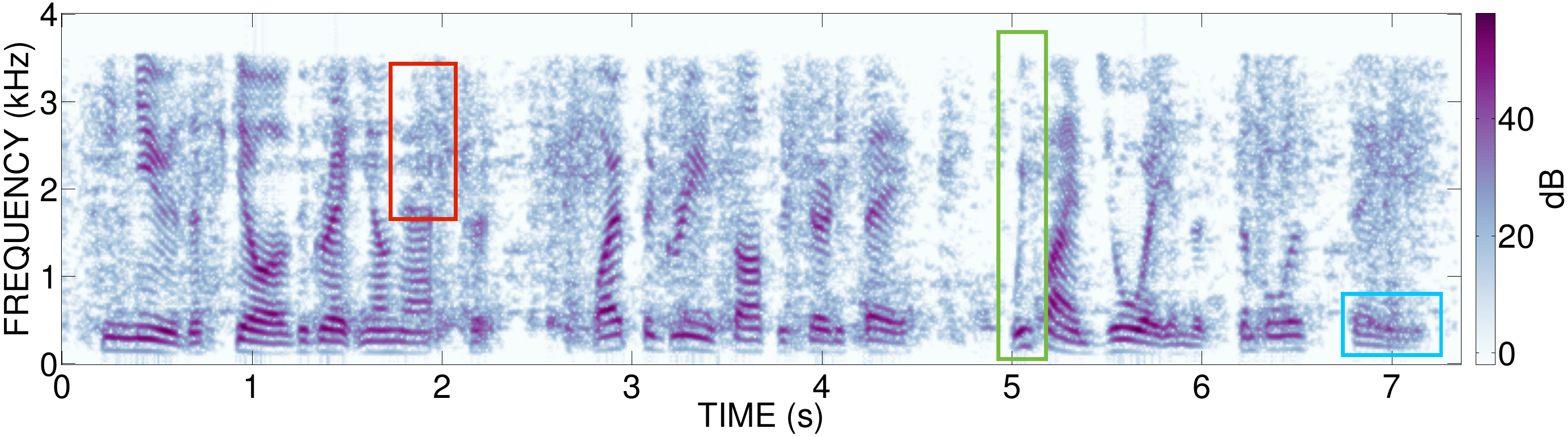}\\
\text{\small{BNMF}} \\
\includegraphics[width=3.5in]{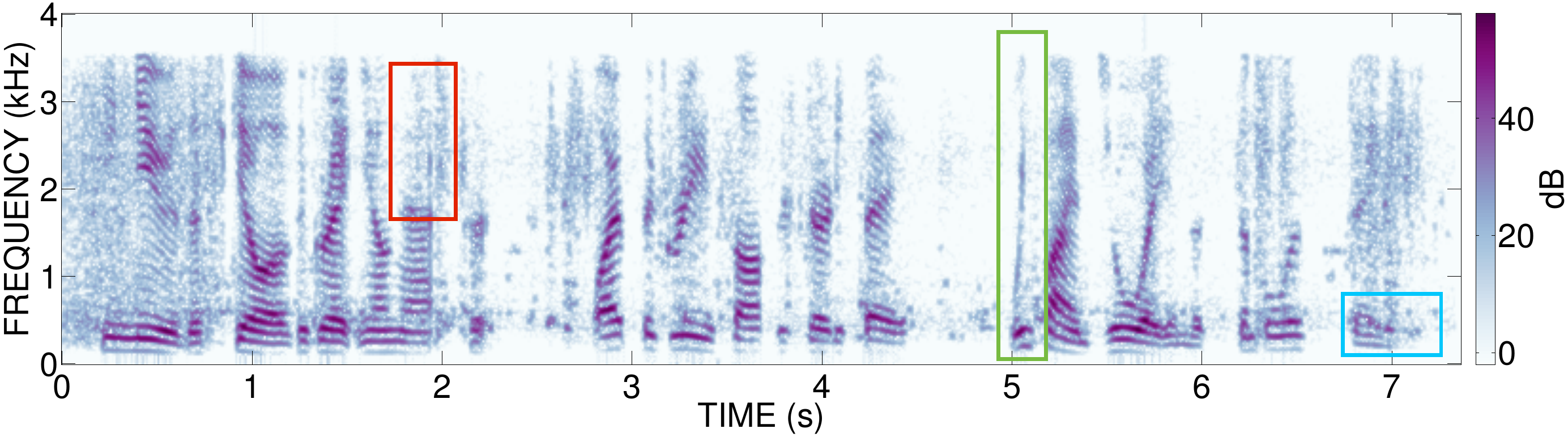}\\
\end{array}
$
\caption{[Color online] Spectrograms of clean speech, noisy speech, and speech denoised using various approaches. The utterance is: ``He knew the skill of the great young actress. Wipe the grease off his dirty face. We find joy in the simplest things.''}
\label{fig:spectrograms_prose}
\end{figure}

\section{Conclusions}
\label{sec:concl}
\indent We introduced the notion of unbiased risk estimation within a perceptual framework (abbreviated PROSE) for performing single-channel speech enhancement. The analytical developments are based on Stein's lemma and its generalized version introduced in this paper, which proved to be efficient for obtaining unbiased estimates of the distortion measures. We have also established the optimality of the shrinkage parameters considering the Karush-Kuhn-Tucker conditions. Validation on several speech signals in both synthesized noise and real-world nonstationary noise scenarios, and comparisons with benchmark techniques showed that, for input SNR greater than $5$ dB, the proposed PROSE method results in better denoising performance. Within the PROSE family, estimators based on Itakura-Saito distortion and weighted cosh distortion resulted in superior denoising performance. It must be emphasized that the PROSE methodology is relatively simpler from an implementation perspective (recall Fig.~\ref{fig:Block}) and does not require any training, making it ideal for deployment in practical applications involving hearing aids, mobile devices, etc. Further, we employed a voice-activity detector and a recursive algorithm for estimating the time-varying a posteriori SNR. Improving on the accuracy of noise variance estimation for handling rapidly changing noise would further enhance the performance of the PROSE method.
\section*{Acknowledgments}
We would like to thank Subhadip Mukherjee for technical discussions and the subjects for participating in the listening test. We would also like to thank Mohammadiha et al.~\cite{Mohammadiha}  and Taal et al.~\cite{stoi} for providing the Matlab implementations of their algorithms online, which facilitated the comparisons.
\begin{appendices}
\section {Karush-Kuhn-Tucker Conditions}
\label{KKT_conditions}
The goal is to solve the optimization problem $$\underset{a_k}{\operatorname{min}}\, \widehat{\mathcal{R}} \, \text{subject to}  \,  a_k \in \left[0,1\right].$$ The corresponding Lagrangian is $\mathbf{L}\left(a_k,\lambda_1,\lambda_2\right)=\widehat{\mathcal{R}}+\lambda_1 \left(a_k-1\right)-\lambda_2 a_k$, where $\lambda_1, \lambda_2 \in\mathbb{R}^+\cup\{0\}$. The KKT conditions are as follows:
\begin{subequations}
\begin{align}
 & {\tt C1} : \, \frac{\mathrm{d} \mathbf{L}\left(a_k,\lambda_1,\lambda_2\right)}{\mathrm{d} a_k}=0, \label{lagrange_derivative} \\
 & {\tt C2} : \, \lambda_1 (a_k-1)=0, -\lambda_2 a_k=0, \label{complimentary_slackness} \\
 & {\tt C3} : \,  a_k \in \left[0,1\right], \label{primal_feasibility}\\
 & {\tt C4} : \,  \lambda_1\geq0,  \lambda_2\geq0, \, \text{and}  \,  \label{dual_feasibility}\\
 & {\tt C5} : \, \frac{\mathrm{d}^2 \mathbf{L}\left(a_k,\lambda_1,\lambda_2\right)}{\mathrm{d} a_k^2}>0  \label{second_order}.
\end{align}
\end{subequations}
Solving \eqref{lagrange_derivative}$-$\eqref{second_order} gives the optimum shrinkage parameter $a_k$. The derivations for all the distortion measures are provided in the supporting document.
\section{The High-SNR Scenario}
\label{probability_high_snr_event}
The a priori SNR is defined as $\text{SNR} := \displaystyle\frac{{S_k}^2}{\sigma^2}$. By high SNR, we mean that $\text{SNR} > 4c^2$. Consider the probability:
\begin{align}
\nonumber
\text{Prob}\left\{\left|W_k\right|<\left|X_k\right| \right\}&=\text{Prob}\left\{{X_k}^2-{W_k}^2 >0\right\},\nonumber\\
&=\text{Prob}\left\{\right({X_k}+{W_k}\left) \right({X_k}-{W_k}\left) >0\right\},\nonumber\\
&=\text{Prob}\left\{S_k\right({X_k}+{W_k}\left) >0\right\},\nonumber\\
&=\text{Prob}\left\{{W_k} <-\frac{S_k}{2}\right\}, \text{if}\quad S_k < 0\nonumber\\
&=\text{Prob}\left\{{W_k} >-\frac{S_k}{2}\right\}, \text{if}\quad S_k > 0.
\label{prob_high_snr_event}
\end{align}
Therefore, $\text{Prob}\left\{\left|W_k\right|<\left|X_k\right|\right\} = \text{Prob}\left\{W_k<\frac{\left|S_k\right|}{2}\right\} = 1$ if $|S_k| > 2c\sigma \Rightarrow \text{a priori SNR} > 4c^2.$
\section{Weighted Euclidean Distance}
\label{app:WE}
\begin{multline}
 \text{Consider} \quad \mathcal{E}\left\{a_k^2X_k \displaystyle \sum_{n=0}^{4} \left(\displaystyle\frac{W_k}{X_k} \right)^n\right\}\\
 =a_k^2\mathcal{E}\left\{X_k+W_k+\displaystyle\frac{W_k^2}{X_k}+\displaystyle\frac{W_k^3}{X_k^2}+\displaystyle\frac{W_k^4}{X_k^3}\right\}. 
 \label{WEequation}
\end{multline}
Using \emph{Lemma} \ref{truncated_Gaussian_recursive_lemma2}, we have that
\begin{align*}
\mathcal{E}\left\{\displaystyle\frac{W_k^2}{X_k}\right\}
&=\sigma^4 \mathcal{E} \left \{ \displaystyle\frac{2}{X_k^3} \right\}+\sigma^2 \mathcal{E}\left\{\displaystyle\frac{1}{X_k}\right\},\\
\mathcal{E}\left\{\displaystyle\frac{W_k^3}{X_k^2}\right\}&=\sigma^6  \mathcal{E} \left\{ \displaystyle\frac{-24}{X_k^5} \right\} +\sigma^4 \mathcal{E} \left\{ \displaystyle\frac{-6}{X_k^3} \right\}, \text{and}\\
\mathcal{E}\left\{\displaystyle\frac{W_k^4}{X_k^3}\right\}&=\sigma^8 \mathcal{E} \left\{ \displaystyle\frac{360}{X_k^7} \right\}+\sigma^6 \mathcal{E} \left\{\displaystyle\frac{72}{X_k^5} \right\}+3\sigma^4 \mathcal{E}\left\{ \displaystyle\frac{1}{X_k^3} \right\}.
\end{align*}
Substituting the preceding expressions in (\ref{WEequation}), we get
\begin{multline}
 \mathcal{E}\left\{a_k^2X_k \displaystyle \sum_{n=0}^{4} \left(\displaystyle\frac{W_k}{X_k} \right)^n\right\}\\=a_k^2\mathcal{E} \left\{X_k+ \displaystyle\frac{\sigma^2}{X_k}-\displaystyle\frac{\sigma^4}{X_k^3}+48\displaystyle\frac{\sigma^6}{X_k^5}+360\displaystyle\frac{\sigma^8}{X_k^7} \right\}.\nonumber
\end{multline}

\section{Log Mean-Square Error}
\label{app:log}
\begin{equation}
\text{Consider}\,\mathcal{E}\left\{\displaystyle \sum_{n=1}^{4} \displaystyle\frac{\log a_kX_k}{n} \left(\displaystyle\frac{W_k}{X_k}\right)^n\right\}
=\mathcal{E}\left\{ \displaystyle \sum_{n=1}^{4} J_n(X_k)W_k^n \right\}, \nonumber
\end{equation}
where $J_n(X_k)=\log a_kX_k/(nX_k^n)$. Applying \emph{Lemma} \ref{truncated_Gaussian_recursive_lemma2} gives
\begin{align*}
&\mathcal{E}\{J_1(X_k)W_k\}=\sigma^2\mathcal{E}\{J^{(1)}_1(X_k)\}, \\
&\mathcal{E}\{J_2(X_k)W_k^2\}=\sigma^4\mathcal{E}\{J^{(2)}_2(X_k)\}+\sigma^2\mathcal{E}\{J_2(X_k)\},\\
&\mathcal{E}\{J_3(X_k)W_k^3\}=\sigma^6\mathcal{E}\{J^{(3)}_3(X_k)\}+3\sigma^4\mathcal{E}\{J^{(1)}_3(X_k)\}, \mbox{and}\\
&\mathcal{E}\{J_4(X_k)W_k^4\}=\mathcal{E}\{\sigma^8 J^{(4)}_4(X_k)+6\sigma^6 J^{(2)}_4(X_k) +3\sigma^4 J_4(X_k)\},\\
&\mbox{where}\, J^{(1)}_1=\displaystyle\frac{1}{X_k^2}-\displaystyle\frac{\log a_kX_k}{X_k^2}, J^{(2)}_2=\displaystyle\frac{-5}{2X_k^4}+\displaystyle\frac{3\log a_kX_k}{X_k^4},\\
&J^{(1)}_3=\displaystyle\frac{1}{3X_k^4}-\displaystyle\frac{\log a_kX_k}{X_k^4}, J^{(3)}_3=\displaystyle\frac{47}{3X_k^6}-\displaystyle\frac{20\log a_kX_k}{X_k^6},\\
&J^{(2)}_4=\displaystyle\frac{-9}{4X_k^6}+\displaystyle\frac{5\log a_kX_k}{X_k^6}, J^{(4)}_4=-\displaystyle\frac{638}{4X_k^8}+\displaystyle\frac{210\log a_kX_k}{X_k^8}.
\end{align*}

\section{Itakura-Saito Distortion}
\label{app:IS}
\indent With reference to (\ref{RiskIS}), consider the truncated approximation:
\begin{equation}
\displaystyle \sum_{n=0}^{\infty} {\cal E}\{a_kW_k^n/X_k^n\}\approx\displaystyle \sum_{n=0}^{4} {\cal E}\{a_kW_k^n/X_k^n\}.
\label{summation}
\end{equation}
Applying \emph{Lemma} \ref{truncated_Gaussian_recursive_lemma2}, we have that
\begin{align*}
&\mathcal{E}\left\{\displaystyle\frac{W_k}{X_k}\right\}=\sigma^2\mathcal{E}\left\{\displaystyle\frac{-1}{X_k^2}\right\},
\mathcal{E}\left\{\displaystyle\frac{W_k^2}{X_k^2}\right\}=\mathcal{E}\left\{\sigma^4\displaystyle\frac{6}{X_k^4}+\sigma^2\displaystyle\frac{1}{X_k^2}\right\},\\
&\mathcal{E}\left\{\displaystyle\frac{W_k^3}{X_k^3}\right\}=\sigma^6\mathcal{E}\left\{\displaystyle\frac{-60}{X_k^6}\right\}+3\sigma^4\mathcal{E}\left\{\displaystyle\frac{-3}{X_k^4}\right\},\text{and}\\
&\mathcal{E}\left\{\displaystyle\frac{W_k^4}{X_k^4}\right\}=\sigma^8\mathcal{E}\left\{\displaystyle\frac{840}{X_k^8}\right\}+6\sigma^6\mathcal{E}\left\{\displaystyle\frac{20}{X_k^6}\right\}+3\sigma^4\mathcal{E}\left\{\displaystyle\frac{1}{X_k^4}\right\}.
\end{align*}
The final expression for (\ref{summation}) is given by
\begin{equation}
\displaystyle \sum_{n=0}^{4} {\cal E}\{a_kW_k^n/X_k^n\} =a_k \mathcal{E} \left\{840 \displaystyle\frac{\sigma^8}{X_k^8}+60 \displaystyle\frac{\sigma^6}{X_k^6} +1 \right\}.
\end{equation}
\section{Weighted COSH Distance}
\label{app:wcosh}
\indent We provide certain simplifications for the expectation terms in the risk estimator for weighted cosh measure:
\begin{equation}
\label{expectation}
\mathcal{R}=\mathcal{E}\left\{d(S_k,\widehat{S}_k) \right\} = \displaystyle\frac{1}{2} \mathcal{E}\left \{ \displaystyle\frac{1}{\widehat{S}_k} + \displaystyle\frac{\widehat{S}_k}{S_k^2} \right\} - \displaystyle\frac{1}{S_k}.
\end{equation}
The second term in (\ref{expectation}) is approximated as
\begin{equation}
\displaystyle\frac{\widehat{S}_k}{X_k^2} \left ( 1-\displaystyle\frac{W_k}{X_k} \right )^{-2} \approx \displaystyle\frac{\widehat{S}_k}{X_k^2} \left( 1+2\displaystyle\frac{W_k}{X_k}+3\displaystyle\frac{W_k^2}{X_k^2}+4\displaystyle\frac{W_k^3}{X_k^3}+5\displaystyle\frac{W_k^4}{X_k^4} \right). \nonumber
 \end{equation}
Substituting ${\widehat S}_k=a_kX_k$ and taking expectation, we get that
\begin{equation}
\mathcal{E} \left\{ \displaystyle\frac{\widehat{S}_k}{S_k^2} \right\} = \mathcal{E}\left\{ \displaystyle\frac{a_kX_k}{X_k^2} \left( 1+2\displaystyle\frac{W_k}{X_k}+3\displaystyle\frac{W_k^2}{X_k^2}+4\displaystyle\frac{W_k^3}{X_k^3}+5\displaystyle\frac{W_k^4}{X_k^4} \right) \right\}. \nonumber
\end{equation}
Simplified expressions for the individual terms in the above equation are given below:
\begin{align*}
&\mathcal{E} \left\{ \displaystyle\frac{W_k}{X_k^2} \right\} =\sigma^2 \mathcal{E} \left\{ \displaystyle\frac{-2}{X_k^3} \right\}, \mathcal{E} \left\{ \displaystyle\frac{W_k^2}{X_k^3} \right\}= \mathcal{E} \left\{\sigma^4 \displaystyle\frac{12}{X_k^5} +\sigma^2 \displaystyle\frac{1}{X_k^3} \right\},\\
&\mathcal{E} \left\{ \displaystyle\frac{W_k^3}{X_k^4} \right\}=\sigma^6 \mathcal{E} \left\{ \displaystyle\frac{-120}{X_k^7} \right\}+3\sigma^4 \mathcal{E} \left\{ \displaystyle\frac{-4}{X_k^5} \right\},\\
&\mathcal{E} \left\{ \displaystyle\frac{W_k^4}{X_k^5} \right\}=\sigma^8 \mathcal{E} \left\{ \displaystyle\frac{1680}{X_k^9} \right\}+6\sigma^6 \mathcal{E} \left\{ \displaystyle\frac{30}{X_k^7} \right\}+3\sigma^4 \mathcal{E} \left\{ \displaystyle\frac{1}{X_k^5} \right\}.
\end{align*}
Substituting these expressions in (\ref{expectation}), we get
\begin{align}
\nonumber
\mathcal{R}= \mathcal{E}  \left\{ \frac{a_k}{2X_k} \left( 1-\frac{\sigma^2}{X_k^2}+3 \frac{\sigma^4}{X_k^4}+420\frac{\sigma^6}{X_k^6}+8400\frac{\sigma^8}{X_k^8} \right)\right.\nonumber\\
+\left.\frac{1}{2a_kX_k} -\frac{1}{S_k}\right\}.
\end{align}
The quantity inside the braces is an unbiased estimate of $\mathcal{R}$.
\end{appendices}

\newpage

\onecolumn
\section*{\large{Karush-Kuhn-Tucker Conditions and Optimization of Perceptual Distortion Measures} \\ Supporting Document}
In this supporting document, we provide the detailed derivations for obtaining the optimum shrinkage factors for various distortion measures considered in the main document. The section and equation references correspond to those in the main document.

\subsection{The Optimization Problem and KKT Conditions}
\indent The goal is to solve
\begin{equation}
\underset{a_k}{\operatorname{min}}\, \widehat{\mathcal{R}} \, \text{subject to}  \,  a_k \in \left[0,1\right],
\label{optimization_problem}
\end{equation} 
where $\widehat{\mathcal{R}}$ is a chosen risk/distortion measure. The unconstrained form using Lagrangian is given by
$\mathbf{L}\left(a_k,\lambda_1,\lambda_2\right)=\widehat{\mathcal{R}}+\lambda_1 \left(a_k-1\right)-\lambda_2 a_k,$ 
 where $\lambda_1\in\mathbf{R}$ and $\lambda_2\in\mathbf{R}$ .
In order to obtain the optimum  $a_k$, we have to solve the Karush-Kuhn-Tucker (KKT) conditions given by
\begin{subequations}
\begin{align}
 & {\tt C1} : \, \frac{\mathrm{d} \mathbf{L}\left(a_k,\lambda_1,\lambda_2\right)}{\mathrm{d} a_k}=0, \label{lagrange_derivative} \\
 & {\tt C2} :  \, \lambda_1 (a_k-1)=0, -\lambda_2 a_k=0, \label{complimentary_slackness} \\
 & {\tt C3} :  \,  a_k \in \left[0,1\right], \label{primal_feasibility}\\
 & {\tt C4} :  \,  \lambda_1\geq0,  \lambda_2\geq0, \, \text{and}  \,  \label{dual_feasibility}\\
 & {\tt C5} : \, \frac{\mathrm{d}^2 \mathbf{L}\left(a_k,\lambda_1,\lambda_2\right)}{\mathrm{d} a_k^2}>0  \label{second_order}.
\end{align}
\end{subequations}

\subsection{Mean-squared error (MSE)}
In this case, recall from (cf. Eq. (5) in Section IV-A) that 
$${\widehat {\cal R}} = a_k^2 X_k^2-2a_kX_k^2+2\sigma^2a_k +S_k^2.$$ 
The corresponding Lagrangian  is
\begin{equation}
\mathbf{L}\left(a_k,\lambda_1,\lambda_2\right) = a_k^2 X_k^2-2a_kX_k^2+2\sigma^2a_k +S_k^2+\lambda_1 \left(a_k-1\right)-\lambda_2 a_k.\nonumber
\end{equation}
Our goal is to determine $a_k$ that satisfies the KKT conditions \eqref {lagrange_derivative} to \eqref{second_order}. The first- and second-order  derivatives of the Lagrangian are  
$$\displaystyle \frac{\mathrm{d} \mathbf{L}}{\mathrm{d} a_k}= 2a_k X_k^2-2X_k^2+2\sigma^2 +\lambda_1-\lambda_2,$$ and  
$$\displaystyle \frac{\mathrm{d}^2 \mathbf{L}}{\mathrm{d} a_k^2}=2X_k^2,$$ respectively. In order to solve \eqref {lagrange_derivative} -- \eqref{second_order} for $a_k$, we consider different cases of $\lambda_1$ and $\lambda_2$. 
\begin{itemize}
\item \textbf{Case 1}: $\lambda_1=0$, and $\lambda_2>0$\\
\eqref{complimentary_slackness} and   \eqref{primal_feasibility}  $\Rightarrow \, a_k=0$, and from  \eqref{lagrange_derivative}  we obtain
\begin{align}
\nonumber
\frac{\mathrm{d} \mathbf{L}}{\mathrm{d} a_k}\Bigg|_{a_k=0}&= -2X_k^2+2\sigma^2 -\lambda_2=0,\nonumber\\ 
&\Rightarrow \, \sigma^2-X_k^2 = \frac{\lambda_2}{2}  \overset{\eqref{dual_feasibility}}{>} 0.\nonumber\\
&\Rightarrow \, \sigma^2>X_k^2 \Rightarrow \, a_k=0.
\end{align}
\item \textbf{Case 2}: $\lambda_1>0$, and $\lambda_2=0$\\
\eqref{complimentary_slackness} and  \eqref{primal_feasibility} $\Rightarrow \, a_k=1$, and from \eqref{lagrange_derivative}  we obtain
\begin{align}
\nonumber
\label{mse_Lagrang_der_case2}
\frac{\mathrm{d} \mathbf{L}}{\mathrm{d} a_k}\Bigg|_{a_k=1}&= 2\sigma^2 +\lambda_1=0,\\ 
&\Rightarrow \, -\sigma^2 = \frac{\lambda_1}{2}. 
\end{align}
\vspace{-1ex}
Since $\sigma^2\geq0$ and $\lambda_1>0$ by assumption, \eqref{mse_Lagrang_der_case2} contradicts \eqref{dual_feasibility}, implying that $a_k=1$ is not a solution.  
\item \textbf{Case 3}: $\lambda_1>0$, and $\lambda_2>0$\\
\eqref{complimentary_slackness} and \eqref{primal_feasibility} imply that no solution exists.
 \item \textbf{Case 4}: $\lambda_1=0$, and $\lambda_2=0$\\
In this case, for $a_k \in \left[0,1\right]$, \eqref{complimentary_slackness} is satisfied. Using  \eqref{lagrange_derivative}, one can obtain $a_k$ as follows:
\begin{align}
\nonumber
\frac{\mathrm{d} \mathbf{L}}{\mathrm{d} a_k}&=2a_k X_k^2-2X_k^2+2\sigma^2=0,\\ 
&\Rightarrow \,a_k = 1-\frac{\sigma^2}{X_k^2}.
\end{align}
\end{itemize}
Consolidating  \textbf{Case 1} to  \textbf{Case 4}, we obtain that 
$$a_k=\displaystyle \max\left\{0,  1-\frac{\sigma^2}{X_k^2}\right\}.$$
Also observe that $ \displaystyle \frac{\mathrm{d}^2 \mathbf{L}}{\mathrm{d} a_k^2}=2X_k^2>0$, which implies that $a_k$ is the unique minimizer of $\widehat{\mathcal{R}}$ over $\left[0,1\right]$.
\vspace{-5ex}
\subsection{Weighted Euclidean (WE) distortion}
\vspace{-2ex}
In this case, the goal is to obtain $a_k \in \left[0,1\right]$ that minimizes  ${\widehat {\cal R}}$ defined in  (cf. Eq. (8) in Section IV-B) if $S_k>0$ and maximizes it if $S_k<0$. The high-SNR assumption led to $S_k$ and $X_k$ having the same sign (cf. Section IV-B). First, we consider the case $S_k, X_k > 0$. To obtain the optimum $a_k$, we solve \eqref {lagrange_derivative}$-$\eqref{second_order}, where
\begin{align}
&{\widehat {\cal R}}=a_k^2\left(X_k+ \frac{\sigma^2}{X_k}-\frac{\sigma^4}{X_k^3}+48\frac{\sigma^6}{X_k^5}+360\frac{\sigma^8}{X_k^7} \right)-2a_kX_k+S_k, \nonumber\\
&\mathbf{L}\left(a_k,\lambda_1,\lambda_2\right) = a_k^2\left(X_k+ \frac{\sigma^2}{X_k}-\frac{\sigma^4}{X_k^3}+48\frac{\sigma^6}{X_k^5}+360\frac{\sigma^8}{X_k^7} \right)-2a_kX_k+S_k+\lambda_1 \left(a_k-1\right)-\lambda_2 a_k,\nonumber\\
&\frac{\mathrm{d} \mathbf{L}}{\mathrm{d} a_k}  = 2a_k\left(X_k+ \frac{\sigma^2}{X_k}-\frac{\sigma^4}{X_k^3}+48\frac{\sigma^6}{X_k^5}+360\frac{\sigma^8}{X_k^7} \right)-2X_k+\lambda_1-\lambda_2, \nonumber \text{and}\\
& \frac{\mathrm{d}^2 \mathbf{L}}{\mathrm{d} a_k^2} = 2X_k\left(1+\displaystyle\frac{\sigma^2}{X_k^2}-\displaystyle\frac{\sigma^4}{X_k^4}+48\displaystyle\frac{\sigma^6}{X_k^6}+360\displaystyle\frac{\sigma^8}{X_k^8} \right)\nonumber.
\end{align}
Again, we consider four cases of  $\lambda_1$ and $\lambda_2$.
\begin{itemize}
 \item \textbf{Case 1}: $\lambda_1=0$, and $\lambda_2>0$\\
\eqref{complimentary_slackness} and   \eqref{primal_feasibility}  $\Rightarrow \, a_k=0$, and from  \eqref{lagrange_derivative}  we obtain
\begin{align}
\nonumber
\frac{\mathrm{d} \mathbf{L}}{\mathrm{d} a_k}\Bigg|_{a_k=0}&= -2X_k-\lambda_2=0,\\ 
&\Rightarrow \, -X_k = \frac{\lambda_2}{2}. \label{we_Lagrang_der_case1}
\end{align}
Since $X_k > 0$,  \eqref{we_Lagrang_der_case1} contradicts \eqref{dual_feasibility}, implies $a_k=0$ is not a solution. 
\item \textbf{Case 2}: $\lambda_1>0$, and $\lambda_2=0$\\
\eqref{complimentary_slackness} and  \eqref{primal_feasibility} $\Rightarrow \, a_k=1$, and from \eqref{lagrange_derivative}  we obtain
\begin{align}
\nonumber
\frac{\mathrm{d} \mathbf{L}}{\mathrm{d} a_k}\Bigg|_{a_k=1}&= 2X_k\left(1+\displaystyle\frac{\sigma^2}{X_k^2}-\displaystyle\frac{\sigma^4}{X_k^4}+48\displaystyle\frac{\sigma^6}{X_k^6}+360\displaystyle\frac{\sigma^8}{X_k^8} \right)-2X_k+\lambda_1=0,\\ 
&\Rightarrow \, 1-\left(1+\displaystyle\frac{\sigma^2}{X_k^2}-\displaystyle\frac{\sigma^4}{X_k^4}+48\displaystyle\frac{\sigma^6}{X_k^6}+360\displaystyle\frac{\sigma^8}{X_k^8} \right) = \frac{\lambda_1}{2}.
\label{we_Lagrang_der_case2}
\end{align}
Since  $\left(1+\displaystyle\frac{\sigma^2}{X_k^2}-\displaystyle\frac{\sigma^4}{X_k^4}+48\displaystyle\frac{\sigma^6}{X_k^6}+360\displaystyle\frac{\sigma^8}{X_k^8} \right)> 1$, \eqref{we_Lagrang_der_case2} contradicts \eqref{dual_feasibility}, which implies that $a_k=1$ is not a solution.  
\item \textbf{Case 3}: $\lambda_1>0$, and $\lambda_2>0$\\
\eqref{complimentary_slackness} and \eqref{primal_feasibility} imply that no solution exists.
 \item \textbf{Case 4}: $\lambda_1=0$, and $\lambda_2=0$\\
In this case, for $a_k \in \left[0,1\right]$, \eqref{complimentary_slackness} is satisfied. Using  \eqref{lagrange_derivative}, one can obtain $a_k$ as follows:
\begin{align}
\nonumber
\frac{\mathrm{d} \mathbf{L}}{\mathrm{d} a_k}&=2a_k\left(X_k+ \frac{\sigma^2}{X_k}-\frac{\sigma^4}{X_k^3}+48\frac{\sigma^6}{X_k^5}+360\frac{\sigma^8}{X_k^7} \right)-2X_k=0,\\ 
&\Rightarrow \,a_k = \left(1+\displaystyle\frac{\sigma^2}{X_k^2}-\displaystyle\frac{\sigma^4}{X_k^4}+48\displaystyle\frac{\sigma^6}{X_k^6}+360\displaystyle\frac{\sigma^8}{X_k^8} \right)^{-1},
\end{align}
 \end{itemize}
which belongs to $[0, 1]$ because the term $-\displaystyle\frac{\sigma^4}{X_k^4}$ is dominated by $\displaystyle\frac{\sigma^2}{X_k^2}$ if $\displaystyle\frac{\sigma^2}{X_k^2} < 1$, and by $48\displaystyle\frac{\sigma^6}{X_k^6}+360\displaystyle\frac{\sigma^8}{X_k^8}$ if $\displaystyle\frac{\sigma^2}{X_k^2} > 1$\\
\indent Considering \textbf{Case 1} to \textbf{Case 4}, we obtain $a_k= \left(1+\displaystyle\frac{\sigma^2}{X_k^2}-\displaystyle\frac{\sigma^4}{X_k^4}+48\displaystyle\frac{\sigma^6}{X_k^6}+360\displaystyle\frac{\sigma^8}{X_k^8} \right)^{-1}$.
One can observe that the sign of $\displaystyle \frac{\mathrm{d}^2 \mathbf{L}}{\mathrm{d} a_k^2}$ is same as that of  $X_k$. Since $X_k$ is positive,  $\displaystyle\frac{\mathrm{d}^2 \mathbf{L}}{\mathrm{d} a_k^2}$ is also positive, hence, $a_k = \left(1+\displaystyle\frac{\sigma^2}{X_k^2}-\displaystyle\frac{\sigma^4}{X_k^4}+48\displaystyle\frac{\sigma^6}{X_k^6}+360\displaystyle\frac{\sigma^8}{X_k^8} \right)^{-1},$ is the unique minimizer of  ${\widehat {\cal R}}$ over $\left[0,1\right]$.\\
\indent Next, we repeat the analysis for the case where $S_k<0$. In this case, the goal is to find an optimum $a_k \in \left[0,1\right]$ that maximizes ${\widehat {\cal R}}$. To obtain the optimum $a_k$, instead of maximizing ${\widehat {\cal R}}$ over $\left[0,1\right]$, one can minimize $-{\widehat {\cal R}}$ over $\left[0,1\right]$. To obtain the solution, we solve \eqref {lagrange_derivative} to \eqref{second_order}, where Lagrangian takes the form
$$\mathbf{L}\left(a_k,\lambda_1,\lambda_2\right) = -a_k^2\left(X_k+ \frac{\sigma^2}{X_k}-\frac{\sigma^4}{X_k^3}+48\frac{\sigma^6}{X_k^5}+360\frac{\sigma^8}{X_k^7} \right)+2a_kX_k-S_k+\lambda_1 \left(a_k-1\right)-\lambda_2 a_k.$$
The first- and second-order derivatives of the Lagrangian are
\begin{align}
&\frac{\mathrm{d} \mathbf{L}}{\mathrm{d} a_k} =-2a_k\left(X_k+ \frac{\sigma^2}{X_k}-\frac{\sigma^4}{X_k^3}+48\frac{\sigma^6}{X_k^5}+360\frac{\sigma^8}{X_k^7} \right)+2X_k+\lambda_1-\lambda_2  \nonumber \text{\,and}\\
&\frac{\mathrm{d}^2 \mathbf{L}}{\mathrm{d} a_k^2} = -2X_k\left(1+\displaystyle\frac{\sigma^2}{X_k^2}-\displaystyle\frac{\sigma^4}{X_k^4}+48\displaystyle\frac{\sigma^6}{X_k^6}+360\displaystyle\frac{\sigma^8}{X_k^8} \right), \text{respectively}\nonumber.
\end{align}
By solving  \eqref{lagrange_derivative}$-$\eqref{second_order}, we obtain,  $a_k = \left(1+\displaystyle\frac{\sigma^2}{X_k^2}-\displaystyle\frac{\sigma^4}{X_k^4}+48\displaystyle\frac{\sigma^6}{X_k^6}+360\displaystyle\frac{\sigma^8}{X_k^8} \right)^{-1}$. Similar to the earlier scenario where $S_k>0$, the sign of $\displaystyle\frac{\mathrm{d}^2 \mathbf{L}}{\mathrm{d} a_k^2}$ depends upon the sign of $X_k$. Since $X_k<0$,
$\displaystyle\frac{\mathrm{d}^2 \mathbf{L}}{\mathrm{d} a_k^2} = -2X_k\left(1+\displaystyle\frac{\sigma^2}{X_k^2}-\displaystyle\frac{\sigma^4}{X_k^4}+48\displaystyle\frac{\sigma^6}{X_k^6}+360\displaystyle\frac{\sigma^8}{X_k^8} \right)>0$, and hence, $a_k$ becomes the unique minimizer of $-\widehat {\cal R}$ over  $\left[0,1\right]$. It is interesting to note that
$$a_k = \left(1+\displaystyle\frac{\sigma^2}{X_k^2}-\displaystyle\frac{\sigma^4}{X_k^4}+48\displaystyle\frac{\sigma^6}{X_k^6}+360\displaystyle\frac{\sigma^8}{X_k^8} \right)^{-1},$$ 
turns out to be the unique minimizer of $\widehat {\cal R}$ when $X_k>0$, and the unique maximizer of $\widehat {\cal R}$ when $X_k<0$.

\subsection{Logarithmic mean-square error (log MSE)}
For the log MSE  distortion, the corresponding risk estimate is  (cf. Eq. (10) in Section IV-C),  
\begin{align*}
{\widehat {\cal R}}=&(\log S_k)^2+(\log a_kX_k)^2-2\log a_kX_k \log X_k
+2\left(\displaystyle\frac{\sigma^2}{X_k^2}-1.5\displaystyle\frac{\sigma^4}{X_k^4}+2.17\displaystyle\frac{\sigma^6}{X_k^6}-159.5\displaystyle\frac{\sigma^8}{X_k^8}\right)\\
&-2 \left(0.5\displaystyle\frac{\sigma^2}{X_k^2}-0.75\displaystyle\frac{\sigma^4}{X_k^4}-10\displaystyle\frac{\sigma^6}{X_k^6}-210\displaystyle\frac{\sigma^8}{X_k^8}\right)\log a_kX_k.
\end{align*}
The corresponding Lagrangian, its first- and second-order derivatives are given by
\begin{align*}
\mathbf{L}=&(\log S_k)^2+(\log a_kX_k)^2-2\log a_kX_k \log X_k
+2\left(\displaystyle\frac{\sigma^2}{X_k^2}-1.5\displaystyle\frac{\sigma^4}{X_k^4}+2.17\displaystyle\frac{\sigma^6}{X_k^6}-159.5\displaystyle\frac{\sigma^8}{X_k^8}\right)\\
&-2 \left(0.5\displaystyle\frac{\sigma^2}{X_k^2}-0.75\displaystyle\frac{\sigma^4}{X_k^4}-10\displaystyle\frac{\sigma^6}{X_k^6}-210\displaystyle\frac{\sigma^8}{X_k^8}\right)\log a_kX_k+\lambda_1 \left(a_k-1\right)-\lambda_2 a_k,
\end{align*}
\begin{align*}
\frac{\mathrm{d} \mathbf{L}}{\mathrm{d} a_k}=&2\frac{\log a_k}{a_k}-2 \frac{1}{a_k} \left(0.5\displaystyle\frac{\sigma^2}{X_k^2}-0.75\displaystyle\frac{\sigma^4}{X_k^4}-10\displaystyle\frac{\sigma^6}{X_k^6}-210\displaystyle\frac{\sigma^8}{X_k^8}\right)+\lambda_1-\lambda_2, \text{\,and}\\ 
\frac{\mathrm{d^2} \mathbf{L}}{\mathrm{d} a_k^2}=&\frac{2}{a_k^2}-2 \frac{\log a_k}{a_k^2} + \frac{2}{a_k^2}\left(0.5\displaystyle\frac{\sigma^2}{X_k^2}-0.75\displaystyle\frac{\sigma^4}{X_k^4}-10\displaystyle\frac{\sigma^6}{X_k^6}-210\displaystyle\frac{\sigma^8}{X_k^8}\right), \text{\,respectively}.
\end{align*}
Let $\beta= \left(0.5\displaystyle\frac{\sigma^2}{X_k^2}-0.75\displaystyle\frac{\sigma^4}{X_k^4}-10\displaystyle\frac{\sigma^6}{X_k^6}-210\displaystyle\frac{\sigma^8}{X_k^8}\right)$.  To solve \eqref {lagrange_derivative}$-$\eqref{second_order}, consider all possibilities for $\lambda_1$ and  $\lambda_2$.
\begin{itemize}
\item \textbf{Case 1}: $\lambda_1=0$, and $\lambda_2>0$\\
\eqref{complimentary_slackness} and   \eqref{primal_feasibility}  $\Rightarrow \, a_k=0$, and from  \eqref{lagrange_derivative}  we obtain
\begin{align}
\nonumber
\frac{\mathrm{d} \mathbf{L}}{\mathrm{d} a_k}\Bigg|_{a_k=0}&=2\frac{\log a_k}{a_k}-2 \frac{\beta}{a_k}-\lambda_2\Bigg|_{a_k=0}=0,\\ 
&\Rightarrow \,  -\infty = \frac{\lambda_2}{2},  \label{logmmse_Lagrang_der_case1}
\end{align}
which contradicts \eqref{dual_feasibility}, and hence $a_k=0$ is not a solution. 
\item \textbf{Case 2}: $\lambda_1>0$, and $\lambda_2=0$\\
\eqref{complimentary_slackness} and  \eqref{primal_feasibility} $\Rightarrow \, a_k=1$, and from \eqref{lagrange_derivative}  we obtain
\begin{align}
\nonumber
\frac{\mathrm{d} \mathbf{L}}{\mathrm{d} a_k}\Bigg|_{a_k=1}&=2\frac{\log a_k}{a_k}-2 \frac{\beta}{a_k}+\lambda_1\Bigg|_{a_k=1}=0,\\ 
&\Rightarrow \,  \beta= \frac{\lambda_1}{2}  \overset{\eqref{dual_feasibility}}{>} 0,\label{logmmse_Lagrang_der_case2}
\end{align}
$\Rightarrow \,$ if $\beta>0$, then $a_k=1$.  
\item \textbf{Case 3}: $\lambda_1>0$, and $\lambda_2>0$\\
\eqref{complimentary_slackness} and \eqref{primal_feasibility} imply that no solution exists.
 \item \textbf{Case 4}: $\lambda_1=0$, and $\lambda_2=0$\\
In this case, for  $a_k \in \left[0,1\right]$, \eqref{complimentary_slackness} is satisfied. Using \eqref{lagrange_derivative}, we obtain $a_k$ as follows:
\begin{align}
\nonumber
\frac{\mathrm{d} \mathbf{L}}{\mathrm{d} a_k}&=2\frac{\log a_k}{a_k}-2 \frac{\beta}{a_k}=0,\\ 
&\Rightarrow \, a_k=\exp\left(\beta\right).
\end{align}
 \end{itemize}
Consolidating \textbf{Case 1} to \textbf{Case 4}, we obtain the optimum $a_k=\min\left\{1,\exp\left(\beta\right)\right\}$. Next, we check the second-order condition  \eqref{second_order} to determine whether the proposed solution is the unique minimizer or not. Consider
\begin{align}
\nonumber
\frac{\mathrm{d^2} \mathbf{L}}{\mathrm{d} a_k^2}&=\displaystyle\frac{2}{a_k^2}-2 \displaystyle\frac{\log a_k}{a_k^2} + \displaystyle\frac{2}{a_k^2}\beta,\\ \nonumber
&=\displaystyle\frac{2}{\left(\min\left\{1,\exp\left(\beta\right)\right\}\right)^2}-2 \displaystyle\frac{\log \min\left\{1,\exp\left(\beta\right)\right\}}{\left(\min\left\{1,\exp\left(\beta\right)\right\}\right)^2} + \displaystyle\frac{2}{\left(\min\left\{1,\exp\left(\beta\right)\right\}\right)^2}\beta,\\ 
&=
\begin{cases}
 2 \exp\left(-2\beta\right) >0, \quad \text{if}\, \exp\left(\beta\right)<1,\\ 
\displaystyle2+2\beta>0, \quad  \text{if} \,  \exp\left(\beta\right)>1.
\label{logmmse_second_der}
\end{cases}
\end{align}
From~\eqref{logmmse_second_der}, we observe that $$a_k=\min\left\{1,\exp\left(0.5\displaystyle\frac{\sigma^2}{X_k^2}-0.75\displaystyle\frac{\sigma^4}{X_k^4}-10\displaystyle\frac{\sigma^6}{X_k^6}-210\displaystyle\frac{\sigma^8}{X_k^8}\right)\right\}$$ is the unique minimizer of $\widehat{\mathcal{R}}$ over $\left[0,1\right]$. 
\subsection{Itakura-Saito (IS) distortion}
In this case, the risk estimate is (cf. Eq. (12) in Section IV-D),
\begin{equation}
{\widehat {\cal R}}=a_k\left(1+60\displaystyle\frac{\sigma^6}{X_k^6}+840\displaystyle\frac{\sigma^8}{X_k^8}\right)-\log a_kX_k+\log\,S_k-1.\nonumber
\end{equation}
The corresponding Lagrangian,  its first- and second-order derivatives are  
\begin{align*}
\mathbf{L}=&a_k\left(1+60\displaystyle\frac{\sigma^6}{X_k^6}+840\displaystyle\frac{\sigma^8}{X_k^8}\right)-\log a_kX_k+\log\,S_k-1+\lambda_1 \left(a_k-1\right)-\lambda_2 a_k,\\
\frac{\mathrm{d} \mathbf{L}}{\mathrm{d} a_k}=&\left(1+60\displaystyle\frac{\sigma^6}{X_k^6}+840\displaystyle\frac{\sigma^8}{X_k^8}\right)-\frac{1}{a_k}+\lambda_1-\lambda_2, \text{\, and}\\ 
\frac{\mathrm{d^2} \mathbf{L}}{\mathrm{d} a_k^2}=&\displaystyle\frac{1}{a_k^2}, \text{respectively}.
\end{align*}
\begin{itemize}
 \item \textbf{Case 1}: $\lambda_1=0$, and $\lambda_2>0$\\
\eqref{complimentary_slackness} and   \eqref{primal_feasibility}  $\Rightarrow \, a_k=0$, and from  \eqref{lagrange_derivative}  we obtain
\begin{align}
\nonumber
\frac{\mathrm{d} \mathbf{L}}{\mathrm{d} a_k}\Bigg|_{a_k=0}&=\left(1+60\displaystyle\frac{\sigma^6}{X_k^6}+840\displaystyle\frac{\sigma^8}{X_k^8}\right)-\frac{1}{a_k}-\lambda_2,\Bigg|_{a_k=0}=0,\\ 
&\Rightarrow \,  -\infty = \lambda_2 \label{IS_Lagrang_der_case1}, 
\end{align}
which contradicts \eqref{dual_feasibility}, and hence $a_k=0$ is not a solution. 
\item \textbf{Case 2}: $\lambda_1>0$, and $\lambda_2=0$\\
\eqref{complimentary_slackness} and  \eqref{primal_feasibility} $\Rightarrow \, a_k=1$, and from \eqref{lagrange_derivative}  we obtain
\begin{align}
\nonumber
\frac{\mathrm{d} \mathbf{L}}{\mathrm{d} a_k}\Bigg|_{a_k=1}&=\left(1+60\displaystyle\frac{\sigma^6}{X_k^6}+840\displaystyle\frac{\sigma^8}{X_k^8}\right)-1+\lambda_1,\\ 
&\Rightarrow \, \underbrace{1-\left(1+60\displaystyle\frac{\sigma^6}{X_k^6}+840\displaystyle\frac{\sigma^8}{X_k^8}\right)}_{<0}= \lambda_1, \label{IS_Lagrang_der_case2}
\end{align}
which contradicts \eqref{dual_feasibility}, and hence  $a_k=1$ is not a solution.  
\item \textbf{Case 3}: $\lambda_1>0$, and $\lambda_2>0$\\
In this case, \eqref{complimentary_slackness} and \eqref{primal_feasibility} imply that no solution exists.
 \item \textbf{Case 4}: $\lambda_1=0$, and $\lambda_2=0$\\
In this case, for $a_k \in \left[0,1\right]$, \eqref{complimentary_slackness} is satisfied. One can obtain $a_k$ using  \eqref{lagrange_derivative}, as follows
\begin{align}
\nonumber
\frac{\mathrm{d} \mathbf{L}}{\mathrm{d} a_k}&=\left(1+60\displaystyle\frac{\sigma^6}{X_k^6}+840\displaystyle\frac{\sigma^8}{X_k^8}\right)-\frac{1}{a_k}=0,\\ 
&\Rightarrow \, a_k=\left(1+60\displaystyle\frac{\sigma^6}{X_k^6}+840\displaystyle\frac{\sigma^8}{X_k^8}\right)^{-1}.
\end{align}
 \end{itemize}
One can observe that $\displaystyle\frac{\mathrm{d^2} \mathbf{L}}{\mathrm{d} a_k^2}>0$ implies that $a_k=\left(1+60\displaystyle\frac{\sigma^6}{X_k^6}+840\displaystyle\frac{\sigma^8}{X_k^8}\right)^{-1}$ is the unique minimizer of the function $\widehat{\mathcal{R}}$ over $\left[0,1\right]$.
\subsection{Itakura-Saito (IS) - II distortion}
In this case, the risk estimate ${\widehat {\cal R}}$ (cf. Eq. (13) in Section IV-E), the corresponding Lagrangian, its first- and second-order derivatives turn out to be
\begin{align*}
{\widehat {\cal R}}=&a_k^2\left(1+\displaystyle\frac{\sigma^2}{X_k^2}-3\displaystyle\frac{\sigma^4}{X_k^4}+360\displaystyle\frac{\sigma^6}{X_k^6}+4200\displaystyle\frac{\sigma^8}{X_k^8}\right)-\log a_k^2X_k^2+\log S_k^2-1,\\
\mathbf{L}=a_k^2&\left(1+\displaystyle\frac{\sigma^2}{X_k^2}-3\displaystyle\frac{\sigma^4}{X_k^4}+360\displaystyle\frac{\sigma^6}{X_k^6}+4200\displaystyle\frac{\sigma^8}{X_k^8}\right)-\log a_k^2X_k^2+\log S_k^2-1+\lambda_1 \left(a_k-1\right)-\lambda_2 a_k,\\
\frac{\mathrm{d} \mathbf{L}}{\mathrm{d} a_k}=&2a_k\left(1+\displaystyle\frac{\sigma^2}{X_k^2}-3\displaystyle\frac{\sigma^4}{X_k^4}+360\displaystyle\frac{\sigma^6}{X_k^6}+4200\displaystyle\frac{\sigma^8}{X_k^8}\right)-\displaystyle\frac{2}{a_k}+\lambda_1-\lambda_2,   \text{\,and}\\ 
\frac{\mathrm{d^2} \mathbf{L}}{\mathrm{d} a_k^2}=&2\left(1+\displaystyle\frac{\sigma^2}{X_k^2}-3\displaystyle\frac{\sigma^4}{X_k^4}+360\displaystyle\frac{\sigma^6}{X_k^6}+4200\displaystyle\frac{\sigma^8}{X_k^8}\right)+\displaystyle\frac{2}{a_k^2}, \text{\,respectively}.
\end{align*}
Consider the four cases below:
\begin{itemize}
 \item \textbf{Case 1}: $\lambda_1=0$, and $\lambda_2>0$\\
\eqref{complimentary_slackness} and   \eqref{primal_feasibility}  $\Rightarrow \, a_k=0$, and from  \eqref{lagrange_derivative}  we obtain
\begin{align}
\nonumber
\frac{\mathrm{d} \mathbf{L}}{\mathrm{d} a_k}\Bigg|_{a_k=0}&=2a_k\left(1+\displaystyle\frac{\sigma^2}{X_k^2}-3\displaystyle\frac{\sigma^4}{X_k^4}+360\displaystyle\frac{\sigma^6}{X_k^6}+4200\displaystyle\frac{\sigma^8}{X_k^8}\right)-\displaystyle\frac{2}{a_k}-\lambda_2 ,\Bigg|_{a_k=0}=0,\\ 
&\Rightarrow \,  -\infty = \frac{\lambda_2}{2},\label{ISII_Lagrang_der_case1}
\end{align}
which contradicts \eqref{dual_feasibility}, and hence $a_k=0$ is not a solution. 
\item \textbf{Case 2}: $\lambda_1>0$, and $\lambda_2=0$\\
\eqref{complimentary_slackness} and  \eqref{primal_feasibility} $\Rightarrow \, a_k=1$, and from \eqref{lagrange_derivative}  we obtain
\begin{align}
\nonumber
\frac{\mathrm{d} \mathbf{L}}{\mathrm{d} a_k}\Bigg|_{a_k=1}&=2\left(1+\displaystyle\frac{\sigma^2}{X_k^2}-3\displaystyle\frac{\sigma^4}{X_k^4}+360\displaystyle\frac{\sigma^6}{X_k^6}+4200\displaystyle\frac{\sigma^8}{X_k^8}\right)-2+\lambda_1,\\ 
&\Rightarrow \,  1-\left(1+\displaystyle\frac{\sigma^2}{X_k^2}-3\displaystyle\frac{\sigma^4}{X_k^4}+360\displaystyle\frac{\sigma^6}{X_k^6}+4200\displaystyle\frac{\sigma^8}{X_k^8}\right)= \frac{\lambda_1}{2}  \overset{\eqref{dual_feasibility}}{>} 0,\label{ISII_Lagrang_der_case2}
\end{align}
$\Rightarrow \, $if  $1-\left(1+\displaystyle\frac{\sigma^2}{X_k^2}-3\displaystyle\frac{\sigma^4}{X_k^4}+360\displaystyle\frac{\sigma^6}{X_k^6}+4200\displaystyle\frac{\sigma^8}{X_k^8}\right)>0$,  then  $a_k=1$.  
\item \textbf{Case 3}: $\lambda_1>0$, and $\lambda_2>0$\\
\eqref{complimentary_slackness} and \eqref{primal_feasibility} imply that no solution exists.
 \item \textbf{Case 4}: $\lambda_1=0$, and $\lambda_2=0$\\
In this case, for $a_k \in \left[0,1\right]$, \eqref{complimentary_slackness} is satisfied. Using  \eqref{lagrange_derivative},  one can obtain $a_k$ as follows:
\begin{align}
\nonumber
\frac{\mathrm{d} \mathbf{L}}{\mathrm{d} a_k}&=2a_k\left(1+\displaystyle\frac{\sigma^2}{X_k^2}-3\displaystyle\frac{\sigma^4}{X_k^4}+360\displaystyle\frac{\sigma^6}{X_k^6}+4200\displaystyle\frac{\sigma^8}{X_k^8}\right)-\displaystyle\frac{2}{a_k} =0,\\ 
&\Rightarrow \, a_k=\left(1+\displaystyle\frac{\sigma^2}{X_k^2}-3\displaystyle\frac{\sigma^4}{X_k^4}+360\displaystyle\frac{\sigma^6}{X_k^6}+4200\displaystyle\frac{\sigma^8}{X_k^8}\right)^{-\frac{1}{2}}.
\end{align}
 \end{itemize}
From  \textbf{Case 1} to \textbf{Case 4}, we obtain the optimum shrinkage as
 $ a_k=\min\left\{1, \left(1+\displaystyle\frac{\sigma^2}{X_k^2}-3\displaystyle\frac{\sigma^4}{X_k^4}+360\displaystyle\frac{\sigma^6}{X_k^6}+4200\displaystyle\frac{\sigma^8}{X_k^8}\right)^{-\frac{1}{2}} \right\}$. Since $\displaystyle \frac{\mathrm{d^2} \mathbf{L} \left(a_k,\lambda_1,\lambda_2\right)}{\mathrm{d} a_k^2}>0$,  $a_k$ turns out to be  the unique minimizer of ${\widehat {\cal R}}$ over $\left[0,1\right]$. 
 \vspace{-0.25cm}
\subsection{Hyperbolic cosine distortion measure (COSH)}
The COSH risk estimate is (cf. Eq. (14) in Section IV-F)
\begin{align*}
{\widehat {\cal R}}=\displaystyle\frac{1}{2}\left(\displaystyle\frac{1}{a_k}+\displaystyle\frac{\sigma^2}{a_kX_k^2}+a_k\left(1+60\displaystyle\frac{\sigma^6}{X_k^6}+840\displaystyle\frac{\sigma^8}{X_k^8}\right)\right)-1.
\end{align*}
The corresponding Lagrangian,  its first- and second-order derivatives are
\begin{align*}
\mathbf{L}&=\displaystyle\frac{1}{2}\left(\displaystyle\frac{1}{a_k}+\displaystyle\frac{\sigma^2}{a_kX_k^2}+a_k\left(1+60\displaystyle\frac{\sigma^6}{X_k^6}+840\displaystyle\frac{\sigma^8}{X_k^8}\right)\right)-1+\lambda_1 \left(a_k-1\right)-\lambda_2 a_k,\\
\frac{\mathrm{d} \mathbf{L}}{\mathrm{d} a_k}&=\displaystyle\frac{1}{2}\left(\displaystyle-\frac{1}{a_k^2}-\displaystyle\frac{\sigma^2}{a_k^2X_k^2}+\left(1+60\displaystyle\frac{\sigma^6}{X_k^6}+840\displaystyle\frac{\sigma^8}{X_k^8}\right)\right)+\lambda_1-\lambda_2,  \text{\,and} \\ 
\frac{\mathrm{d^2} \mathbf{L}}{\mathrm{d} a_k^2}&=\displaystyle \left(\displaystyle \frac{1}{a_k^3}+\displaystyle\frac{\sigma^2}{a_k^3X_k^2}\right).
\end{align*}
Consider four cases of  $\lambda_1$ and $\lambda_2$.
\begin{itemize}
 \item \textbf{Case 1}: $\lambda_1=0$, and $\lambda_2>0$\\
\eqref{complimentary_slackness} and   \eqref{primal_feasibility}  $\Rightarrow \, a_k=0$, and from  \eqref{lagrange_derivative}  we obtain
\begin{align}
\nonumber
\frac{\mathrm{d} \mathbf{L}}{\mathrm{d} a_k}\Bigg|_{a_k=0}&=\displaystyle\frac{1}{2}\left(\displaystyle-\frac{1}{a_k^2}-\displaystyle\frac{\sigma^2}{a_k^2X_k^2}+\left(1+60\displaystyle\frac{\sigma^6}{X_k^6}+840\displaystyle\frac{\sigma^8}{X_k^8}\right)\right)-\lambda_2\Bigg|_{a_k=0}=0,\\ 
&\Rightarrow \,  -\infty = \lambda_2,\label{COSH_Lagrang_der_case1}
\end{align}
which contradicts \eqref{dual_feasibility}, and hence $a_k=0$ is not a solution. 
\item \textbf{Case 2}: $\lambda_1>0$, and $\lambda_2=0$\\
\eqref{complimentary_slackness} and  \eqref{primal_feasibility} $\Rightarrow \, a_k=1$, and from \eqref{lagrange_derivative}, we obtain
\begin{align}
\nonumber
\frac{\mathrm{d} \mathbf{L}}{\mathrm{d} a_k}\Bigg|_{a_k=1}&=\displaystyle\frac{1}{2}\left(\displaystyle-\left(1+\displaystyle\frac{\sigma^2}{X_k^2}\right)+\left(1+60\displaystyle\frac{\sigma^6}{X_k^6}+840\displaystyle\frac{\sigma^8}{X_k^8}\right)\right)+\lambda_1=0,\\ 
&\Rightarrow \,  \left(1+\displaystyle\frac{\sigma^2}{X_k^2}\right)-\left(1+60\displaystyle\frac{\sigma^6}{X_k^6}+840\displaystyle\frac{\sigma^8}{X_k^8}\right)= 2 \lambda_1  \overset{\eqref{dual_feasibility}}{>} 0,\label{COSH_Lagrang_der_case2}
\end{align}
$\Rightarrow \,$ if  $\left(1+60\displaystyle\frac{\sigma^6}{X_k^6}+840\displaystyle\frac{\sigma^8}{X_k^8}\right)<\left(1+\displaystyle\frac{\sigma^2}{X_k^2}\right)$, then  $a_k=1$ is  a solution.  
\item \textbf{Case 3}: $\lambda_1>0$, and $\lambda_2>0$\\
\eqref{complimentary_slackness} and \eqref{primal_feasibility} imply that no solution exists.
 \item \textbf{Case 4}: $\lambda_1=0$, and $\lambda_2=0$\\
In this case, for  $a_k \in \left[0,1\right]$, \eqref{complimentary_slackness} is satisfied. Using  \eqref{lagrange_derivative} one can obtain $a_k$ as follows:
\begin{align}
\nonumber
\frac{\mathrm{d} \mathbf{L}}{\mathrm{d} a_k}&=\displaystyle\frac{1}{2}\left(\displaystyle-\frac{1}{a_k^2}\left(1+\displaystyle\frac{\sigma^2}{X_k^2}\right)+\left(1+60\displaystyle\frac{\sigma^6}{X_k^6}+840\displaystyle\frac{\sigma^8}{X_k^8}\right)\right)=0,\\ 
&\Rightarrow \, a_k=\left( \frac{1+\displaystyle\frac{\sigma^2}{X_k^2}}{1+60\displaystyle\frac{\sigma^6}{X_k^6}+840\displaystyle\frac{\sigma^8}{X_k^8}}\right)^\frac{1}{2}.
\end{align}
 \end{itemize}
From \textbf{Case 1} to \textbf{Case 4}, we obtain
$$ a_k=\min\left\{1, \left( \frac{1+\displaystyle\frac{\sigma^2}{X_k^2}}{1+60\displaystyle\frac{\sigma^6}{X_k^6}+840\displaystyle\frac{\sigma^8}{X_k^8}}\right)^\frac{1}{2}\right\}.$$
Since $\displaystyle \frac{\mathrm{d^2} \mathbf{L} \left(a_k,\lambda_1,\lambda_2\right)}{\mathrm{d} a_k^2}>0$,  $a_k$ turns out to be  the unique minimizer of ${\widehat {\cal R}}$ over $\left[0,1\right]$. 
\subsection{Weighted cosine distortion measure (WCOSH)}
Here, we consider ${\widehat {\cal R}}$ (cf. Eq. (23) in Appendix F) defined as follows:
\begin{align*}
{\widehat {\cal R}}= \frac{1}{2}  \frac{a_k}{X_k} \left( 1-\frac{\sigma^2}{X_k^2}+3 \frac{\sigma^4}{X_k^4}+420\frac{\sigma^6}{X_k^6}+8400\frac{\sigma^8}{X_k^8} \right)+ \frac{1}{2a_kX_k} -\frac{1}{S_k}.
\end{align*}
The goal is to find $a_k$ that minimizes  ${\widehat {\cal R}}$ when $S_k>0$, and maximizes it when $S_k<0$. Under the high SNR assumption, we assume that $S_k$ and $X_k$ have the same sign. First, we consider the case where $S_k>0$. To obtain the optimum  $a_k$,  we solve \eqref {lagrange_derivative}$-$\eqref{second_order} where
\begin{align*}
&\mathbf{L}=\frac{1}{2}  \frac{a_k}{X_k} \left( 1-\frac{\sigma^2}{X_k^2}+3 \frac{\sigma^4}{X_k^4}+420\frac{\sigma^6}{X_k^6}+8400\frac{\sigma^8}{X_k^8} \right)+ \frac{1}{2a_kX_k} -\frac{1}{S_k}+\lambda_1 \left(a_k-1\right)-\lambda_2 a_k,\\
&\frac{\mathrm{d} \mathbf{L}}{\mathrm{d} a_k}= \frac{1}{2X_k} \left( 1-\frac{\sigma^2}{X_k^2}+3 \frac{\sigma^4}{X_k^4}+420\frac{\sigma^6}{X_k^6}+8400\frac{\sigma^8}{X_k^8} \right)-\frac{1}{2a_k^2X_k} +\lambda_1-\lambda_2, \text{\,and} \\ 
&\frac{\mathrm{d^2} \mathbf{L}}{\mathrm{d} a_k^2}=\displaystyle\frac{1}{a_k^3X_k}.
\end{align*}
\begin{itemize}
 \item \textbf{Case 1}: $\lambda_1=0$, and $\lambda_2>0$\\
\eqref{complimentary_slackness} and   \eqref{primal_feasibility}  $\Rightarrow \, a_k=0$, and from  \eqref{lagrange_derivative}  we obtain
\begin{align}
\nonumber
\frac{\mathrm{d} \mathbf{L}}{\mathrm{d} a_k}\Bigg|_{a_k=0}&=\frac{1}{2X_k} \left( 1-\frac{\sigma^2}{X_k^2}+3 \frac{\sigma^4}{X_k^4}+420\frac{\sigma^6}{X_k^6}+8400\frac{\sigma^8}{X_k^8} \right)-\frac{1}{2a_k^2X_k}-\lambda_2\Bigg|_{a_k=0}=0,\\ 
&\Rightarrow \,  -\infty = \lambda_2,\label{WCOSH_Lagrang_der_case1}
\end{align}
which  contradicts \eqref{dual_feasibility}, and hence $a_k=0$ is not a solution. 
\item \textbf{Case 2}: $\lambda_1>0$, and $\lambda_2=0$\\
\eqref{complimentary_slackness} and  \eqref{primal_feasibility} $\Rightarrow \, a_k=1$, and from \eqref{lagrange_derivative}  we obtain
\begin{align*}
\nonumber
\frac{\mathrm{d} \mathbf{L}}{\mathrm{d} a_k}\Bigg|_{a_k=1}&=\frac{1}{2X_k} \left( 1-\frac{\sigma^2}{X_k^2}+3 \frac{\sigma^4}{X_k^4}+420\frac{\sigma^6}{X_k^6}+8400\frac{\sigma^8}{X_k^8} \right)-\frac{1}{2X_k}+\lambda_1=0,\\ 
&\Rightarrow \,  \frac{1}{X_k}-\frac{1}{X_k} \left( 1-\frac{\sigma^2}{X_k^2}+3 \frac{\sigma^4}{X_k^4}+420\frac{\sigma^6}{X_k^6}+8400\frac{\sigma^8}{X_k^8} \right)= 2 \lambda_1  \overset{\eqref{dual_feasibility}}{>} 0, \label{WCOSH_Lagrang_der_case2}
\end{align*}
$\Rightarrow \,$ if $1>\left(  \displaystyle 1-\frac{\sigma^2}{X_k^2}+3 \frac{\sigma^4}{X_k^4}+420\frac{\sigma^6}{X_k^6}+8400\frac{\sigma^8}{X_k^8} \right)$, then  $a_k=1$.  
\item \textbf{Case 3}: $\lambda_1>0$, and $\lambda_2>0$\\
\eqref{complimentary_slackness} and \eqref{primal_feasibility} imply that no solution exists.
\item \textbf{Case 4}: $\lambda_1=0$, and $\lambda_2=0$\\
In this case, for  $a_k \in \left[0,1\right]$, \eqref{complimentary_slackness} is satisfied. Using \eqref{lagrange_derivative}, we obtain $a_k$  as follows:
\begin{align*}
\nonumber
\frac{\mathrm{d} \mathbf{L}}{\mathrm{d} a_k}&= \frac{1}{2X_k} \left( 1-\frac{\sigma^2}{X_k^2}+3 \frac{\sigma^4}{X_k^4}+420\frac{\sigma^6}{X_k^6}+8400\frac{\sigma^8}{X_k^8} \right)-\frac{1}{2a_k^2X_k} =0,\\ 
&\Rightarrow \, a_k= \left( 1-\frac{\sigma^2}{X_k^2}+3 \frac{\sigma^4}{X_k^4}+420\frac{\sigma^6}{X_k^6}+8400\frac{\sigma^8}{X_k^8} \right)^{-\frac{1}{2}}.
\end{align*}
 \end{itemize}
From  \textbf{Case 1} to \textbf{Case 4}, we obtain that 
$$a_k=\min \left\{1,\displaystyle  \left( 1-\frac{\sigma^2}{X_k^2}+3 \frac{\sigma^4}{X_k^4}+420\frac{\sigma^6}{X_k^6}+8400\frac{\sigma^8}{X_k^8} \right)^{-\frac{1}{2}}\right\}.$$ 
We observe that,  
$$\displaystyle \frac{\mathrm{d^2} \mathbf{L}}{\mathrm{d} a_k^2}=\displaystyle\frac{1}{(a_k)^3X_k}>0,$$ which indicates that the solution obtained is the unique minimizer.\\
\indent Next, consider the scenario $S_k<0$, where the goal is to obtain  $a_k\in \left[0,1\right]$, which maximizes ${\widehat {\cal R}}$, or equivalently minimizes $-{\widehat {\cal R}}$. We are required to solve \eqref {lagrange_derivative}$-$\eqref{second_order}, where
\begin{align*}
&\mathbf{L}=-\frac{1}{2}  \frac{a_k}{X_k} \left( 1-\frac{\sigma^2}{X_k^2}+3 \frac{\sigma^4}{X_k^4}+420\frac{\sigma^6}{X_k^6}+8400\frac{\sigma^8}{X_k^8} \right)- \frac{1}{2a_kX_k} +\frac{1}{S_k}+\lambda_1 \left(a_k-1\right)-\lambda_2 a_k,\\
&\frac{\mathrm{d} \mathbf{L}}{\mathrm{d} a_k}= -\frac{1}{2X_k} \left( 1-\frac{\sigma^2}{X_k^2}+3 \frac{\sigma^4}{X_k^4}+420\frac{\sigma^6}{X_k^6}+8400\frac{\sigma^8}{X_k^8} \right)+\frac{1}{2a_k^2X_k} +\lambda_1-\lambda_2, \text{\,and} \\ 
&\frac{\mathrm{d^2} \mathbf{L}}{\mathrm{d} a_k^2}=-\displaystyle\frac{1}{a_k^3X_k}.
\end{align*}
Solving  \eqref {lagrange_derivative}$-$\eqref{second_order} yields  
$$a_k=\min \left\{1,\displaystyle  \left( 1-\frac{\sigma^2}{X_k^2}+3 \frac{\sigma^4}{X_k^4}+420\frac{\sigma^6}{X_k^6}+8400\frac{\sigma^8}{X_k^8} \right)^{-\frac{1}{2}}\right\}.$$ 
Since $X_k<0$, 
$$\displaystyle \frac{\mathrm{d^2} \mathbf{L}}{\mathrm{d} a_k^2}=-\displaystyle\frac{1}{a_k^3X_k} > 0,$$ 
implying that $a_k$ is the unique minimizer of $-{\widehat {\cal R}}$ over $a_k \in \left[0,1\right]$. It is worth mentioning that the $a_k$ obtained is the unique minimizer when $S_k>0$ and the unique maximizer when $S_k<0$.\\


\begin{thebibliography}{99}

\bibitem{Lim1}
{J.~S.~Lim and A.~V.~Oppenheim, ``Enhancement and bandwidth compression of noisy speech,'' 
\emph{Proc.~IEEE}, vol.~67, no.~12, pp.1586--1604, Dec.~1979.}

\bibitem{PLoizou}
{P.~Loizou, \emph{Speech Enhancement --- Theory and Practice}. CRC Press, 2007.}

\bibitem{Boll}
{S.~Boll, ``Suppression of acoustic noise in speech using spectral subtraction,'' \emph{ IEEE Trans.~Acoust.~Speech, Signal Process.}, ASSP-27(2), pp.~113--120, Apr.~1979.} 

\bibitem{Weiss}
{ M.~Weiss, E.~Aschkensy, and T.~Parson, ``Study and the development of the INTEL techniques for improving speech intelligibility,'' 
\emph{Technical Report NSC-FR/4023, Nicolet Scientific Corporation}, 1974.}

\bibitem{Lockwood}
{P.~Lockwood and J.~Boudy, ``Experiments with a non-linear spectral subtractor (NSS), hidden Markov models and the projections, for robust recognition in cars,'' 
\emph{Speech Comm.}, vol.~11, issue 2, pp.~215--228, Jun.~1992}

\bibitem{Kamath}
{S.~Kamath and P.~Loizou, ``A multi-band spectral subtraction method for enhancing speech corrupted by colored noise,''  
\emph{Proc.~IEEE Int.~Conf.~Acoust., Speech, Signal Process.}, vol.~4, pp.~4164--4167, May~2002.}


\bibitem{Hu1}
{ Y.~Hu and P.~Loizou, ``A  perceptually motivated approach for speech enhancement,''
\emph{IEEE Trans.~Speech, Audio Process.}, vol.~11, no.~5, pp.~457--465, Sep.~2003.}

\bibitem{Hu2}
{ Y.~Hu and P.~Loizou, ``Incorporating psycho-acoustical model in frequency domain speech enhancement,''
\emph{IEEE Signal Process.~Lett.}, vol.~11, no.~1, pp.~270--273, Feb.~2004.}

\bibitem{Chen}
{ J.~Chen, J.~Benesty, Y.~Huang, and S.~Doclo, ``New insight into noise reduction Wiener filter,''
\emph{IEEE Trans.~Speech, Audio Process.}, vol.~14, no.~4, pp.~1218--1234, Jul.~2006.}

\bibitem{Scalart}
{ P.~Scalart and J.~V.~Filho, ``Speech  enhancement based on a priori signal to noise estimation,''
\emph{Proc.~IEEE Int.~Conf.~Acoust., Speech, Signal Process.}, vol.~2,  pp.~629--632, May.~1996.}

\bibitem{Lim2}
{J.~S.~Lim and A.~V.~Oppenheim, ``All-pole modeling of degraded speech,'' 
\emph{IEEE Trans.~Acoust., Speech, Signal Process.}, vol.~ASSP-26, no.~3, pp.~197--210, Jun.~1978.}

\bibitem{Hansen}
{J.~H.~L.~Hansen and M.~A.~Clements, ``Constrained iterative speech enhancement with application to speech recognition,'' 
\emph{IEEE Trans.~Signal Process.}, vol.~39, no.~4, pp.~795--805, Apr.~1991.}

\bibitem{TVS}
{ T.~V.~Sreenivas and P.~Kirnapure, ``Codebook constrained Wiener filtering for speech enhancement,'' 
\emph{IEEE Trans.~Speech, Audio Process.}, vol.4, no.~5, pp.~383--389, Sep.~1996.}

\bibitem{Srinivasan1}
{ S.~Srinivasan, J.~Samuelsson, and W.~Kleijn, ``Codebook driven short-term predictor parameter estimation for speech enhancement,'' 
\emph{IEEE Trans.~Speech, Audio Process.}, vol.~14, no.1, pp.~163--176, Jan.~2006.}

\bibitem{Srinivasan2}
{ S.~Srinivasan, J.~Samuelsson, and W.~Kleijn, ``Codebook-based Bayesian speech enhancement for nonstationary environments,'' 
\emph{IEEE Trans.~Speech, Audio Process.}, vol.~14, no.~15, pp.~441--452, Feb.~2007.}

\bibitem{Tobias}
{ T.~Rosenkranz and H.~Puder, ``Improving robustness of codebook-based noise estimation approaches with delta codebooks,'' 
\emph{IEEE Trans.~Speech, Audio Process.}, vol.~20, no.4, pp.~1177--1188, May.~2012.}


\bibitem{Vantrees}
{Y.~Ephraim and H.~L.~Van Trees, ``A signal subspace approach for speech enhancement,'' 
\emph{IEEE Trans.~Speech, Audio Process.}, vol.~3, no.~4, pp.~251--266, Jul.~1995.}

\bibitem{Mittal}
{U.~Mittal and N.~Phamdo, ``Signal/noise KLT based approach for enhancing speech degraded by colored noise,'' 
\emph{IEEE Trans.~Speech, Audio Process.}, vol.~8, no.~2, pp.~159--167, Mar.~2000.}

\bibitem{Huang}
{J.~Huang and Y.~Zhao, ``An energy-constrained signal subspace method for speech enhancement and recognition in colored noise,'' 
\emph{Speech Comm.}, vol.~1, issue 3, pp.~165--181, Nov.~1998.}

\bibitem{Rezayee}
{A.~Rezayee and S.~Gazor, ``An adaptive KLT approach for speech enhancement,'' 
\emph{IEEE Trans.~Speech, Audio Process.}, vol.~9, no.~2, pp.~87--95, Feb.~2001.}

\bibitem{Dendrinos}
{M.~Dendrinos, S.~Bakamidis, and G.~Carayannis, ``Speech enhancement from noise: A regenerative approach,'' 
\emph{Speech Comm.}, vol.~10, issue 1, pp.~45--57, Feb.~1991.}

\bibitem{Hansen1}
{ P.~S.~K.~Hansen, P.~C.~Hansen, S.~D.~Hansen, and J.~A.~Sorensen, ``Experimental comparison of signal subspace based noise reduction methods,''
\emph{Proc.~IEEE Int.~Conf.~Acoust., Speech, Signal Process.}, vol.~1, pp.~101--104, 1999.}

\bibitem{Hansen2}
{ P.~S.~K.~Hansen, P.~C.~Hansen, S.~D.~Hansen, and J.~A.~Sorensen, ``Reduction of broad-band noise in speech by truncated QSVD,'' 
\emph{IEEE Trans.~Speech, Audio Process.}, vol.~3, no.~6, pp.~439--448, Nov.~1995.}

\bibitem{Jabloun}
{ F.~Jabloun and B.~Champagne, ``Incorporating the human hearing properties in the signal subspace approach for speech enhancement,''
\emph{IEEE Trans.~Speech, Audio Process.}, vol.~11, no.~6, Nov.~2003.}

\bibitem{McAulay}
{ R.~J.~McAulay and M.~L.~Malpass, ``Speech enhancement using a soft decision noise suppression filter,'' 
\emph{IEEE Trans.~Acoust., Speech, Signal Process.}, vol.~ASSP-28, no.~2, pp.~137--145, Apr.~1980.}

\bibitem{Ephraim-Malah1}
{Y.~Ephraim and D.~Malah, ``Speech enhancement using a minimum mean-squared error short-time spectral amplitude estimator,'' 
\emph{IEEE Trans.~Acoust., Speech, Signal Process.}, vol.~ASSP-32, no.~6, pp.~1109--1121, Dec.~1984.}

\bibitem{Mathew}
{ M.~McCallum and B.~Guillemin, ``Stochastic-deterministic MMSE STFT speech enhancement with general a priori information,'' 
\emph{IEEE Trans.~Speech, Audio Process.}, vol.~21, no.~7, pp.~1445--1457, Jul.~2013.}

\bibitem{Ephraim}
{Y.~Ephraim, ``A Bayesian estimation approach for speech enhancement using hidden Markov models,'' 
\emph{IEEE Trans.~Signal Process.}, vol.~40, no.~4, pp.~725--735, Apr.~1992.}

\bibitem{Erkelens}
{J.~S.~Erkelens, R.~C.~ Hendriks, R.~Heusdens, and J.~Jensen, ``Minimum mean-square error estimation of discrete fourier coefficients with generalized Gamma priors,'' \emph{IEEE Trans.~Audio, Speech, Lang.  Process.}, vol.~15, no.~6, pp.~1741--1752, Aug.~2007.}

\bibitem{Kundu}
{ A.~Kundu, S.~Chatterjee, and T.~V.~Sreenivas, ``GMM based Bayesian approach to speech enhancement in signal/transform domain,'' 
\emph{Proc.~IEEE Int.~Conf.~Acoust.~Speech and Signal Process.}, pp.~4893--4896, Apr.~2008.}

\bibitem{Lotter}
{T.~Lotter and P.~Vary, ``Speech enhancement by maximum a posteriori estimation using super-Gaussian speech model,'' 
\emph{EURASIP J.~Appl.~Signal Process.}, vol.~7, pp.~1110--1126, 2005.}

\bibitem{Ephraim-Malah2}
{Y.~Ephraim and D.~Malah, ``Speech enhancement using a minimum mean-squared error log-spectral amplitude estimator,'' 
\emph{IEEE Trans.~Acoust., Speech, Signal Process.}, vol.~ASSP-33, no.~2, pp.~443--445, Apr.~1985.}

\bibitem{Loizou2}
{P.~C.~Loizou, ``Speech enhancement based on perceptually motivated Bayesian estimators of the magnitude spectrum,'' 
\emph{IEEE Trans.~Speech, Audio Process.}, vol.~13, no.5, pp.~857--869, Sep.~2005.}

\bibitem{Mohammadiha}
{N.~Mohammadiha, P.~Smaragdis, and  A.~Leijon, ``Supervised and unsupervised speech enhancement using nonnegative matrix factorization,'' 
\emph{IEEE Trans.~Audio, Speech, Lang.  Process.}, vol.~21, no.10, pp.~2140--2151, Oct.~2013.}

\bibitem{Stein}
C.~M.~Stein, ``Estimation of the mean of a multivariate normal distribution," 
\emph{Ann.~Stat.,} vol.~9, no.~6, pp.~1135--1151, Nov.~1981.

\bibitem{Blu1}
{T.~Blu and F.~Luisier, ``The SURE-LET approach to image denoising,'' 
\emph{IEEE Trans. Image Process.}, vol.~16, no.~11, pp.~2778--2786, Nov.~2007.}

\bibitem{Blu2}
{F.~Luisier and T.~Blu, ``SURE-LET multichannel image denoising: Interscale orthonormal wavelet thresholding,'' 
\emph{IEEE Trans.~Image Process.}, vol.~17, no.~4, pp.~482--492, Apr.~2008.}

\bibitem{Gray}
R.~M.~Gray, A.~Buzo, A.~H.~Gray, Jr., and Y.~Matsuyama, ``Distortion measures for speech processing," 
\emph{IEEE Trans.~Acoust.~Speech Sig.~Proc.,} vol.~ASSP-28, pp.~367--376, Aug.~1980.
   
\bibitem{pesq}
{ITU-T Rec. P.862, ``Perceptual Evaluation Of Speech Quality (PESQ), An objective method for end-to-end speech quality assessment of narrowband telephone networks and speech codecs,'' International Telecommunication Union, Feb. 2001.} 

\bibitem{stoi}
C.~H.~Taal, R.~C.~Hendriks, R.~Heusdens, and J.~Jensen, ``An algorithm for intelligibility prediction of time-frequency weighted noisy speech,'' \emph{IEEE Trans.~Audio, Speech, Lang.  Process.} vol.~19, pp.~2125--2136, Sep.~2011. 

\bibitem{ITUscale}
ITU-T Rec. P.835, ``Subjective test methodology for evaluating speech communication systems that include noise suppression algorithms,''
\emph{ITU-T Recommendation P.~835}, 2003.

\bibitem{Soon}
{I.~Y.~Soon, S.~N.~Koh, and C.~ K.~Yeo, ``Noisy speech enhancement using discrete cosine transform,'' \emph{Speech Comm.}, vol.~24,  pp.~249--257, Jun. 1998.}

\bibitem{Burkardt}
 J.~Burkardt, ``The truncated  normal distribution,'' \emph{Department of Scientific Computing Website}, Florida State University,  2014.


\bibitem{Fletcher}
{R.~Fletcher, \emph{Practical Methods of Optimization}. $2$nd ed., John Wiley and Sons, New York, 1987.}

\bibitem{Loizou_SE_comparison}
{Y.~Hu and P.~Loizou, ``Subjective comparison and evaluation of speech enhancement algorithms,'' \emph{Speech Comm.}, vol. 49,  pp. 588$-$601, Jul. 2007.}

\bibitem{Sohn}
{J.~Sohn, N.~S.~Kim, and W.~Sung, ``A statistical model-based voice activity detection,'' \emph{IEEE  Signal Process. Lett.}, vol. 6, no. 1, pp. 1$-$3, Jan. 1999.}
\end{thebibliography}
\end{document}